\documentclass[aps,prxquantum,10pt,twocolumn,groupaddress,floatfix,nofootinbib,longbibliography]{revtex4-2}

\usepackage{amsmath,amssymb,amsfonts,amsthm,bm,mathtools}
\usepackage{mathrsfs}
\usepackage{booktabs}
\usepackage[mathscr]{eucal}
\usepackage{bbm}
\usepackage{braket}
\usepackage{graphicx}
\usepackage{xcolor}
\usepackage[mathlines]{lineno}
\usepackage[colorlinks=true,citecolor=blue,urlcolor=blue,linkcolor=red]{hyperref}
\newtheorem{theorem}{Theorem}
\newtheorem{lemma}{Lemma}
\newtheorem{proposition}{Proposition}
\newtheorem{corollary}{Corollary}
\newtheorem{definition}{Definition}

\newtheorem{remark}{Remark}

\newtheorem{feature}{Feature}
\newtheorem{fact}{Fact}

\usepackage{enumitem}

\newcommand{\SNBose}{S. N. Bose National Centre for Basic Sciences, Block JD, Sector III, Salt Lake, Kolkata 700 106, India.}

\begin{document}

\title{\emph{All-vs-Nothing} Operational Manifestation of Preparation Contextuality}

\author{Jayashree Karmakar}\affiliation{\SNBose}
\author{Rafiuddin Gazi}\affiliation{\SNBose}
\author{Biswadeep Chatterjee}\affiliation{\SNBose}
\author{Subhendu B Ghosh}\affiliation{\SNBose}
\author{Anandamay Das Bhowmik}\affiliation{\SNBose}
\author{Ananya Chakraborty}\affiliation{\SNBose}
\author{Manik Banik}\affiliation{\SNBose}

\begin{abstract}
\noindent Preparation contextuality is often manifested operationally through a quantitative advantage over preparation-noncontextual models in information-processing tasks, yet quantum theory typically falls short of perfect success. Here we introduce a parity-oblivious Hidden Matching task that instead yields an \emph{all-vs-nothing} manifestation of preparation contextuality. Alice encodes an $n$-bit string so that her message reveals no information about any input parity other than the two-bit parities. Bob, given a perfect matching on the input positions, must output an edge of the matching together with the parity of its two bits. A quantum protocol using a single $\lceil\log_2 n\rceil$-qubit message satisfies the parity-obliviousness constraint and succeeds with certainty. We prove that no preparation-noncontextual model achieves perfect success for any even $n\ge6$. The result holds for arbitrary ontic state spaces and assumes neither outcome determinism nor measurement noncontextuality. For $n=6$ and $n=8$, the optimal noncontextual success probabilities are $4/5$ and $3/4$, respectively, while asymptotically the optimal noncontextual success probability is $1/2+\Theta(1/\sqrt n)$. Our asymptotic upper bound follows from a Fourier-analytic sum rule derived via hypercontractivity on the Boolean hypercube. The resulting separation is qualitative rather than merely quantitative: quantum theory achieves perfect success, whereas preparation-noncontextual models cannot. We aslo show that at $n=8$ the noncontextual bound remains $3/4$ even when Bob is restricted to just suitable $4$ of the $105$ possible perfect matchings, bringing the effect within experimental reach.
\end{abstract}

\maketitle

\section{Introduction}

\noindent 
The seminal Bell--Kochen--Specker theorem shows that quantum theory cannot be reproduced by a hidden-variable model in which measurement outcomes are predetermined and independent of the context in which a measurement is performed~\cite{Bell1966,Kochen1967,Mermin1993,Brunner2014,Budroni2022}. In its traditional formulation, however, noncontextuality is a restriction on models of quantum theory, formulated for sharp projective measurements with deterministic response functions. It therefore does not, by itself, provide an operationally testable statistical criterion that can be directly confronted with experimental data, as in Bell tests~\cite{Bell1964}.

Spekkens extended the notion of noncontextuality to general operational theories~\cite{Spekkens2005}. The basic requirement is that operationally equivalent procedures---that is, procedures that are indistinguishable by all experiments allowed by the theory---must be represented identically in an underlying ontological model. This requirement applies to preparations and measurements alike and does not assume determinism. Since operational equivalences are defined entirely in terms of observable statistics, they lead to experimentally testable constraints, typically in the form of noncontextuality inequalities.

Preparation noncontextuality has played a particularly prominent role in quantum information. Spekkens \emph{et al.} showed that, in parity-oblivious multiplexing (POM), a two-party communication task, quantum theory outperforms every preparation-noncontextual theory~\cite{Spekkens2009}. Related preparation-contextuality advantages have subsequently been identified in minimum-error state discrimination~\cite{Schmid2018}. These information-processing manifestations of preparation contextuality are typically \emph{quantitative}: quantum theory achieves a success probability strictly above the preparation-noncontextual bound, but below unity. In the $2\mapsto1$ POM task, for example, the noncontextual bound is $3/4$, whereas quantum theory achieves $\cos^2(\pi/8)\simeq0.854$~\cite{Spekkens2009}. The distinction therefore appears as a finite statistical gap, reminiscent of inequality-based Bell tests such as the Clauser--Horne--Shimony--Holt (CHSH) test~\cite{Clauser1969}. This contrasts with all-vs-nothing arguments of the Greenberger--Horne--Zeilinger (GHZ) type~\cite{Greenberger1989,Greenberger1990,Mermin1990}, in which quantum theory predicts with certainty an event that is impossible in the corresponding noncontextual model.

In this work, we introduce a task that realizes an analogous all-vs-nothing separation for preparation contextuality. We call it the parity-oblivious Hidden Matching ($\mathrm{PoHM}_n$) problem which is a particularly constrained variant of Hidden Matching ($\mathrm{HM}_n$) problem. The $\mathrm{HM}_n$ task was introduced by Bar-Yossef, Jayram, and Kerenidis to establish an exponential quantum advantage over classical communication in one-way communication complexity~\cite{BarYossef2008}. Alice holds an $n$-bit string $x\in\{0,1\}^n$, while Bob is given a perfect matching $\mathtt M$ on $[n]$. Bob must output an edge $\{i,j\}\in\mathtt M$ together with the parity $x_i\oplus x_j$. In addition, parity-obliviousness constraints Alice's message to be oblivious to every parity of $x$ including the individual bit information other than the two-bit parities ${x_i\oplus x_j}$. The exemption applies to all $\binom{n}{2}$ two-bit parities, not merely those associated with $\mathtt M$, since the matching is known only to Bob and hence cannot influence Alice's encoding. Thus, the constraint excludes precisely the information that are irrelevant to the task.

Quantum theory meets both requirements simultaneously. The $\lceil\log_2 n\rceil$-qubit encoding $\ket{\psi_x}\propto\sum_i(-1)^{x_i}\ket{i}$ has a density operator that depends on $x$ only through the two-bit parities: the amplitudes are linear in the signs $(-1)^{x_i}$, so the density-matrix entries are quadratic in these signs and contain no higher-order characters. A measurement in the basis adapted to $\mathtt M$ then returns an edge of $\mathtt M$ together with its parity with certainty. No classical protocol can achieve the same. The obstruction is sharp and simple: if a message determines the pairwise parities among four coordinates, then it also determines their four-bit parity, violating the parity-obliviousness constraint. Hence, for each message, the coordinates whose pairwise parities are fixed form classes of size at most three, while a simple combinatorial argument shows that, for $n\ge6$, there exists a perfect matching with no edge contained within any such class.

Preparation noncontextuality converts this classical impossibility into a statement about ontological models. The obliviousness condition is formulated entirely in terms of operational statistics and renders certain mixed preparations operationally equivalent. Preparation noncontextuality then requires their ontic representations to coincide, implying that the ontic state carries no information about the forbidden parities. The ontic state can therefore be treated as a classical message obeying the same constraint, and the classical impossibility applies to it—regardless of the size or structure of the ontic state space, and irrespective of how much information it may carry about $x$ as a whole.

Because a possibilistic argument is fragile under experimental imperfections, we also determine how well a preparation-noncontextual theory can perform. We show that the task reduces exactly to a finite-dimensional convex optimization problem: maximizing a matching functional over nonnegative Fourier polynomials of degree two. The resulting optimal noncontextual success probability is $\tfrac12+\Theta(n^{-1/2})$, with the upper bound following from a sum rule for pairwise correlations derived from hypercontractivity on the Boolean hypercube~\cite{Bonami1970,Beckner1975,Montanaro2012,deWolf2008}. The $n^{-1/2}$ scaling coincides with that appearing in the communication-complexity bound for Hidden Matching~\cite{BarYossef2008}, although the two results constrain different resources.

We further determine the optimal noncontextual success probability, obtaining the exact values $4/5$ for $n=6$ and $3/4$ for $n=8$. A direct implementation of the task requires one measurement for each perfect matching, of which there are $(n-1)\times(n-3)\cdots3\times1\equiv(n-1)!!$. Remarkably, for $n=6$, only $3$ measurements suffice, with the corresponding optimal noncontextual success probability increasing from $4/5=0.80$ to $5/6\approx0.83$. For $n=8$, only $4$ measurements suffice, while the optimal noncontextual success probability remains $3/4$. Quantum always achieves the perfect success. These finite-size instances therefore retain a clear quantitative separation while requiring only a small number of measurement settings, making them particularly amenable to experimental tests.

The paper is organized as follows. Section~\ref{sec:ontological} reviews the ontological-model framework and introduces the notation used throughout. Section~\ref{sec:pohm} defines the \(\mathrm{PoHM}_n\) task, establishes the classical and preparation-noncontextual impossibility results, and presents the perfect quantum protocol. Section~\ref{sec:inequality} derives the quantitative preparation-noncontextuality inequality and determines its asymptotic scaling. Section~\ref{sec:discussion} discusses the implications, connection to prior works, and experimental prospects of the results.

\section{The ontological models framework}\label{sec:ontological}
\noindent 
An ontological model provides a framework for representing a physical theory in terms of underlying ontic states whose statistical predictions reproduce its operationally accessible phenomena. We briefly recall the essential ingredients of this framework; for a comprehensive account, we refer the reader to~\cite{Spekkens2005} (see also~\cite{Harrigan2007,Harrigan2010,Leifer2014}).

\begin{definition}[Operational theory]\label{def:optheory}
An operational theory specifies a set of preparation procedures $\mathcal{P}$ and a set of measurement procedures $\mathcal{M}$, each described by a collection of laboratory instructions, together with a rule that assigns to every pair $P\in\mathcal{P}$ and $M\in\mathcal{M}$ a probability distribution $p(k\mid P,M)$ over the possible outcomes $k$ of $M$. In quantum theory,
\begin{align}
p(k\mid P,M)=\operatorname{Tr}(\rho_{_P} E_M^k),
\end{align}
where $\rho_{_P}$ is the density operator associated with the preparation $P$, and $\{E_M^k\}_k$ is the positive operator valued measure (POVM) associated with the measurement $M$, satisfying $E_M^k\ge 0,~\sum_k E_M^k=\mathbf{I}$, where $\mathbf{I}$ denotes the identity operator acting on the Hilbert space associated with the system.
\end{definition}
\noindent An operational theory may also include a set of transformation procedures $\mathcal{T}$ describing physical processes that act on the system between preparation and measurement. A transformation can modify the state produced by a preparation before it is subsequently measured, thereby entering the operational statistics through the composed procedure involving the preparation, transformation, and measurement. In this work, we restrict attention to the prepare-and-measure scenario: transformations are neither included as primitive operational procedures nor used to formulate operational equivalences or noncontextuality conditions.

\begin{definition}[Ontological model]\label{def:ontmodel}
An ontological model of an operational theory associates with it a measurable space $\Lambda$ of ontic states, together with
\begin{enumerate}[itemsep=-.05cm, topsep=2pt, leftmargin=.5cm]
\item for each preparation $P\in\mathcal{P}$, a probability measure $\mu(\lambda|P)$ on $\Lambda$, and
\item for each measurement $M\in\mathcal{M}$ and outcome $k$, a measurable response function $\xi(k|\lambda,M)\in[0,1]$ satisfying $\sum_k \xi(k|\lambda,M)=1$ for every $\lambda\in\Lambda$.
\end{enumerate}
The model reproduces the operational statistics according to
\begin{align}\label{eq:ontrep}
p(k\mid P,M)=\int_\Lambda \xi(k|\lambda,M)\,\mu(\lambda|P)\,d(\lambda).
\end{align}
\end{definition}
 
\noindent An ontological model is called \emph{outcome deterministic} if
\begin{align}
\xi(k|\lambda,M)\in{0,1}\quad\forall\,k,\,M,\,\lambda\,.
\end{align}
Outcome determinism, together with measurement noncontextuality, is a central assumption in the Bell--Kochen--Specker (BKS) theorem, which rules out noncontextual assignments of pre-existing definite values to quantum observables~\cite{Kochen1967} (see also~\cite{Mermin1993,Brunner2014,Kunjwal2015,Budroni2022}). Spekkens generalized the notion of noncontextuality by removing the need for deterministic response functions and formulating noncontextuality directly in terms of operational equivalences between general preparation and measurement procedures~\cite{Spekkens2005}. This framework therefore extends the KS notion of contextuality from sharp projective measurements to general operational scenarios, while allowing intrinsically probabilistic response functions.

\begin{definition}[Operational equivalence]\label{def:opequiv}
Two preparations $P$ and $P'$ are said to be operationally equivalent, denoted by $P\simeq P'$, if they are statistically indistinguishable by every measurement allowed in the operational theory:
\begin{align}
p(k\mid P,M)=p(k\mid P',M)
\quad\forall\,k,\,M.
\end{align}
Equivalently, no measurement in the operational theory can distinguish $P$ from $P'$. 
\end{definition}

\noindent 
As a simple example in quantum theory, consider the following two distinct preparation procedures for a qubit:
\begin{align*}
\left.
\begin{aligned}
&P:\quad \text{prepare } |0\rangle \text{ or } |1\rangle
\text{ with equal probability},\\
&P':\quad \text{prepare } |+\rangle \text{ or } |-\rangle
\text{ with equal probability},
\end{aligned}\right\}
\end{align*}
where $|\pm\rangle:=(|0\rangle\pm|1\rangle)/\sqrt{2}$. The corresponding density operators are identical: $\rho_{_P}=1/2(|0\rangle\langle 0|+|1\rangle\langle 1|)=\mathbf{I}/2=1/2(|+\rangle\langle +|+\frac{1}{2}|-\rangle\langle -|)=\rho_{_{P'}}$. Consequently, for any measurement $M$ with POVM elements $\{E_M^k\}_k$,
\begin{align*}
\operatorname{Tr}(\rho_P E_M^k)=\operatorname{Tr}(\rho_{P'}E_M^k)\quad\forall,k.
\end{align*}
Thus, $P\simeq P'$; although the two preparations are implemented by physically distinct procedures, no measurement in the operational theory can distinguish them.
 
\begin{definition}[Preparation noncontextuality]\label{def:pnc}
An ontological model is said to be preparation noncontextual if operationally equivalent preparations are assigned identical ontological representations; that is,
\begin{align}\label{eq:pnc}
P\simeq P'\quad\Longrightarrow\quad\mu(\lambda|P)=\mu(\lambda|{P'})\quad\forall\; \lambda\,.
\end{align}
\end{definition}

\noindent 
The content of \eqref{eq:pnc} is that the context of a preparation---namely, the particular laboratory procedure used to implement it---leaves no imprint on its ontological representation beyond the operational statistics it produces.\footnote{This requirement may be viewed as an operational formulation of the Leibniz principle of the \emph{identity of empirical indiscernibles}~\cite{LeibnizClarke1998}: distinctions that are inaccessible at the operational level should not be introduced at the ontological level~\cite{Spekkens2019}.} The analogous condition for measurements is \emph{measurement noncontextuality}: if two measurements $M$ and $M'$ are operationally equivalent, meaning that $p(k\mid P,M)=p(k\mid P,M')~\forall\,P,k$, then they must be assigned identical response functions, $\xi(k\mid\lambda,M)=\xi(k\mid\lambda,M')~\forall\,\lambda,k$. In this work, however, we impose and analyze preparation noncontextuality only; no assumption of measurement noncontextuality is required.

One further ingredient is needed for our purpose. Unlike preparation noncontextuality, it is not an additional noncontextuality assumption, but a consequence of the standard convex structure of an ontological model.

\begin{feature}[Mixtures]\label{fea:mixture}
Let $\{P_\alpha\}$ be a collection of preparation procedures, and let $P$ denote the procedure that first samples $\alpha$ with probability $w_\alpha$, where $w_\alpha\ge0$ and $\sum_\alpha w_\alpha=1$, and then implements $P_\alpha$. Equivalently, $
P=\sum_\alpha w_\alpha P_\alpha$. The corresponding ontic distribution is given by the same convex combination,
\begin{equation}
\mu(\lambda\mid P)=\sum_\alpha w_\alpha,\mu(\lambda\mid P_\alpha).
\end{equation}
Thus, classical randomization of preparation procedures is represented by the corresponding convex mixture of their ontic distributions.
\end{feature}

\noindent 
Spekkens showed that even the maximally mixed state of a qubit is preparation contextual: distinct preparation procedures realizing the same quantum state must, under preparation noncontextuality, be assigned the same ontological representation, but this requirement is incompatible with reproducing the quantum statistics~\cite{Spekkens2005}. In particular, the maximally mixed state has a continuum of distinct convex decompositions into pure states, providing a family of operationally equivalent preparations whose ontological representations cannot all be identified consistently. 

This foundational observation established preparation contextuality as an operational feature of quantum theory and motivated its study in information-processing tasks. A canonical example is parity-oblivious multiplexing (POM), where Alice receives an $n$-bit string $x\in\{0,1\}^n$, Bob receives an index $y\in[n]$, and Alice sends a single message from which Bob must guess $x_y$. The defining constraint is that the message reveal no information about any parity involving two or more input bits; equivalently, for every $S\subseteq[n]$ with $|S|\ge2$, the message distributions conditioned on $x_S=0$ and $x_S=1$ must coincide~\cite{Spekkens2009}. Spekkens \emph{et al.} showed that preparation-noncontextual models obey a bound on the success probability of this task that is surpassed by quantum theory~\cite{Spekkens2009}. This work subsequently inspired a broader family of communication tasks in which quantum advantages can be attributed to preparation contextuality~\cite{Banik2014,Banik2015,Chailloux2016,Hameedi2017,Ghorai2018,Saha2019,Ambainis2019,Pan2019,Fonseca2025,Roy2026}. Preparation contextuality has also emerged as an operational resource in quantum state discrimination. Schmid and Spekkens derived preparation-noncontextual bounds on the success probability of minimum-error discrimination and showed that quantum strategies can surpass these bounds~\cite{Schmid2018}. Subsequent works have extended this framework to a variety of state-discrimination settings, including different state ensembles, prior-probability distributions, and discrimination criteria~\cite{Shin2021,Mukherjee2022,Flatt2022,Carceller2024,Flatt2025}.

\section{Parity-oblivious Hidden Matching}\label{sec:pohm}
\noindent
Throughout, $n$ is an even positive integer, and $[n]:={1,\dots,n}$. For $x\in\{0,1\}^n$ and $S\subseteq[n]$, we write
\begin{subequations}
\begin{align}
&x_S:=\bigoplus_{i\in S}x_i,\quad\text{with}\quad x_\emptyset:=0;\\
&\chi_S(x):=(-1)^{x_S}=(-1)^{\bigoplus_{k\in S}x_k}.
\end{align}
\end{subequations}
The function $\chi_S:\{0,1\}^n\to\{\pm1\}$ is called the Boolean character associated with $S$~\cite{deWolf2008}. A perfect matching of $[n]$ is a set $\mathtt{M}\subseteq\binom{[n]}{2}$ consisting of $n/2$ pairwise disjoint pairs whose union is $[n]$. We denote by $\mathscr{M}_n$ the collection of all perfect matchings of $[n]$. A simple counting argument gives $|\mathscr{M}_n|=(n-1)\times(n-3)\cdots3\times1=(n-1)!!$. We now recall the well-known Hidden Matching task introduced in~\cite{BarYossef2008}.

\begin{figure*}[t!]
\centering
\includegraphics[width=1\linewidth]{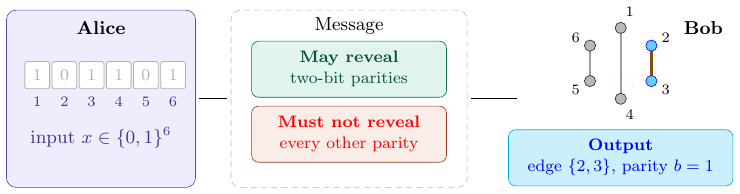}
\caption{(Color online) The task $\mathrm{PoHM}_n$, shown for $n=6$. Alice holds an $n$-bit string $x$ and sends Bob a single message, subject to the constraint that it reveal nothing about any parity $x_S$ with $|S|\neq0,2$; the two-bit parities $x_i\oplus x_j$ are exempt. Bob holds one of the $(n-1)!!$ perfect matchings of $[n]$, here $\mathtt{M}\equiv\{\{1,4\},\{2,3\},\{5,6\}\}\in\mathscr{M}_6$, and must return one of its edges together with the parity of the two bits it joins. Which edge he answers about is his choice, and Alice does not learn the matching.}\vspace{0cm}
\label{fig1}
\end{figure*}

\begin{definition}[Hidden Matching, $\mathrm{HM}_n$]\label{def:hm}
Alice receives $x\in\{0,1\}^n$, while Bob receives a perfect matching $\mathtt{M}\in\mathscr{M}_n$. Alice sends Bob a single classical message. Based on the message, shared randomness, and his input $\mathtt{M}$, Bob outputs a pair $\{i,j\}\in\mathtt{M}$ together with a bit $b\in\{0,1\}$. The output is correct if $b=x_i\oplus x_j$.
\end{definition}

\noindent The $\mathrm{HM}_n$ task is a one-way communication complexity problem~\cite{Yao1979,Kushilevitz1996,Buhrman2010}: communication is allowed from Alice to Bob, but not from Bob to Alice. The parties may additionally use local randomness and shared randomness. A classical protocol for $\mathrm{HM}_n$ is formally defined as follows.

\begin{definition}[Classical protocol]\label{def:protocol}
A classical protocol for $\mathrm{HM}_n$ consists of the following:
\begin{enumerate}[itemsep=-.05cm, topsep=2pt, leftmargin=.5cm]
\item a random variable $R$ representing the public randomness, independent of $X$ and taking values in a countable set;
\item a conditional distribution $\Pr(\mathtt{Msg}=\zeta\mid X=x,R=r)$, according to which Alice generates her message, where the message alphabet is arbitrary and, in particular, may be unbounded;
\item for each $(\zeta,r)$ and each $\mathtt{M}\in\mathscr{M}_n$, a probability distribution on pairs $(\{i,j\},b)$ with $\{i,j\}\in\mathtt{M}$, according to which Bob generates his output.
\end{enumerate}
Alice's input is a uniformly distributed random variable $X\sim\mathrm{Unif}\{0,1\}^n$, and is independent of Bob's input matching. We write $C:=(\mathtt{Msg},R)$ for the total information available to Bob about Alice's input through the protocol, and we use $X_S$ to denote the random variable $X_S:=\bigoplus_{i\in S}X_i$.
\end{definition}

\noindent 
The essential point in Definition~\ref{def:protocol} is that Bob's output distribution depends only on $(C,\mathtt{M})$ and not directly on Alice's input $x$. Bob's private randomness is absorbed into item~(3), so allowing such randomness entails no loss of generality. Likewise, there is no loss in taking $R$ to be public rather than shared but hidden, since $R$ is included in $C$ and is therefore available to Bob.

\begin{definition}[Zero error]\label{def:zeroerror}
A classical protocol solves $\mathrm{HM}_n$ with zero error if, for every $x\in\{0,1\}^n$, every $c$ satisfying $\Pr(C=c\mid X=x)>0$, every $\mathtt{M}\in\mathscr{M}_n$, and every output $(\{i,j\},b)$ that Bob produces with positive probability conditional on $(C,\mathtt{M})=(c,\mathtt{M})$, one has $b=x_i\oplus x_j$.
\end{definition}

\noindent It is worth noting that $\mathrm{HM}_n$ is a relational problem. Particularly, Bob may output any edge $\{i,j\}\in\mathtt{M}$ together with the corresponding parity $x_i\oplus x_j$; hence, for a fixed input $(x,\mathtt{M})$, there are generally several valid outputs. Bar-Yossef, Jayram, and Kerenidis showed that this relational problem admits a zero-error quantum one-way protocol using $O(\log n)$ qubits, whereas every classical randomized one-way protocol with bounded error requires $\Omega(\sqrt{n})$ bits, thereby establishing an exponential separation between quantum and classical one-way communication~\cite{BarYossef2008}. Since its introduction, Hidden Matching and its variants, including its Boolean decision variant, have been studied in several communication and nonlocality settings~\cite{Kerenidis2006,Gavinsky2007,Verbin2011,Buhrman2012,Kumar2019,Doriguello2020,Kapralov2022,Gilboa2026,Chakraborty2026}

Here we introduce another novel variant of the Hidden Matching task in which the communication is subject to a substantially stronger information constraint. Specifically, Alice's message is required to reveal no information about any parity of the input other than the two-bit parities.

\begin{definition}[Parity obliviousness, $\mathrm{PoHM}_n$]\label{def:po}
Let
\begin{align}
\mathcal{F}_n:=\bigl\{\,S\subseteq[n] \;:\; |S|\neq 0,2 \,\bigr\}.
\end{align}
A protocol is parity-oblivious if no measurement available to Bob yields any information about a non-exempt parity. Formally, writing $P_{S,b}$ for the procedure that samples $x$ uniformly from $\{x:x_S=b\}$ and implements the preparation procedure $P_x$ corresponding to input $x$
\begin{align}\label{eq:po}
p(k\mid P_{S,0},M)=p(k\mid P_{S,1},M)\quad \forall M,k,\ \forall S\in\mathcal{F}_n .
\end{align}
For a classical protocol, in which Bob's total information is the message $C$ of Definition~\ref{def:protocol}, condition~\eqref{eq:po} is equivalent to
\begin{align}\label{eq:poclassical}
I(X_S:C)=0 \qquad\text{for every } S\in\mathcal{F}_n .
\end{align}
Thus, the information available to Bob through $C$ is statistically independent of every nontrivial parity $X_S$ except those involving exactly two bits. The two-bit parities $X_{\{i,j\}}=X_i\oplus X_j$ are therefore exempted from the parity-obliviousness condition. The task Parity-oblivious Hidden Matching, denoted by $\mathrm{PoHM}_n$, is the $\mathrm{HM}_n$ task subject to the parity-obliviousness condition~\eqref{eq:po} [see Fig.~\ref{fig1}].
\end{definition}

\noindent 
The constraint in $\mathrm{PoHM}_n$ has the same rationale as in POM~\cite{Spekkens2009}, applied to a different query. There Bob is asked for a single bit $x_y$, so the single-bit parities $|S|=1$ are exempt and every parity of two or more bits is forbidden. Here Bob is asked for the parity of an edge, so it is the two-bit parities that are exempt and everything else that is forbidden. In particular, $\mathrm{PoHM}_n$ forbids information about individual bits---exactly what POM permits---since a message revealing $x_i$ and $x_j$ separately would yield $x_i\oplus x_j$ without being asked for it. In both cases the rule is the same: the message may carry what the task requires of it, and nothing more.

Since $X$ in Definition~\ref{def:po} is uniformly distributed, condition~\eqref{eq:po} is equivalent to the vanishing of the corresponding Fourier coefficients of the conditional distribution. More precisely, for every $c$ with $\Pr(C=c)>0$, $\widehat{p(c\mid\cdot)}(S)=0$ for every $S\in\mathcal{F}_n$, where for $f:\{0,1\}^n\to\mathbb{R}$, $\widehat{f}(S):=\tfrac{1}{2^n}\sum f(x)\chi_S(x)$. Equivalently, for each such $c$, the function
$x\mapsto\Pr(C=c\mid X=x)$ has a Fourier expansion containing only terms of degree at most $2$, with all linear terms absent. Thus it can be written in the form
\begin{align*}
\Pr(C=c\mid X=x)=a_\emptyset(c)+\sum_{\substack{S\subseteq[n]:|S|=2}}
a_S(c)\,\chi_S(x),
\end{align*}
with no terms corresponding to $|S|=1$ or $|S|>2$.

\subsection*{Three Useful Lemmas}

\noindent 
Fix a protocol and a value $c$ with $\Pr(C=c)>0$, and define
\begin{align}
\mathcal{A}_c:=\bigl\{x\in\{0,1\}^n:\Pr(C=c\mid X=x)>0\bigr\}.
\end{align}
Since $X$ is uniformly distributed and $\Pr(C=c)>0$, Bayes' rule gives
\begin{align}\label{eq:support}
\left.\begin{aligned}
\Pr(X=x\mid C=c)>0\quad\Longleftrightarrow\quad x\in\mathcal{A}_c,\\
\text{consequently, } \Pr\bigl(X\in\mathcal{A}_c\mid C=c\bigr)=1
\end{aligned}\right\}.
\end{align}

\noindent 
Define a relation $\sim_c$ on $[n]$ by
\begin{align}
i\sim_c j
\quad\Longleftrightarrow\quad
x_i\oplus x_j
\text{ is constant over }x\in\mathcal{A}_c.
\end{align}
The relation is reflexive since $x_i\oplus x_i=0$ for every $x$, and symmetric since
$x_i\oplus x_j=x_j\oplus x_i$. To see transitivity, suppose $i\sim_c j$ and $j\sim_c k$. Then both $x_i\oplus x_j$ and $x_j\oplus x_k$ are constant over $\mathcal{A}_c$, and hence $x_i\oplus x_k=(x_i\oplus x_j)\oplus(x_j\oplus x_k)$ is also constant over $\mathcal{A}_c$. Thus $\sim_c$ is an equivalence relation on $[n]$, and its equivalence classes will be called the $c$-classes.

\begin{lemma}[Zero error forces a fixed edge]\label{lem:fixededge}
Assume that the protocol solves $\mathrm{HM}_n$ with zero error. Then, for every $c$ with $\Pr(C=c)>0$ and every  $\mathtt{M}\in\mathscr{M}_n$, there exists an edge $\{i,j\}\in\mathtt{M}$ such that $i\sim_c j$.
\end{lemma}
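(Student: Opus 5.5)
The plan is to argue directly from Definition~\ref{def:zeroerror}, using only the fact that Bob's output distribution depends on $(C,\mathtt{M})$ and not on $x$. Fix $c$ with $\Pr(C=c)>0$ and a matching $\mathtt{M}\in\mathscr{M}_n$.

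\textbf{Step 1: pick an output Bob actually produces.} By item~(3) of Definition~\ref{def:protocol}, Bob's output on input $(c,\mathtt{M})$ is drawn from a probability distribution on pairs $(\{i,j\},b)$ with $\{i,j\}\in\mathtt{M}$. A probability distribution must put positive mass on at least one outcome, so there is an output $(\{i,j\},b)$ with $\{i,j\}\in\mathtt{M}$ that occurs with positive probability given $(C,\mathtt{M})=(c,\mathtt{M})$. I fix this output for the rest of the argument.

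\textbf{Step 2: translate zero error into a constraint on $\mathcal{A}_c$.} Take any $x\in\mathcal{A}_c$. By the definition of $\mathcal{A}_c$ we have $\Pr(C=c\mid X=x)>0$; equivalently, by \eqref{eq:support}, $x$ is in the support of $X$ given $C=c$. Definition~\ref{def:zeroerror} then applies to the triple $(x,c,\mathtt{M})$ and the chosen output, and gives $b=x_i\oplus x_j$.

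\textbf{Step 3: conclude.} The key point is that $b$ was fixed in Step~1 using only $(c,\mathtt{M})$, so it does not depend on $x$. Hence $x_i\oplus x_j=b$ for every $x\in\mathcal{A}_c$. This means $x_i\oplus x_j$ is constant over $\mathcal{A}_c$, i.e.\ $i\sim_c j$, and $\{i,j\}\in\mathtt{M}$ is the required edge.

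The only delicate point, mild as it is, is the order of quantifiers: the edge and bit must be chosen before ranging over $x$. This is justified because Bob's output depends only on $(C,\mathtt{M})$, as emphasized after Definition~\ref{def:protocol}. Shared randomness needs no separate treatment, since $R$ is already included in $C$. Note also that $\mathcal{A}_c$ is nonempty because $\Pr(C=c)>0$, although the argument does not even require this.
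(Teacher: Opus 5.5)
Your proposal is correct and follows essentially the same route as the paper's proof: you fix an output $(\{i,j\},b)$ in the support of Bob's distribution given $(c,\mathtt{M})$, apply the zero-error condition to every $x\in\mathcal{A}_c$, and conclude that $x_i\oplus x_j=b$ is constant on $\mathcal{A}_c$. Like the paper, you also stress that the output is chosen before ranging over $x$, because it depends only on $(c,\mathtt{M})$.
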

\begin{proof}
Fix $c$ with $\Pr(C=c)>0$ and $\mathtt{M}\in\mathscr{M}_n$. Bob's output distribution conditional on $(C,\mathtt{M})=(c,\mathtt{M})$ is a probability distribution, and hence has nonempty support. Choose $(\{i,j\},b)$ in this support, where necessarily $\{i,j\}\in\mathtt{M}$. Crucially, this choice depends only on $(c,\mathtt{M})$ and not on $x$.

Now let $x\in\mathcal{A}_c$ be arbitrary. By the definition of $\mathcal{A}_c$, $\Pr(C=c\mid X=x)>0$. Since $(\{i,j\},b)$ occurs with positive probability conditional on $(C,\mathtt{M})=(c,\mathtt{M})$, the zero-error condition in Definition~\ref{def:zeroerror} implies $b=x_i\oplus x_j$. Because $b$ is fixed independently of $x$, the quantity $x_i\oplus x_j$ has the same value for every $x\in\mathcal{A}_c$. Hence $i\sim_c j$.
\end{proof}

\begin{lemma}[Obliviousness bounds the classes]\label{lem:classsize}
Assume the protocol satisfies $I(X_S:C)=0$ for every $S\subseteq[n]$ with $|S|=4$. Then, for every $c$ with $\Pr(C=c)>0$,
\begin{enumerate}[itemsep=-.04cm, topsep=0pt, leftmargin=.7cm]
\item[\emph{(i)}] every $c$-class has size at most $3$, and
\item[\emph{(ii)}] at most one $c$-class has size $\ge2$.
\end{enumerate}
Equivalently, the pairs whose parity is fixed by $c$ are exactly the
edges of a single clique on at most three vertices.
\end{lemma}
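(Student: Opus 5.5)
The plan is to show that for a fixed $c$ with $\Pr(C=c)>0$, any four distinct indices $i,j,k,l$ that can be grouped into two pairs, each of which has its parity fixed by $c$, produce a contradiction with the four-bit obliviousness. Both (i) and (ii) then follow by choosing the pairs appropriately.

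First I will isolate the key observation. Let $S=\{i,j,k,l\}$ with $|S|=4$, and suppose $i\sim_c j$ and $k\sim_c l$. On $\mathcal{A}_c$ we then have
\[
x_S=(x_i\oplus x_j)\oplus(x_k\oplus x_l),
\]
and this value is constant on $\mathcal{A}_c$; call it $b_0$. By \eqref{eq:support}, $\Pr(X_S=b_0\mid C=c)=1$.

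The hypothesis $I(X_S:C)=0$ says that $X_S$ is independent of $C$. So $\Pr(X_S=b_0\mid C=c)=\Pr(X_S=b_0)$. Since $X$ is uniform and $S\neq\emptyset$, $X_S$ is a uniform bit, so this probability equals $1/2$. This contradicts the value $1$ obtained above. Hence no such configuration exists.

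Next I will derive the two claims from this observation.

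For (i), suppose some $c$-class contains four distinct elements $i,j,k,l$. Since all pairs inside a class are $\sim_c$-related, both $i\sim_c j$ and $k\sim_c l$ hold, which the observation rules out.

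For (ii), suppose there are two distinct classes of size $\ge2$. Pick $i\sim_c j$ in the first class and $k\sim_c l$ in the second. Distinct classes are disjoint, so these four indices are distinct, and the observation again gives a contradiction.

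For the reformulation, the pairs $\{i,j\}$ with fixed parity are exactly the pairs inside a common class. By (i) and (ii), all such pairs lie in a single class of size at most $3$, so they form a clique on at most three vertices (possibly empty).

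The only delicate point is the independence step. One has to use that $I(X_S:C)=0$ implies the conditional law of $X_S$ given $C=c$ equals its unconditional law for every $c$ with positive probability. This is standard, and it is exactly where $\Pr(C=c)>0$ enters. Everything else is combinatorial.
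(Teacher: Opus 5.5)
Your proof is correct and follows essentially the same route as the paper: you show that two disjoint pairs with fixed parity force the four-bit parity $X_S$ to be constant given $C=c$, which contradicts $I(X_S:C)=0$ because $X_S$ is an unbiased bit, and you then obtain (i) and (ii) by taking the two pairs from one class or from two distinct, hence disjoint, classes. Your treatment of the single-clique reformulation also matches the paper's.
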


\begin{proof}
Both parts follow from the same observation: if $c$ fixes two disjoint pairwise parities, it fixes their XOR, which is a four-bit parity. Let $i,j,k,\ell\in[n]$ be distinct and suppose $x_i\oplus x_j$ and $x_k\oplus x_\ell$ are each constant over $\mathcal{A}_c$, say equal to $\alpha$ and $\beta$. XORing them gives
\begin{align}\label{eq:fourfixed}
x_i\oplus x_j\oplus x_k\oplus x_\ell=\alpha\oplus\beta=:y_c\quad\text{for all } x\in\mathcal{A}_c .
\end{align}
Putting $S:=\{i,j,k,\ell\}$, so that $|S|=4$, \eqref{eq:support} and
\eqref{eq:fourfixed} give
\begin{align}\label{eq:deterministic}
\Pr\bigl(X_S=y_c\mid C=c\bigr)=1 .
\end{align}
On the other hand, $X$ is uniformly distributed on $\{0,1\}^n$ and $S\neq\emptyset$, so $X_S$ is an unbiased bit; and $I(X_S:C)=0$ makes $X_S$ and $C$ independent. Hence
\begin{align}\label{eq:unbiased}
\Pr\bigl(X_S=0\mid C=c\bigr)=\Pr\bigl(X_S=1\mid C=c\bigr)=\tfrac12 ,
\end{align}
contradicting \eqref{eq:deterministic}. Thus no four distinct indices
can have two disjoint pairwise parities fixed.

\emph{(i)} Suppose some $c$-class $K$ satisfies $|K|\ge4$ and choose distinct $i,j,k,\ell\in K$. Since all four lie in one class, both $x_i\oplus x_j$ and $x_k\oplus x_\ell$ are constant over $\mathcal{A}_c$, which is excluded above. Hence $|K|\le3$.

\emph{(ii)} Suppose two distinct $c$-classes $K_1\neq K_2$ both had size at least $2$, and choose distinct $i,j\in K_1$ and distinct $k,\ell\in K_2$. The four indices are distinct, because $K_1$ and $K_2$ are disjoint, and again both $x_i\oplus x_j$ and $x_k\oplus x_\ell$ are constant over $\mathcal{A}_c$, which is excluded. Hence at most one class has size $\ge2$.

Finally, a pair $\{i,j\}$ has its parity fixed by $c$ precisely when $i\sim_c j$, so by (i) and (ii) the fixed pairs are the edges of the unique class of size $\ge2$, a clique on at most three vertices (and there are none if every class is a singleton).
\end{proof}

\begin{figure}[t!]
\centering
\includegraphics[width=1\linewidth]{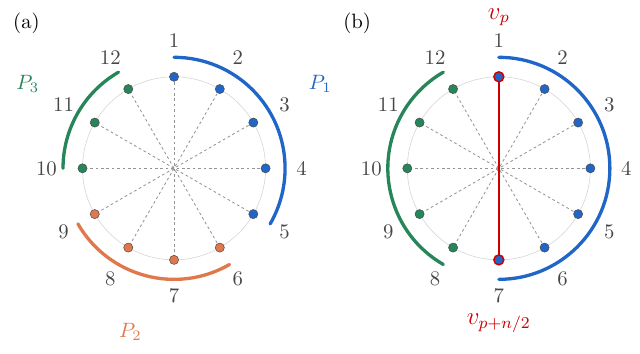}
\caption{(Color online) The antipodal matching, illustrated for $n=12$. (a) The elements of $[n]$ are placed around a circle so that each part of $\mathscr{P}\equiv P_1\sqcup P_2\sqcup P_3$ occupies a consecutive arc (here $|P_1|=5$, $|P_2|=4$, $|P_3|=3$, all at most $n/2$). The dashed diameters are the edges $\{v_p,v_{p+n/2}\}$ of $\mathtt{M}$; none has both endpoints in one part. (b) If a part spans more than $n/2$ positions, it contains an antipodal pair (red), since any arc through both $p$ and $p+n/2$ has at least $n/2+1$ positions.}\vspace{0cm}
\label{fig2}
\end{figure}

\begin{lemma}[Small parts admit an avoiding matching]\label{lem:matching}
Let $n$ be even and let $\mathscr{P}$ be a partition of $[n]$ whose parts all have size at most $n/2$. Then there exists $\mathtt{M}\in\mathscr{M}_n$ such that no edge of $\mathtt{M}$ has both endpoints in the same part of $\mathscr{P}$.
\end{lemma}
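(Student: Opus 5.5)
The plan is to use the antipodal construction suggested by Fig.~\ref{fig2}. Write $\mathscr{P}=P_1\sqcup\cdots\sqcup P_m$ and list the elements of $[n]$ in a cyclic order $v_1,v_2,\dots,v_n$ in which each part occupies a block of consecutive positions: first all elements of $P_1$, then all of $P_2$, and so on. Define
\begin{align*}
\mathtt{M}:=\bigl\{\{v_p,v_{p+n/2}\}: 1\le p\le n/2\bigr\}.
\end{align*}
Since $n$ is even, the map $p\mapsto p+n/2$ pairs the first half of positions bijectively with the second half. The pairs are therefore disjoint and cover $[n]$, so $\mathtt{M}\in\mathscr{M}_n$.

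It remains to show that no edge lies inside a part. Suppose $v_p$ and $v_{p+n/2}$ both belong to some part $P_k$, with $1\le p\le n/2$. In the linear listing, $P_k$ occupies the consecutive positions $a,a+1,\dots,b$ with $b-a+1=|P_k|$. Because the listing is linear and not wrapped, it suffices to use the linear order. From $a\le p$ and $p+n/2\le b$ we get
\begin{align*}
|P_k|=b-a+1\ge n/2+1,
\end{align*}
which contradicts the hypothesis $|P_k|\le n/2$. Hence every edge of $\mathtt{M}$ joins two different parts, which proves the lemma.

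Using a genuine linear ordering, rather than an arbitrary arc on the circle, avoids any wrap-around case analysis, and I expect no real obstacle. The only points to check are that the construction really is a perfect matching and that a part containing both endpoints of an edge must span at least $n/2+1$ consecutive positions. Both follow directly from the definitions. The boundary case $|P_k|=n/2$ is allowed: such a block covers exactly $n/2$ positions and so cannot contain two positions at distance $n/2$.
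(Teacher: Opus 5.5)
Your proof is correct and is essentially the paper's argument. Both use the concatenated ordering of the parts and the antipodal matching $\{v_p,v_{p+n/2}\}$, and both derive the contradiction $|P_k|\ge n/2+1$. Your linear-interval version of the final step is a valid slight streamlining of the paper's circular-arc argument: concatenation never produces a wrapped block, and $p+n/2\le n$ for $1\le p\le n/2$.
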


\begin{proof}
Arrange the elements of $[n]$ around a circle at positions $1,2,\dots,n$, ordering them so that the elements of each part of $\mathscr{P}$ occupy consecutive positions; this can be done by concatenating the parts in any order (see Fig.~\ref{fig2}). Let $v_p$ denote the element at position $p$, with indices understood modulo $n$, and define
\begin{align}
\mathtt{M}:=\Bigl\{\bigl\{v_p,v_{p+n/2}\bigr\}:1\le p\le n/2\Bigr\}.
\end{align}
Since each position $p$ is paired with the position antipodal to it on the circle, $\mathtt{M}$ is a perfect matching of $[n]$.

Suppose, for contradiction, that $v_p$ and $v_{p+n/2}$ belong to a common part $P\in\mathscr{P}$. The positions occupied by $P$ form a circular arc. Any circular arc containing both antipodal positions $p$ and $p+n/2$ must contain at least one of the two arcs joining them, namely $\{p,p+1,\dots,p+n/2\}$ or $\{p+n/2,p+n/2+1,\dots,p+n\}$, with indices taken modulo $n$. Each of these arcs contains $n/2+1$ positions (see Fig.~\ref{fig2}). Consequently, $|P|\ge n/2+1>n/2$, contradicting the assumption that every part of $\mathscr{P}$ has size at most $n/2$. Hence no edge of $\mathtt{M}$ has both endpoints in the same part of $\mathscr{P}$.
\end{proof}

\subsection{Classical Impossibility Theorem}

\begin{theorem}[Classical impossibility]\label{thm:cl}
Let $n\ge6$ be even. No classical protocol, with message alphabet of arbitrary size and with arbitrary public and private randomness, can simultaneously
\begin{enumerate}[itemsep=-.05cm, topsep=2pt, leftmargin=.5cm]
\item[$(1)$] solve $\mathrm{HM}_n$ with zero error, and
\item[$(2)$] satisfy $I(X_S:C)=0$ for every $S\subseteq[n]$ with $|S|=4$.
\end{enumerate}
In particular, the task $\mathrm{PoHM}_n$ admits no zero-error classical protocol for any even $n\ge6$.
\end{theorem}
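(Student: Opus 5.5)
The proof will be by contradiction, combining the three lemmas. Suppose a classical protocol satisfies both $(1)$ and $(2)$. Since $X$ is uniform and $\Pr(C=c)$ sums to one over the possible values of $C$, we can fix a value $c$ with $\Pr(C=c)>0$. We then apply Lemma~\ref{lem:classsize}, which uses only hypothesis $(2)$. It says that the $c$-classes form a partition $\mathscr{P}_c$ of $[n]$ in which every part has size at most $3$.

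Next we pass to Lemma~\ref{lem:matching}. Since $n\ge6$ is even, $n/2\ge3$, so every part of $\mathscr{P}_c$ has size at most $3\le n/2$. Lemma~\ref{lem:matching} therefore supplies a perfect matching $\mathtt{M}\in\mathscr{M}_n$ none of whose edges lies inside a single $c$-class. In other words, $i\not\sim_c j$ for every $\{i,j\}\in\mathtt{M}$.

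Lemma~\ref{lem:fixededge}, using hypothesis $(1)$, asserts the opposite. For this same $c$ and this $\mathtt{M}$, some edge $\{i,j\}\in\mathtt{M}$ must satisfy $i\sim_c j$. This contradiction shows that no protocol can satisfy $(1)$ and $(2)$ simultaneously. Note that part~(ii) of Lemma~\ref{lem:classsize} is not needed; the size bound in~(i) suffices.

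For the final sentence of the theorem, we observe that $\mathrm{PoHM}_n$ requires the parity-obliviousness condition~\eqref{eq:poclassical} for every $S\in\mathcal{F}_n$. Every $S$ with $|S|=4$ lies in $\mathcal{F}_n$, so hypothesis $(2)$ is implied. A zero-error classical protocol for $\mathrm{PoHM}_n$ would therefore be a zero-error protocol for $\mathrm{HM}_n$ satisfying $(2)$, which we have just excluded.

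The only point needing care is the threshold $n\ge6$. This is exactly where the bound $3\le n/2$ holds, and it is what makes Lemma~\ref{lem:matching} applicable. For $n=4$ a single class of size $3$ would exceed $n/2=2$, so the argument genuinely needs $n\ge6$. Beyond that, the argument is a direct chaining of the lemmas. Arbitrary alphabet size and randomness cause no trouble, because the lemmas are stated for arbitrary $C=(\mathtt{Msg},R)$.
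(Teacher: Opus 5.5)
Your proposal is correct and follows the paper's proof essentially step for step. The paper likewise fixes $c$ with $\Pr(C=c)>0$, bounds the $c$-classes by $3\le n/2$ via Lemma~\ref{lem:classsize}(i), obtains an avoiding matching from Lemma~\ref{lem:matching}, contradicts Lemma~\ref{lem:fixededge}, and derives the $\mathrm{PoHM}_n$ statement from the fact that every four-element $S$ lies in $\mathcal{F}_n$.
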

\begin{proof}
Suppose, for contradiction, that a protocol satisfying $(1)$ and $(2)$ exists.

\noindent\emph{Fixing a message:} Since $C$ is a random variable, there exists some $c$ with $\Pr(C=c)>0$. Fix such a value of $c$. Let $\sim_c$ be the equivalence relation defined above, and let the corresponding equivalence classes be the $c$-classes.

\noindent\emph{The classes are small:} By hypothesis $(2)$ and Lemma~\ref{lem:classsize},
\begin{align}\label{eq:small}
|K|\le3\quad\text{for every $c$-class $K$.}
\end{align}
Indeed, if a $c$-class contained four distinct elements $i,j,k,\ell$, then the three pairwise parities $x_i\oplus x_j$, $x_i\oplus x_k$, and $x_i\oplus x_\ell$ would all be fixed on $\mathcal{A}_c$, and hence their XOR would fix the four-bit parity $x_i\oplus x_j\oplus x_k\oplus x_\ell$, contradicting the assumed four-bit parity obliviousness.

\noindent\emph{Constructing an avoiding matching:} Since $n\ge6$, one has $3\le\frac n2$. Thus, by \eqref{eq:small}, every $c$-class has size at most $n/2$. Lemma~\ref{lem:matching} therefore yields a perfect matching $\mathtt{M}\in\mathscr{M}_n$ such that no edge of $\mathtt{M}$ has both endpoints in the same $c$-class. Equivalently,
\begin{align}\label{eq:noedge}
\{i,j\}\in\mathtt{M}\quad\Longrightarrow\quad i\not\sim_c j.
\end{align}

\noindent\emph{The contradiction:} Now let Bob's input be this particular matching $\mathtt{M}$. Since the protocol solves $\mathrm{HM}_n$ with zero error, Lemma~\ref{lem:fixededge} implies that there must exist an edge $\{i,j\}\in\mathtt{M}$ such that $i\sim_c j$. This contradicts \eqref{eq:noedge}. Hence no protocol can satisfy both $(1)$ and $(2)$.

Finally, every subset $S\subseteq[n]$ with $|S|=4$ belongs to $\mathcal{F}_n$, since $|S|\neq0,2$. Therefore the parity-obliviousness condition in Definition~\ref{def:po} implies hypothesis $(2)$. Consequently, $\mathrm{PoHM}_n$ admits no zero-error classical protocol for any even $n\ge6$.
\end{proof}

\noindent 
A few crucial observations are in order: 
\begin{itemize}[itemsep=0cm, topsep=2pt, leftmargin=0cm]
\item[] \emph{Observation}-1: Firstly, the proof uses obliviousness only for subsets of size exactly $4$; no condition for $|S|=1$, $|S|=3$, or $|S|\ge5$ is invoked. Thus, Theorem~\ref{thm:cl} is strictly stronger than the impossibility statement for the parity-oblivious task $\mathrm{PoHM}_n$, whose definition imposes obliviousness for all $|S|\neq0,2$.

\item[] \emph{Observation}-2: Second, the threshold $n\ge6$ is sharp. For $n=4$, let Alice send $c=(x_1\oplus x_2,\;x_1\oplus x_3)$. Each of the three perfect matchings of $[4]$ contains at least one of the edges $\{1,2\}$, $\{1,3\}$, and $\{2,3\}$, so Bob can use $c$ to output a correct pair and its parity with certainty. Moreover, $c$ depends only on two-bit parities, and a direct check shows that $X_S$ remains unbiased conditioned on $c$ for every $S$ with $|S|\in\{1,3,4\}$. Hence this protocol is parity-oblivious. In accordance with Lemma~\ref{lem:classsize}, it has a unique nontrivial $c$-class, namely $\{1,2,3\}$, of size exactly $3$ --— the largest the lemma permits.
\end{itemize}

\begin{remark}[Necessity of even-parity condition]\label{rem:even}
The even-parity condition is essential to the argument. If obliviousness were imposed only for odd $|S|$, then, for uniform $X$, the condition would be equivalent to
\begin{align*}
\Pr(C=c\mid X=x)=\Pr(C=c\mid X=\bar x)\quad\text{for all }x,
\end{align*}
where $\bar x:=x\oplus1^n$. Consider the encoding $c=\bigl(x_1\oplus x_2,\ldots,x_1\oplus x_{n/2+1}\bigr)$. This encoding is invariant under $x\mapsto\bar x$ and therefore satisfies the odd-parity obliviousness condition. Let $A:=\{1,\ldots,n/2+1\}$. Every perfect matching $\mathtt{M}$ of $[n]$ contains an edge $\{i,j\}\in\mathtt{M}$ with $i,j\in A$, since $|A|=n/2+1>n/2$. From $c$, Bob can determine $x_i\oplus x_j$ for every $i,j\in A$: if $i=1$ or $j=1$, the parity is directly encoded, while for $i,j\neq1$, $x_i\oplus x_j=(x_1\oplus x_i)\oplus(x_1\oplus x_j)$. Hence Bob can always choose an edge of $\mathtt{M}$ contained in $A$ and output its parity with zero error. Thus odd-parity obliviousness alone does not preclude a zero-error classical protocol and cannot yield an impossibility result of the form in Theorem~\ref{thm:cl}.
\end{remark}

\begin{corollary}[$n/2$ bits are optimal]\label{coro:n/2optimal}
Let $n$ be even. The zero-error classical one-way communication complexity of the unconstrained task $\mathrm{HM}_n$, measured as the worst-case message length for each fixed value of the public randomness, is exactly $n/2$ bits.
\end{corollary}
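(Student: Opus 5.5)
We will prove the two inequalities separately: an explicit $n/2$-bit zero-error protocol gives the upper bound, and a counting argument built on Lemma~\ref{lem:fixededge} and Lemma~\ref{lem:matching} gives the matching lower bound. Here the task is unconstrained, with no obliviousness condition.

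\emph{Upper bound.} Let Alice send $c=(x_1\oplus x_2,\,x_1\oplus x_3,\,\dots,\,x_1\oplus x_{n/2+1})$, which is $n/2$ bits and involves no randomness. As in Remark~\ref{rem:even}, set $A:=\{1,\dots,n/2+1\}$. From $c$, Bob recovers $x_i\oplus x_j$ for every pair $i,j\in A$. Since $|A|>n/2$, every perfect matching $\mathtt{M}$ has an edge with both endpoints in $A$, by pigeonhole over the $n/2$ edges. Bob outputs that edge together with its parity, which is correct with certainty.

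\emph{Lower bound.} Fix a value $r$ of the public randomness and suppose the set $\mathcal{C}_r$ of messages that can occur with it has fewer than $2^{n/2}$ elements. Every $x$ is sent some message, so the sets $\mathcal{A}_{(\zeta,r)}$ with $\zeta\in\mathcal{C}_r$ cover $\{0,1\}^n$. Hence some message $c$ has $|\mathcal{A}_c|>2^{n}/2^{n/2}=2^{n/2}$.

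Consider the $c$-classes, i.e.\ the classes of $\sim_c$. Suppose there are $k$ classes. Within a class, all pairwise parities are fixed on $\mathcal{A}_c$. So each $x\in\mathcal{A}_c$ is determined by one bit per class, namely the value of a representative coordinate, and therefore $|\mathcal{A}_c|\le 2^{k}$. Combined with $|\mathcal{A}_c|>2^{n/2}$, this gives $k>n/2$.

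It remains to produce a perfect matching with no edge inside a single $c$-class. If every class has size at most $n/2$, Lemma~\ref{lem:matching} supplies such a matching directly. Otherwise, a class $K$ with $|K|\ge n/2+1$ would leave at most $n/2-1$ other elements, so $k\le 1+(n/2-1)=n/2$, a contradiction. Thus the matching exists in every case.

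For that matching, Lemma~\ref{lem:fixededge} requires some edge $\{i,j\}$ with $i\sim_c j$, which the matching excludes by construction. So zero error fails. We conclude that every $\mathcal{C}_r$ has at least $2^{n/2}$ messages, and hence the worst-case message length for each fixed $r$ is at least $n/2$ bits.

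The main subtlety is the reduction from message length to class count, i.e.\ the bound $|\mathcal{A}_c|\le 2^{k}$ and the pigeonhole choice of $c$. This needs the supports to cover the cube for each fixed $r$, which is why complexity is measured per value of the public randomness. The remaining steps are direct applications of the earlier lemmas.
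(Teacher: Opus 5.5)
Your proposal is correct and follows essentially the same route as the paper. It uses the same $n/2$-bit protocol from Remark~\ref{rem:even} for the upper bound, and for the lower bound the same three ingredients: the sets $\mathcal{A}_{(\zeta,r)}$ covering $\{0,1\}^n$ for fixed $r$, a cardinality bound on $\mathcal{A}_c$ coming from the $c$-class structure, and Lemmas~\ref{lem:fixededge} and~\ref{lem:matching}.

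The only difference is bookkeeping. The paper shows directly that every message has a class of size at least $n/2+1$, so $|\mathcal{A}_c|\le 2^{n-(|K_c|-1)}\le 2^{n/2}$, and then sums over messages. You instead argue by contradiction via pigeonhole, using the equivalent bound $|\mathcal{A}_c|\le 2^{k}$ in terms of the number $k$ of classes.
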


\begin{proof}
The protocol described in Remark~\ref{rem:even} shows that $n/2$ bits suffice. We prove the converse.

Consider any zero-error classical protocol with public randomness $R$, and fix any value $r$ with $\Pr(R=r)>0$. Since $R$ is independent of $X$, conditioning on $R=r$ leaves $X$ uniformly distributed on $\{0,1\}^n$. For each message value $\zeta$ occurring with positive probability for at least one input when $R=r$, set $c:=(\zeta,r)$ and consider the corresponding set $\mathcal{A}_c$.

By Lemma~\ref{lem:fixededge} and Lemma~\ref{lem:matching}, the $c$-classes cannot all have size at most $n/2$. Hence there is a $c$-class $K_c$ satisfying $|K_c|\ge\frac n2+1$. Fix $i_0\in K_c$. Since $i_0\sim_c i$ for every $i\in K_c$, the value of $c$ determines each of the $|K_c|-1$ parities $X_{i_0}\oplus X_i$ for $i\in K_c\setminus\{i_0\}$. These parities are independent, since they are the parities associated with the edges of a spanning star on $K_c$. Therefore
\begin{align*}
|\mathcal{A}_c|\le2^{\,n-(|K_c|-1)}\le2^{\,n/2}.
\end{align*}
For the fixed value $r$, the sets $\mathcal{A}_{(\zeta,r)}$, over all possible message values $\zeta$, cover $\{0,1\}^n$: for every $x$, the conditional distribution $\Pr(\mathtt{Msg}=\zeta\mid X=x,R=r)$ is a probability distribution and hence has nonempty support. Thus
\begin{align*}
2^n\le\sum_{\zeta}|\mathcal{A}_{(\zeta,r)}|\le N_r\,2^{n/2},
\end{align*}
where $N_r$ is the number of possible message values for the fixed public-coin value $r$. Hence $N_r\ge 2^{n/2}$. In particular, if Alice's message uses at most $m$ bits, then $N_r\le2^m$, and therefore $m\ge\frac n2$. Combining this lower bound with the $n/2$-bit protocol from Remark~\ref{rem:even} proves the claim.
\end{proof}

\subsection{$\mathrm{PoHM}_n$ In Preparation Noncontextual Theories}
\noindent 
We first formalize how the operational parity-obliviousness constraint transfers to the ontological level. For each input $x$, let $P_x$ denote the corresponding preparation procedure. The parity-obliviousness condition implies operational equivalences between suitable mixtures of the preparations ${P_x}$. Preparation noncontextuality then requires the corresponding mixtures to have identical ontological representations.

\begin{lemma}[Obliviousness transfers to $\Lambda$]\label{lem:transfer}
Consider a preparation-noncontextual ontological model of a protocol for $\mathrm{HM}_n$ that is parity-oblivious in the sense of~\eqref{eq:po}, with $X$ uniformly distributed and $P_x$ denoting the preparation associated with input $x$. Let $\Lambda$ denote the corresponding ontic variable. Then $I(X_S:\Lambda)=0$ for every $S\in\mathcal{F}_n$.
\end{lemma}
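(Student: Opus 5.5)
The plan is to combine Feature~\ref{fea:mixture} with Definition~\ref{def:pnc}, applied to the mixtures $P_{S,0}$ and $P_{S,1}$ from Definition~\ref{def:po}. Then we translate the resulting equality of ontic distributions into statistical independence of $X_S$ and $\Lambda$ under the uniform prior on $X$.

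First, fix $S\in\mathcal{F}_n$. Since $S\neq\emptyset$, each of the sets $\{x:x_S=b\}$, $b\in\{0,1\}$, contains exactly $2^{n-1}$ strings. By Definition~\ref{def:po}, $P_{S,b}$ is the uniform mixture of the preparations $P_x$ with $x_S=b$. Feature~\ref{fea:mixture} therefore gives
\begin{align*}
\mu(\lambda\mid P_{S,b})=\frac{1}{2^{n-1}}\sum_{x:\,x_S=b}\mu(\lambda\mid P_x).
\end{align*}
Condition~\eqref{eq:po} says precisely that $P_{S,0}\simeq P_{S,1}$ in the sense of Definition~\ref{def:opequiv}, because the equality of statistics holds for all $M$ and $k$. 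Preparation noncontextuality, Eq.~\eqref{eq:pnc}, then yields $\mu(\lambda\mid P_{S,0})=\mu(\lambda\mid P_{S,1})$.

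Second, we interpret this probabilistically. With $X$ uniform and $\Lambda$ distributed according to $\mu(\cdot\mid P_X)$ given $X$, the joint law is $\Pr(X=x,\lambda)=2^{-n}\mu(\lambda\mid P_x)$. The conditional law of $\Lambda$ given $X_S=b$ is exactly $\mu(\lambda\mid P_{S,b})$, since conditioning the uniform $X$ on $X_S=b$ gives the uniform distribution on $\{x:x_S=b\}$. The equality above therefore says that the conditional distribution of $\Lambda$ does not depend on the value of $X_S$. Since $X_S$ is an unbiased bit, this is independence of $X_S$ and $\Lambda$, and hence $I(X_S:\Lambda)=0$.

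The main care point is measure-theoretic, because $\Lambda$ is an arbitrary measurable space. I would phrase the statements as equalities of probability measures on $\Lambda$, writing $\mu(\cdot\mid P)$ rather than pointwise densities. I would then define $I(X_S:\Lambda)$ as the relative entropy of the joint law with respect to the product of its marginals. When the two conditional measures coincide, the joint law equals the product of the marginals, so this relative entropy vanishes. No other obstacle is expected, and the argument is where the operational equivalence is genuinely used.
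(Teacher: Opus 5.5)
Your proposal is correct and follows the paper's proof essentially step for step. You realize $P_{S,0}$ and $P_{S,1}$ as uniform mixtures via Feature~\ref{fea:mixture}, read Eq.~\eqref{eq:po} as the operational equivalence $P_{S,0}\simeq P_{S,1}$, apply preparation noncontextuality, and then identify $\mu(\cdot\mid P_{S,b})$ with the conditional law of $\Lambda$ given $X_S=b$ to conclude independence and hence $I(X_S:\Lambda)=0$. Phrasing this as an equality of probability measures on $\Lambda$, with mutual information defined as a relative entropy, is slightly cleaner than the paper's pointwise/almost-everywhere wording.
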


\begin{proof}
Fix $S\in\mathcal{F}_n$ and $b\in\{0,1\}$, and let $H_{S,b}:=\{x\in\{0,1\}^n:x_S=b\}$. Since $S\neq\emptyset$, exactly half of the strings in $\{0,1\}^n$ satisfy $x_S=b$, and hence $|H_{S,b}|=2^{n-1}$. Let $P_{S,b}$ denote the preparation procedure that first samples $x$ uniformly from $H_{S,b}$ and then implements $P_x$. By Feature~\ref{fea:mixture}, its ontic distribution is
\begin{align}\label{eq:mixrep}
\mu(\lambda\mid P_{S,b})=\tfrac{1}{2^{n-1}}\sum_{x\in H_{S,b}}\mu(\lambda\mid P_x).
\end{align}
Likewise, for any measurement $M$ and outcome $k$, the operational statistics of the mixture are
\begin{align}\label{eq:mixstats}
p(k\mid P_{S,b},M)=\tfrac{1}{2^{n-1}}\sum_{x\in H_{S,b}}p(k\mid P_x,M).
\end{align}
The parity-obliviousness condition for $S$ means that, for every measurement $M$ and outcome $k$, the probability of obtaining $k$ is independent of the value of $X_S$. Since $X$ is uniformly distributed, this is precisely the statement
\begin{align}
p(k\mid P_{S,0},M)=p(k\mid P_{S,1},M)\quad\forall~M,k.
\end{align}
Thus, $P_{S,0}\simeq P_{S,1}$. By preparation noncontextuality, $\mu(\lambda\mid P_{S,0})=\mu(\lambda\mid P_{S,1})$ for almost every $\lambda$. Substituting \eqref{eq:mixrep} for $b=0$ and $b=1$ gives
\begin{align}
\tfrac{1}{2^{n-1}}\sum_{x:x_S=0}\mu(\lambda\mid P_x)=\tfrac{1}{2^{n-1}}\sum_{x:x_S=1}\mu(\lambda\mid P_x)
\end{align}
for almost every $\lambda$. We now identify these two sides with the conditional ontic distributions of $\Lambda$. Since $X$ is uniform, $\Pr(X=x\mid X_S=b)=2^{n-1}~\text{for every }x\in H_{S,b}$. Therefore, by the law of total probability at the ontological level, $\mu(\lambda\mid X_S=b)=\sum_{x\in H_{S,b}}\mu(\lambda\mid X=x)\Pr(X=x\mid X_S=b)=\frac{1}{2^{n-1}}\sum_{x\in H_{S,b}}\mu(\lambda\mid P_x)$. Consequently,
\begin{align}
\mu(\lambda\mid X_S=0)=\mu(\lambda\mid X_S=1)
\end{align}
for almost every $\lambda$. Because $X_S$ is an unbiased bit, the equality of the two conditional distributions implies that $\Lambda$ and $X_S$ are statistically independent: $\Pr(\Lambda\in A\mid X_S=0)=\Pr(\Lambda\in A\mid X_S=1)=\Pr(\Lambda\in A)$ for every measurable set $A\subseteq\Lambda$. Hence
\begin{equation}
I(X_S:\Lambda)=0.
\end{equation}
Since $S\in\mathcal{F}_n$ was arbitrary, the result holds for every $S\in\mathcal{F}_n$.
\end{proof}

\noindent
Two features of the argument are worth emphasizing. First, parity obliviousness is formulated entirely in terms of operationally accessible statistics and is therefore, in principle, directly testable without reference to any underlying ontological description. In this respect, it is analogous to the no-signaling principle in Bell scenarios, where the relevant constraint is likewise imposed at the operational level. Second, the operational equivalences $P_{S,0}\simeq P_{S,1}$ are nontrivial: they relate distinct preparation procedures that yield identical statistics for every allowed measurement. It is precisely such operational indistinguishability between distinct procedures that gives preparation noncontextuality substantive force, requiring their ontological representations to coincide and thereby imposing constraints on the underlying ontological model.

\begin{theorem}\label{thm:pnc}
For every even $n\ge 6$, no preparation-noncontextual ontological model can win the $\mathrm{PoHM}_n$ task with certainty.
\end{theorem}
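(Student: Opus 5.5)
The plan is to reduce Theorem~\ref{thm:pnc} to the classical impossibility, Theorem~\ref{thm:cl}, by letting the ontic variable play the role of Bob's message $C$. Suppose, for contradiction, that a preparation-noncontextual ontological model wins $\mathrm{PoHM}_n$ with certainty. By Lemma~\ref{lem:transfer}, the parity-obliviousness condition~\eqref{eq:po} together with preparation noncontextuality gives $I(X_S:\Lambda)=0$ for every $S\in\mathcal{F}_n$, and in particular for every $|S|=4$. This is exactly hypothesis~(2) of Theorem~\ref{thm:cl} with $C$ replaced by $\Lambda$.

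The next step is to build a classical protocol out of the model. Alice, on input $x$, samples $\lambda\sim\mu(\cdot\mid P_x)$ and sends $\lambda$ as her message. Bob, holding $\mathtt{M}$, samples his output $(\{i,j\},b)$ from the response function $\xi(\cdot\mid\lambda,M_{\mathtt{M}})$ of the measurement he would use for $\mathtt{M}$. By~\eqref{eq:ontrep}, this simulation reproduces the operational statistics exactly, so its success probability equals the model's, namely $1$ for every $x$ and $\mathtt{M}$. No measurement noncontextuality or outcome determinism is used, because each $\mathtt{M}$ simply gets its own response function.

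The main obstacle is measure-theoretic. The classical argument (Lemmas~\ref{lem:fixededge}--\ref{lem:classsize}) works with individual message values $c$ with $\Pr(C=c)>0$ and with supports $\mathcal{A}_c$, whereas $\Lambda$ is an arbitrary measurable space. Success with certainty only says that $\int \xi(\text{wrong}\mid\lambda,M)\,d\mu(\lambda\mid P_x)=0$, so errors can occur on $\mu$-null sets. I would handle this by working with the conditional distribution $\Pr(X=x\mid\Lambda=\lambda)$, which is well defined for almost every $\lambda$ because $X$ ranges over a finite set. I would then discard a null set $N$ of $\lambda$'s. This set is the union, over the finitely many pairs $(x,\mathtt{M})$, of the sets where $\Pr(X=x\mid\lambda)>0$ while $\xi$ puts positive weight on an output that is wrong for $x$, together with the null sets on which the independence of $\Lambda$ and $X_S$ fails pointwise. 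Independence of $\Lambda$ and $X_S$ means $\Pr(X_S=0\mid\Lambda=\lambda)=1/2$ for almost every $\lambda$, and since there are finitely many $S$ this holds off a null set.

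For any fixed $\lambda\notin N$, set $\mathcal{A}_\lambda:=\{x:\Pr(X=x\mid\lambda)>0\}$ and define $\sim_\lambda$ as before. Lemma~\ref{lem:fixededge} goes through verbatim: pick any output in the support of $\xi(\cdot\mid\lambda,M_{\mathtt{M}})$, and zero error on $\mathcal{A}_\lambda$ forces $i\sim_\lambda j$ for some edge of $\mathtt{M}$. Lemma~\ref{lem:classsize}(i) goes through using $\Pr(X_S=0\mid\lambda)=1/2$, giving classes of size at most $3\le n/2$ when $n\ge6$. Lemma~\ref{lem:matching} then produces a matching with no edge inside a $\lambda$-class, which is the same contradiction as in Theorem~\ref{thm:cl}. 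Since $\Lambda\setminus N$ has full measure and is therefore nonempty, the contradiction is reached and the theorem follows.
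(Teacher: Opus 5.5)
Your proposal is correct and follows the paper's proof: you use the ontic state as the message of an induced classical protocol, Lemma~\ref{lem:transfer} supplies $I(X_S:\Lambda)=0$, perfect success forces zero error, and the argument of Theorem~\ref{thm:cl} then gives the contradiction. Your treatment at a single generic $\lambda$ outside a finite union of null sets does the job of the paper's coarse-graining to $C'=\mathcal{A}_\Lambda$ in Remark~\ref{rem:measurability}, and is somewhat more careful than the paper's main proof, which argues as if $\Lambda$ were discrete.
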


\begin{proof}
Suppose, for contradiction, that an operational theory admits a preparation-noncontextual ontological model $(\Lambda,\mu,\xi)$ and contains a protocol that is parity-oblivious and solves $\mathrm{PoHM}_n$ with certainty. For each $x\in\{0,1\}^n$, let $P_x$ denote the corresponding preparation. For each perfect matching $\mathtt{M}\in\mathscr{M}_n$, let the measurement performed by Bob be denoted by $\mathtt{M}$, with outcomes identified with the corresponding pair-and-bit $(\{i,j\},b)$. We use the ontic state itself as the message of an induced classical protocol and derive a contradiction with Theorem~\ref{thm:cl}.

\smallskip
\noindent\emph{Step 1 (The induced classical protocol $\Pi_\Lambda$) --} On input $x$, Alice samples $\lambda\in\Lambda$ according to $\mu(\lambda\mid P_x)$ and sends $\lambda$ to Bob. Thus, the classical message is $C:=\Lambda$. Upon receiving $\lambda$ and $\mathtt{M}$, Bob samples an outcome $k$ according to the response function $\xi(k\mid \lambda,\mathtt{M})$ and outputs the pair-and-bit associated with $k$. This defines a classical one-way protocol: Bob's output distribution depends only on the received message $\lambda$ and his input $\mathtt{M}$, and not on $x$. The message space may be arbitrarily large, which is allowed in Theorem~\ref{thm:cl}.

\smallskip
\noindent\emph{Step 2 ($\Pi_\Lambda$ solves $\mathrm{HM}_n$ with zero error)--} Fix $x\in\{0,1\}^n$ and $\mathtt{M}\in\mathscr{M}_n$. Let $(\{i,j\},b)$ be an incorrect output for $(x,\mathtt{M})$, so that $b\neq x_i\oplus x_j$. Since the original protocol solves $\mathrm{PoHM}_n$ with certainty, the probability of this incorrect outcome is zero: $p\bigl((\{i,j\},b)\mid P_x,\mathtt{M}\bigr)=0$. By the ontological representation~\eqref{eq:ontrep},
\begin{align*}
0&=\int_\Lambda\xi\bigl(\lambda\mid(\{i,j\},b)\,\mathtt{M}\bigr)\,\mu(\lambda\mid P_x)\,d\lambda .
\end{align*}
The integrand is nonnegative. Hence, for every $\lambda$ with $\mu(\lambda\mid P_x)>0$, we must have $\xi\bigl(\lambda\mid(\{i,j\},b),\mathtt{M}\bigr)=0$. Therefore, whenever Alice's input is $x$ and the induced classical protocol receives a message $\lambda$ having positive probability, Bob's response distribution assigns zero probability to every incorrect output. Equivalently, every output produced with positive probability is correct: $b=x_i\oplus x_j$. Thus $\Pi_\Lambda$ satisfies the zero-error condition of Definition~\ref{def:zeroerror}.

\smallskip
\noindent\emph{Step 3 ($\Pi_\Lambda$ is parity-oblivious).}
By assumption, the original protocol is operationally parity-oblivious, and its ontological model is preparation noncontextual. Lemma~\ref{lem:transfer} therefore gives $I(X_S:\Lambda)=0\quad\text{for every }S\in\mathcal{F}_n$. Since the message of $\Pi_\Lambda$ is precisely $C=\Lambda$, we have $I(X_S:C)=0$ for every $S\in\mathcal{F}_n$. In particular, $I(X_S:C)=0$ for every $S\subseteq[n]$ with $|S|=4$. Thus $\Pi_\Lambda$ satisfies the obliviousness hypothesis of Theorem~\ref{thm:cl}.

It is worth emphasizing the role of preparation noncontextuality. Operational parity obliviousness constrains only the statistics of the permitted measurements and, by itself, does not imply that the underlying ontic state is independent of the forbidden parities. Preparation noncontextuality bridges this gap: the operational equivalences $P_{S,0}\simeq P_{S,1}$ force equality of the corresponding ontic distributions, which is precisely what yields $I(X_S:\Lambda)=0$.

\smallskip
\noindent\emph{Step 4 (Contradiction)--} Steps 2 and 3 show that $\Pi_\Lambda$ is a classical protocol that simultaneously solves $\mathrm{HM}_n$ with zero error and satisfies $I(X_S:C)=0$ for every $|S|=4$. For even $n\ge6$, this contradicts Theorem~\ref{thm:cl}. Hence no preparation-noncontextual ontological model can solve $\mathrm{PoHM}_n$ with certainty.
\end{proof}

\begin{remark}[Bob's inefficiency is not needed]\label{rem:omniscient}
In Step~1, Bob is given the ontic state $\lambda$ itself, rather than accessing it only through the response functions $\xi(k\mid \lambda,\mathtt{M})$. Thus, even an \emph{omniscient decoder} with unrestricted access to $\lambda$ cannot solve $\mathrm{HM}_n$ under the parity-obliviousness constraint. The obstruction therefore lies in the information content of $\lambda$ itself, not in any limitation of the decoding procedure.
\end{remark}

\begin{remark}[Continuous ontic state spaces]\label{rem:measurability}
Although Theorem~\ref{thm:cl} is stated for countable message alphabets, the argument extends to arbitrary ontic spaces by coarse-graining. Since $X$ takes values in the finite set $\{0,1\}^n$, define, for almost every $\lambda$,
\begin{align*}
\mathcal{A}_\lambda:=\bigl\{x\in\{0,1\}^n:\Pr(X=x\mid\Lambda=\lambda)>0\bigr\},
\end{align*}
and let $C':=\mathcal{A}_\Lambda$. Then $C'$ takes only finitely many values, being a subset of the finite set $\{0,1\}^n$. Moreover, since $C'$ is a function of $\Lambda$, $I(X_S:C')\le I(X_S:\Lambda)=0~\forall\,S\in\mathcal{F}_n$. For every $c'$ with $\Pr(C'=c')>0$, the set of inputs having positive conditional probability given $C'=c'$ is exactly $c'$: $\Pr(X=x\mid C'=c')>0\Longleftrightarrow x\in c'$. So the equivalence classes used in Theorem~\ref{thm:cl} are unchanged. Averaging Bob's response function over the ontic states compatible with $c'$ yields a valid finite-message classical protocol, and zero error is preserved. Thus Theorem~\ref{thm:cl} applies also when $\Lambda$ is continuous.
\end{remark}

\subsection{Quantum Feasibility}

\noindent
Theorems~\ref{thm:cl} and~\ref{thm:pnc} show that no preparation-noncontextual strategy can solve \(\mathrm{PoHM}_n\) perfectly, but this alone does not rule out the possibility that the task itself is impossible. The following theorem gives a quantum protocol that satisfies the parity-obliviousness constraint and wins \(\mathrm{PoHM}_n\) with certainty. Thus, the separation arises from preparation noncontextuality, rather than from an intrinsic limitation of the task. 

\begin{theorem}\label{thm:qm}
For every even $n$ there is a quantum protocol using $\lceil\log_2 n\rceil$ qubits that solves $\mathrm{HM}_n$ with zero error and is parity-oblivious in the sense that, for every $S\in\mathcal{F}_n$ and every POVM $\{E_k\}_k$ performed by Bob, the outcome statistics are independent of $x_S$.
\end{theorem}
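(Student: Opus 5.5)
The plan is to use the encoding already sketched in the introduction. Embed $\mathbb{C}^n$ into $d=\lceil\log_2 n\rceil$ qubits, with orthonormal vectors $\ket{1},\dots,\ket{n}$ (when $n<2^d$ the remaining basis vectors are never used). On input $x$, Alice prepares
\begin{align*}
\ket{\psi_x}:=\tfrac{1}{\sqrt n}\sum_{i=1}^n(-1)^{x_i}\ket{i}.
\end{align*}
The proof then has two independent parts: zero-error decodability, and parity obliviousness.

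\emph{Zero error.} Given $\mathtt{M}\in\mathscr{M}_n$, Bob measures in the orthonormal basis
\begin{align*}
\Bigl\{\ket{e^{\pm}_{ij}}:=\tfrac{1}{\sqrt2}\bigl(\ket{i}\pm\ket{j}\bigr):\{i,j\}\in\mathtt{M}\Bigr\},
\end{align*}
completed by projectors onto the unused basis vectors if $n<2^d$. These $n$ vectors are orthonormal because the edges of $\mathtt{M}$ are disjoint, so together with the unused vectors they give a complete projective measurement.

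A direct computation gives
\begin{align*}
\langle e^{\pm}_{ij}|\psi_x\rangle=\tfrac{1}{\sqrt{2n}}(-1)^{x_i}\bigl(1\pm(-1)^{x_i\oplus x_j}\bigr).
\end{align*}
Hence the outcome "$-$" on edge $\{i,j\}$ occurs only if $x_i\oplus x_j=1$, and the outcome "$+$" occurs only if $x_i\oplus x_j=0$. The unused outcomes have probability zero. So Bob outputs $(\{i,j\},b)$ with $b=0$ for "$+$" and $b=1$ for "$-$", and he is always correct. (Each edge in fact occurs with probability $2/n$.)

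\emph{Parity obliviousness.} The density operator is
\begin{align*}
\rho_x=\tfrac1n\sum_{i,j}(-1)^{x_i\oplus x_j}\ket{i}\bra{j}.
\end{align*}
Each matrix entry is either the constant $1$ (when $i=j$) or the character $\chi_{\{i,j\}}(x)$. So in its Fourier expansion over $\{0,1\}^n$, $\rho_x$ has only components at $S=\emptyset$ and at $|S|=2$.

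Now fix $S\in\mathcal{F}_n$. By Feature~\ref{fea:mixture} and linearity, the preparation $P_{S,b}$ has density operator $\bar\rho_{S,b}=2^{-(n-1)}\sum_{x:x_S=b}\rho_x$. Therefore
\begin{align*}
\bar\rho_{S,0}-\bar\rho_{S,1}=2^{-(n-1)}\sum_x\chi_S(x)\rho_x .
\end{align*}
Its $(i,j)$ entry is proportional to $\sum_x\chi_S(x)\chi_{\{i,j\}}(x)$ for $i\ne j$, and to $\sum_x\chi_S(x)$ for $i=j$. Character orthogonality on the hypercube makes these vanish unless $S=\{i,j\}$ or $S=\emptyset$, and both cases are excluded since $|S|\neq 0,2$. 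Thus $\bar\rho_{S,0}=\bar\rho_{S,1}$. Consequently $\operatorname{Tr}(\bar\rho_{S,0}E_k)=\operatorname{Tr}(\bar\rho_{S,1}E_k)$ for every POVM $\{E_k\}$, which is exactly condition~\eqref{eq:po}.

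There is no real obstacle here; the only point needing care is the bookkeeping. One must check that "independent of $x_S$" is correctly identified with equality of the two half-cube mixtures under uniform $X$. One must also remember that the exemption covers all $\binom n2$ pairs, not only those in $\mathtt{M}$, because $\rho_x$ contains every $\chi_{\{i,j\}}$.
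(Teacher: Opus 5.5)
Your proof is correct and follows the paper's argument. It uses the same encoding $\ket{\psi_x}$, the same pair-basis measurement $\mathcal{B}_{\mathtt{M}}$ with arbitrary completion, and character orthogonality to eliminate every Fourier component of $\rho_x$ outside $\emptyset$ and the two-element sets. The only cosmetic difference is in the obliviousness step: the paper averages characters over each coset $H_{S,b}$ to show $\rho_{S,0}=\rho_{S,1}=\mathbf{I}/n$, whereas you show the difference $2^{-(n-1)}\sum_x\chi_S(x)\rho_x$ vanishes by orthogonality on the whole cube, which is equivalent.
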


\begin{proof}
Let $q:=\lceil\log_2 n\rceil$, so that $2^q\ge n$. Depending upon her string $x\in\{0,1\}^n$, Alice prepares the $n$-dimensional state
\begin{align}\label{eqq1}
x\mapsto|\psi_x\rangle=\tfrac{1}{\sqrt n}\sum_{i=1}^n(-1)^{x_i}|i\rangle\in\mathbb{C}^n,
\end{align}
and sends it to Bob, encoded in a $q$-qubit register. The corresponding density operator is
\begin{align}\label{eqq2}
\rho_x&=\tfrac1n\sum_{i,j}(-1)^{x_i+x_j}|i\rangle\langle j|\nonumber\\
&=\tfrac1n \mathbf{I}+\tfrac1n\sum_{\{i,j\}\in\binom{[n]}2}
\chi_{\{i,j\}}(x)\bigl(|i\rangle\langle j|+|j\rangle\langle i|\bigr),
\end{align}
where the diagonal terms $i=j$ contribute $(-1)^{2x_i}=1$ and have been collected into $\mathbf{I}/n$. Equation~\eqref{eqq2} may be viewed as a Fourier expansion of the operator-valued function $x\mapsto\rho_x$: the coefficient of $\chi_\emptyset$ is $\mathbf{I}/n$, the coefficient associated with $\chi_{\{i,j\}}$ is $\frac1n(|i\rangle\langle j|+|j\rangle\langle i|)$, and every other Fourier coefficient vanishes. Thus, the dependence of $\rho_x$ on $x$ is entirely through characters indexed by two-element subsets.

\noindent\emph{Obliviousness--} Fix $S\in\mathcal{F}_n$ and $b\in\{0,1\}$, and let
\begin{align}
H_{S,b}:=\big\{x\in\{0,1\}^n: x_S=b\big\}.
\end{align}
Since $X$ is uniformly distributed on $\{0,1\}^n$, the conditional distribution of $X$ given $X_S=b$ is uniform on $H_{S,b}$. In particular, $|H_{S,b}|=2^{n-1}$. The set $H_{S,0}$ is an index-two subgroup of $\{0,1\}^n$ under bit-wise addition modulo $2$, while $H_{S,1}$ is its nontrivial coset. Since a nontrivial character sums to zero over a finite group, $\sum_{x\in H_{S,0}}\chi_\kappa(x)$ vanishes unless $\chi_\kappa$ restricts trivially to $H_{S,0}$, i.e.\ unless $\kappa$ lies in the annihilator $H_{S,0}^{\perp}=\{\varnothing,S\}$; the coset $H_{S,1}$ follows by translation, contributing the factor $\chi_\kappa(x^{(S)})=(-1)^b$ when $\kappa=S$. Hence
\begin{align}
\tfrac{1}{|H_{S,b}|} \sum_{x\in H_{S,b}}\chi_\kappa(x) = \begin{cases} 1, & \kappa=\varnothing,\\
(-1)^b, & \kappa=S,\\
0, & \kappa\neq\varnothing,S.
\end{cases}
\end{align}
Consequently, $\mathbb{E}\left[ \chi_\kappa(X)\mid X_S=b \right] = 0$, whenever $\kappa\neq\varnothing,S$. By linearity of expectation, 
\begin{align}\label{eqq3}
\rho_{S,b} &=\mathbb{E}\!\left[ \rho_X\mid X_S=b \right]\nonumber\\
&= \tfrac{1}{n}\mathbf{I} + \tfrac{1}{n} \sum_{\{i,j\}} \mathbb{E}\!\left[ \chi_{\{i,j\}}(X)\mid X_S=b \right]\bigl(\ket{i}\bra{j}+\ket{j}\bra{i}\bigr)\nonumber\\
&=\tfrac{1}{n}\mathbf{I}\quad\text{for every}\quad S\in\mathcal F_n.
\end{align}
Thus, $\operatorname{Tr}(\rho_{S,0}E_k)=\operatorname{Tr}(\rho_{S,1}E_k)$ for every $S\in\mathcal F_n$ and for every element $E_k$ of every POVM $\left\{E_k\right\}_k$. Hence the outcome statistics of any measurement performed by Bob are independent of $x_S$, as required. 

It is worth noting where the argument stops. For $|S|=2$, say $S=\{i,j\}$, the term $\chi_{\{i,j\}}$ survives the averaging and Eq.~\eqref{eqq3} generally fails. Thus the protocol does reveal the two-bit parities. This is not a defect, but precisely the intended behavior, since these include the parities Bob is required to report and $\mathcal{F}_n$ explicitly exempts them.

\noindent\emph{Perfect success--} The protocol is exactly the one proposed in~\cite{BarYossef2008}. Bob, after receiving the encoded quantum system from Alice together with the matching $\mathtt{M}$, performs a projective measurement in the basis
\begin{align}\label{eqq4}
\mathcal{B}_{\mathtt{M}}:=\Bigl\{\,|e^\pm_{ij}\rangle:=\tfrac{1}{\sqrt2}\bigl(|i\rangle\pm|j\rangle\bigr):\{i,j\}\in\mathtt{M}\,\Bigr\}.
\end{align}
Because $\mathtt{M}$ is a perfect matching, the vectors in $\mathcal{B}_{\mathtt{M}}$ form an orthonormal basis of $\mathbb{C}^n$. If $2^q>n$, complete $\mathcal{B}_{\mathtt{M}}$ arbitrarily to an orthonormal basis of the full $q$-qubit space; since $\ket{\psi_x}\in\mathbb{C}^n$, the added outcomes occur with probability zero. Taking inner products with the encoded state \eqref{eqq1},
\begin{align}
\langle e^{\pm}_{ij}|\psi_x\rangle&=\tfrac{1}{\sqrt{2n}}\Bigl((-1)^{x_i}\pm(-1)^{x_j}\Bigr).
\end{align}
Thus $|e^+_{ij}\rangle$ has nonzero amplitude only when $x_i\oplus x_j=0$, whereas $|e^-_{ij}\rangle$ has nonzero amplitude only when $x_i\oplus x_j=1$. Exactly one of the two vectors associated with each edge can therefore occur, and the sign of the observed vector determines $x_i\oplus x_j$ with certainty. Bob outputs the corresponding edge $\{i,j\}$ together with the inferred parity. Hence the protocol succeeds with probability $1$ for every $x\in\{0,1\}^n$ and every $\mathtt{M}\in\mathscr{M}_n$.
\end{proof}

\noindent 
Note that, neither of the two requirements—the prescribed information constraint on Alice's message and successful recovery of the relevant matching parity—is difficult to satisfy on its own. The empty message satisfies parity obliviousness, while $\mathrm{HM}_n$ can be solved with zero error using $n/2$ classical bits. What Theorem~\ref{thm:cl} and ~\ref{thm:qm} together show is that the conjunction of the two requirements separates the classical and quantum theories: for even $n\ge6$, it is achievable quantum-mechanically, whereas classically it is impossible at any communication cost. Thus the separation is qualitative rather than merely quantitative.

It is instructive to contrast $\mathrm{PoHM}_n$ with the parity-oblivious multiplexing (POM) task of Spekkens \emph{et al}~\cite{Spekkens2009}. In POM, preparation contextuality manifests itself through a quantitative noncontextuality inequality: quantum theory achieves a success probability strictly above the preparation-noncontextual bound, but the quantum success probability is itself strictly less than unity. In this sense, the POM advantage is analogous in spirit to inequality-based tests such as the CHSH test of nonlocality~\cite{Clauser1969}: the local/noncontextual model imposes a quantitative bound on the success probability, which is exceeded by quantum theory. By contrast, $\mathrm{PoHM}_n$ exhibits an \emph{all-vs-nothing}, or GHZ-type, form of contradiction~\cite{Greenberger1989,Mermin1990,Greenberger1990}. The quantum protocol succeeds with certainty, whereas every preparation-noncontextual model fails. Thus, the contradiction with preparation noncontextuality is already visible at the level of possible versus impossible events, rather than requiring a quantitative statistical excess over a classical threshold.

\section{A noncontextuality inequality}\label{sec:inequality}

\noindent 
So far we have established an \emph{all-vs-nothing} result for preparation contextuality in quantum theory: while Theorem~\ref{thm:qm} shows that $\mathrm{PoHM}_n$ can be perfectly won in quantum theory, Theorem~\ref{thm:pnc} shows that no preparation-noncontextual model can have perfect success whenever $n\ge6$. Importantly, the later theorem does not determine how close a preparation-noncontextual model can come to perfect success; it rules out only the extremal value $1$. In particular, it does not provide a uniform upper bound $\beta<1$ on the success probability.

A quantitatively stronger statement is a \emph{noncontextuality inequality} of the form
\begin{align}
p(\text{success})\le \beta_{\mathrm{NC}},
\end{align}
where $\beta_{\mathrm{NC}}<1$ holds for every preparation-noncontextual model satisfying the prescribed operational constraints. Such an inequality establishes a finite separation between the quantum value and the noncontextual bound and therefore provides a quantitative witness of preparation contextuality. In the present setting, the relevant operational constraint is parity obliviousness, which must itself be independently verified when interpreting the inequality experimentally. The purpose of this section is to determine the optimal noncontextual bound $\beta_{\mathrm{NC}}$ for the $\mathrm{PoHM}_n$ task.

\subsection{The Structure of Noncontextual Strategies}

\noindent 
We first characterize what a preparation-noncontextual theory can achieve in the $\mathrm{PoHM}_n$ task. Although the ontic state space $\Lambda$ may be arbitrarily large and Bob's measurements are unrestricted, the optimization can be reduced to a finite-dimensional problem. The key point is that parity obliviousness, once transferred to the ontological level by Lemma~\ref{lem:transfer}, strongly constrains how the ontic distribution can depend on the input. In particular, the conditional distribution of the ontic state, viewed as a function of the input $x$, can contain only the constant term and two-bit parity terms. Thus, from the perspective of the task, an arbitrary ontic state space offers no more power than a classical message whose conditional distribution is a nonnegative Fourier polynomial of degree at most $2$ with no linear terms. The remaining freedom is captured by the pairwise correlations $\langle\varepsilon_i\varepsilon_j\rangle$, subject to their arising from a valid probability distribution on $\{\pm1\}^n$. This reduces the problem to a finite optimization.

Throughout this section, we write $\varepsilon_i:=(-1)^{x_i}$ and identify functions on $\{0,1\}^n$ with functions on $\{\pm1\}^n$. For $g:\{\pm1\}^n\to\mathbb{R}$, we use the Fourier convention~\cite{deWolf2008}
\begin{align}\label{eq:fourier}
\widehat g(S):=\mathbb{E}_{\varepsilon}\Bigl[g(\varepsilon)\prod_{i\in S}\varepsilon_i\Bigr]
=2^{-n}\sum_{x}g(x)\chi_S(x),
\end{align}
so that $\widehat g(\emptyset)=\mathbb{E}g$. Write $\mathcal{E}_n:=\binom{[n]}{2}$ for the set of pairs and $\mathscr{M}_n$ for the set of perfect matchings of $[n]$. Throughout, $\mathtt{M}$ is sampled uniformly from $\mathscr{M}_n$.

\begin{definition}[Admissible densities]\label{def:admissible}
Let $\mathcal{K}_n$ denote the set of functions
$g:\{\pm1\}^n\to\mathbb{R}$ satisfying
\begin{align}\label{eq:Kn}
\left\{\begin{aligned}\emph{(a) } g(\varepsilon)\ge0 \ \ \forall\,\varepsilon\in\{\pm1\}^n,\quad\emph{(b) } \widehat{g}(\emptyset)=1,\\
\emph{(c) } \widehat{g}(S)=0 \ \ \forall\, S\in\mathcal{F}_n\qquad\qquad
\end{aligned}\right\}
\end{align}
Conditions \emph{(a)} and \emph{(b)} say exactly that $2^{-n}g$ is a probability distribution on $\{\pm1\}^n$, and \emph{(c)} that it is parity-oblivious; we call the elements of $\mathcal{K}_n$ \emph{admissible densities}.
\end{definition}

By (c), the only Fourier coefficients of $g\in\mathcal{K}_n$ that may be nonzero are those indexed by $\emptyset$ and by two-element subsets, so
\begin{align}\label{eq:Knexpansion}
g(\varepsilon)=1+\sum_{e\in\mathcal{E}_n}\widehat{g}(e)\,\varepsilon_e .
\end{align}
An admissible density is therefore determined by the vector $\bigl(\widehat{g}(e)\bigr)_{e\in\mathcal{E}_n}\in\mathbb{R}^{\binom n2}$, and we identify $\mathcal{K}_n$ with the set of such vectors. Under this identification (b) and (c) hold automatically, while (a) imposes one linear inequality for each $\varepsilon$; since $g(\varepsilon)=g(-\varepsilon)$, only $2^{n-1}$ of these are distinct. Thus $\mathcal{K}_n$ is a polytope with $2^{n-1}$ facets, and in particular is compact and convex. Finally, for every $e\in\mathcal{E}_n$,
\begin{align}\label{eq:coeffbound}
\bigl|\widehat{g}(e)\bigr|=\bigl|\mathbb{E}[g\,\varepsilon_e]\bigr|\le\mathbb{E}[g]=1,
\end{align}
using $g\ge0$ and $|\varepsilon_e|=1$.

\begin{definition}[Matching functional and noncontextual value]\label{def:Phi}
For $g\in\mathcal{K}_n$ define the \emph{matching functional}
\begin{align}\label{eq:Phi}
\Phi(g):=\mathbb{E}_{\mathtt{M}}
\Bigl[\max_{e\in\mathtt{M}}\bigl|\widehat{g}(e)\bigr|\Bigr],
\end{align}
the expectation being over a uniformly random
$\mathtt{M}\in\mathscr{M}_n$, and set
\begin{align}\label{eq:Rn}
R_n:=\max_{g\in\mathcal{K}_n}\Phi(g).
\end{align}
\end{definition}
\noindent
The maximum in \eqref{eq:Rn} is attained: $\Phi$ is a finite average of maxima of finitely many continuous functions of the coefficients, hence continuous, and $\mathcal{K}_n$ is compact. Note also that $\Phi$ is convex and positively homogeneous in $\widehat{g}$, properties used in
the proof of Proposition~\ref{prop:reduction}.

\begin{proposition}[Exact reduction]\label{prop:reduction}
Let $p^{\mathrm{NC}}_{\mathrm{opt}}(n)$ denote the supremum, over all operational theories admitting a preparation noncontextual hidden variable model and all parity-oblivious protocols therein, of the probability that Bob outputs a correct edge and parity in $\mathrm{PoHM}_n$. Then
\begin{align}\label{eq:reduction}
p^{\mathrm{NC}}_{\mathrm{opt}}(n)=\tfrac12+\tfrac12R_n .
\end{align}
\end{proposition}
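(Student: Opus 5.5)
We prove the two inequalities separately. For the upper bound $p^{\mathrm{NC}}_{\mathrm{opt}}(n)\le\tfrac12+\tfrac12R_n$, fix a preparation-noncontextual model $(\Lambda,\mu,\xi)$ of a parity-oblivious protocol. By Lemma~\ref{lem:transfer}, $I(X_S:\Lambda)=0$ for every $S\in\mathcal{F}_n$. Coarse-grain as in Remark~\ref{rem:measurability}. Then for $\Pr(\Lambda)$-almost every $\lambda$, the posterior density
\[
g_\lambda(\varepsilon):=2^n\Pr(X=x\mid\Lambda=\lambda)
\]
is well defined, nonnegative, and has mean $1$. The vanishing mutual information gives $\widehat{g_\lambda}(S)=0$ for $S\in\mathcal{F}_n$, so $g_\lambda\in\mathcal{K}_n$. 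If some $g_\lambda$ fails condition (c) on a positive-measure set, we average over that set. The cleaner route is to note that independence of $X_S$ from $\Lambda$ forces $\mathbb{E}[\chi_S(X)\mid\Lambda]=0$ almost surely, and this is exactly $\widehat{g_\lambda}(S)=0$.

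Next, condition on $\lambda$ and on Bob's matching $\mathtt{M}$. Bob's response $\xi(\cdot\mid\lambda,\mathtt{M})$ is a distribution $q(e,b)$ over outputs with $e\in\mathtt{M}$. His conditional success probability is
\[
\sum_{e,b}q(e,b)\,\Pr(X_e=b\mid\lambda)=\sum_{e,b}q(e,b)\,\tfrac12\bigl(1+(-1)^b\widehat{g_\lambda}(e)\bigr),
\]
using $\mathbb{E}[\varepsilon_e\mid\lambda]=\widehat{g_\lambda}(e)$. This is at most $\tfrac12+\tfrac12\max_{e\in\mathtt{M}}|\widehat{g_\lambda}(e)|$. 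Note that the omniscient-decoder bound of Remark~\ref{rem:omniscient} also applies here.

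Averaging over uniform $\mathtt{M}$ and over $\lambda$ gives success at most $\tfrac12+\tfrac12\,\mathbb{E}_\lambda\Phi(g_\lambda)\le\tfrac12+\tfrac12R_n$. Convexity of $\Phi$ is not even needed at this step; the pointwise bound $\Phi(g_\lambda)\le R_n$ suffices. The main delicacy is measure-theoretic: making sense of $g_\lambda$ for continuous $\Lambda$ and justifying $\mathbb{E}[\chi_S(X)\mid\Lambda]=0$ almost surely. I would handle this via the Radon–Nikodym derivatives $d\mu(\cdot\mid P_x)/d\bar\mu$ with $\bar\mu:=2^{-n}\sum_x\mu(\cdot\mid P_x)$. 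These exist and sum to $2^n$, which avoids conditioning on null events.

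For the lower bound, take a maximizer $g^\star\in\mathcal{K}_n$ and build a classical (hence noncontextual) protocol that attains $\tfrac12+\tfrac12R_n$. Alice samples $\varepsilon'\sim 2^{-n}g^\star$ and, using shared uniform $\varepsilon''\in\{\pm1\}^n$, sends the message $m:=\varepsilon'\odot\varepsilon''\odot\varepsilon$ (entrywise product). The public randomness is $\varepsilon''$. Bob's effective information is then $c=\varepsilon\odot\varepsilon'$ together with $\varepsilon''$. Since $\varepsilon''$ only masks and is known to Bob, this is equivalent to Bob receiving $c=\varepsilon\odot\varepsilon'$. The likelihood $\Pr(C=c\mid\varepsilon)=2^{-n}g^\star(c\odot\varepsilon)$ is a translate of $g^\star$. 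Its Fourier coefficients are therefore $\widehat{g^\star}(S)\chi_S(c)$ up to scaling, which vanish on $\mathcal{F}_n$, so the protocol is parity-oblivious. A simpler equivalent is to have Alice send $c=\varepsilon\odot\varepsilon'$ directly.

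Given $c$ and $\mathtt{M}$, Bob picks $e^\star\in\arg\max_{e\in\mathtt{M}}|\widehat{g^\star}(e)|$ and outputs the parity $b$ defined by $(-1)^b=c_{e^\star}\,\mathrm{sgn}\,\widehat{g^\star}(e^\star)$. He is correct exactly when $\varepsilon'_{e^\star}=\mathrm{sgn}\,\widehat{g^\star}(e^\star)$, which happens with probability $\tfrac12(1+|\widehat{g^\star}(e^\star)|)$. Averaging over $\mathtt{M}$ yields $\tfrac12+\tfrac12\Phi(g^\star)=\tfrac12+\tfrac12R_n$. Since classical probability theory is a preparation-noncontextual operational theory whose operational equivalences are just equalities of distributions, this shows the supremum is attained, which establishes \eqref{eq:reduction}.
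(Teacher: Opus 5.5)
Your proposal is correct and follows the paper's proof. The upper bound uses posterior densities $g_\lambda\in\mathcal{K}_n$ (the paper's $g_\lambda/a_\lambda$) with an omniscient-decoder bias bound, and attainability sends the input masked by noise drawn from $2^{-n}g^\star$, which is exactly the paper's translate encoding $g_\eta(x)=2^{-n}g^\star(x\oplus\eta)$.

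Your Radon--Nikodym treatment of general $\Lambda$ is more careful than the paper's ``replace $\sum_\lambda$ by $\int d\lambda$''. You should, however, delete the appeal to the coarse-graining of Remark~\ref{rem:measurability} and the ``average over that set'' aside. That coarse-graining keeps only the support $\mathcal{A}_\lambda$, not the posterior, while Bob's response depends on $\lambda$ itself; your argument never actually uses either step.
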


\begin{proof}
\emph{Upper bound.} Fix a preparation-noncontextual model and a parity-oblivious protocol, and let $\Lambda$ denote its ontic variable. We write the argument for discrete $\Lambda$; for a general measurable $\Lambda$ the proof is unchanged upon replacing $\sum_\lambda$ by $\int d\lambda$ throughout, since $\Phi$ is applied to the normalized conditional density $g_\lambda/a_\lambda$. For each $\lambda$, define 
\begin{align*}
g_\lambda(x):=p(\lambda\mid P_x),\quad~a_\lambda:=\widehat{g_\lambda}(\emptyset)
\end{align*}
Since $X$ is uniform, $a_\lambda=2^{-n}\sum_x g_\lambda(x)=\Pr(\Lambda=\lambda)$, and, for every $x$, $\sum_\lambda g_\lambda(x)=1$. By Lemma~\ref{lem:transfer}, parity obliviousness together with preparation noncontextuality implies $I(X_S:\Lambda)=0$ for every $S\in\mathcal{F}_n$. Hence, for every $\lambda$ with $a_\lambda>0$,
\begin{align*}
\widehat{g_\lambda}(S)=0~\text{for all }S\in\mathcal{F}_n,~\text{and therefore~}\frac{g_\lambda}{a_\lambda}\in\mathcal{K}_n.
\end{align*}
For an edge $e=\{i,j\}$, the posterior distribution of $X$ conditioned on $\Lambda=\lambda$ is $p(x\mid\Lambda=\lambda)=\frac{2^{-n}g_\lambda(x)}{a_\lambda}$. Thus
\begin{align}\label{eq:condcorr}
\mathbb{E}\bigl[\varepsilon_i\varepsilon_j\mid\Lambda=\lambda\bigr]=\tfrac{\widehat{g_\lambda}(e)}{a_\lambda}.
\end{align}
Consequently, even an observer who is given $\lambda$ itself and is otherwise unrestricted can determine the parity on a fixed edge $e$ with success probability at most
\begin{align*}
\tfrac12\left(1+\tfrac{|\widehat{g_\lambda}(e)|}{a_\lambda}\right),
\end{align*}
with equality obtained by guessing the more likely parity. Since Bob in the original protocol has no more information than $\lambda$ together with $\mathtt{M}$, his success probability is bounded by that of such an observer who, after seeing $\lambda$ and $\mathtt{M}$, chooses the edge maximizing the corresponding bias. Therefore
\begin{align*}
p_{\mathrm{succ}}\le
\sum_\lambda a_\lambda\,
\mathbb{E}_{\mathtt{M}}\Big[\tfrac12\big(1+\max_{e\in \mathtt{M}}\tfrac{|\widehat{g_\lambda}(e)|}{a_\lambda}\big)\Big]=\tfrac12\big[1+\sum_\lambda
\Phi(g_\lambda)\big],
\end{align*}
where we used $\sum_\lambda a_\lambda=1$ and the positive homogeneity of $\Phi$. Since $\Phi(g_\lambda)=a_\lambda\,\Phi\!\big(\frac{g_\lambda}{a_\lambda}\big)\le a_\lambda R_n$, we obtain $p_{\mathrm{succ}}\le\tfrac12+\tfrac12R_n$.

\noindent
\emph{Attainability.} Let $g^\ast\in\mathcal{K}_n$ attain $R_n$. For each $\eta\in\{0,1\}^n$, define $g_\eta(x):=2^{-n}g^\ast(x\oplus\eta)$. Since $g^\ast\ge0$, we have $g_\eta\ge0$, and
\begin{align}
\sum_{\eta}g_\eta(x)&=2^{-n}\sum_{\eta}g^\ast(x\oplus\eta)\nonumber\\
&=2^{-n}\sum_y g^\ast(y)= \widehat{g^\ast}(\emptyset)=1.
\end{align}
Hence $\{g_\eta(x)\}$ defines a valid classical encoding: on input $x$, Alice sends the message $\eta$ with probability $g_\eta(x)$. Since $X$ is uniform, $\Pr(\eta)=2^{-n}\sum_x g_\eta(x)=2^{-n}$. Thus the messages are uniformly distributed marginally. Changing variables $y=x\oplus\eta$ in \eqref{eq:fourier} gives
\begin{align}\label{eq:shift}
\widehat{g_\eta}(S)=2^{-n}\chi_S(\eta)\,\widehat{g^\ast}(S).
\end{align}
Hence $\widehat{g_\eta}(S)=0$ for every $S\in\mathcal{F}_n$, so the encoding is parity-oblivious. Moreover, $|\widehat{g_\eta}(e)|=2^{-n}|\widehat{g^\ast}(e)|$ for every edge $e$. Upon receiving $\eta$ and $\mathtt{M}$, Bob chooses an edge $e\in\mathtt{M}$ maximizing $|\widehat{g_\eta}(e)|$ and outputs the more likely value of the corresponding parity, with the choice of parity determined by the sign of $\widehat{g_\eta}(e)$. His conditional success probability is therefore
\begin{align*}
\tfrac12\left(1+\max_{e\in\mathtt{M}}\tfrac{|\widehat{g_\eta}(e)|}{\widehat{g_\eta}(\emptyset)}\right).
\end{align*}
Since $\widehat{g_\eta}(\emptyset)=2^{-n}$, averaging over the uniformly distributed messages $\eta$ and over $\mathtt{M}$ gives
\begin{align}
p_{\mathrm{succ}}=\tfrac12+\tfrac12\sum_\eta\Phi(g_\eta)=\tfrac12+\tfrac12R_n.
\end{align}
Thus the upper bound is attainable, proving \eqref{eq:reduction}.
\end{proof}
\noindent
Proposition~\ref{prop:reduction} says that a preparation noncontextual model can do no better than a classical randomized encoding whose conditional density is a nonnegative multi-linear polynomial of degree at most two. The unbounded ontic state buys nothing.

\subsection{The Best Noncontextual Strategy}\label{ssec:beststrategy}

We now ask what an optimal noncontextual strategy should look like. The answer is somewhat different from what the zero-error analysis suggests. A natural strategy is for the ontic state to determine the parities among a small set of coordinates. Lemma~\ref{lem:classsize} shows that, under parity obliviousness, this set can contain at most three coordinates and Lemma~\ref{lem:classsize} also shows there can be only one such set, so the ontic state can fix the parities of a single triple and no more. Bob then succeeds with certainty whenever his matching contains an edge entirely within this set, and guesses otherwise. Although this strategy is natural from the zero-error perspective, it is not optimal for average success probability.

A better strategy trades certainty for coverage. Instead of determining a few parities exactly, the ontic state can encode weaker correlations spread over a larger set of coordinates\footnote{A similar trade-off is familiar from random access codes (RACs). In the classical $3\mapsto1$ RAC, encoding a single bit $x_1$ lets Bob answer $y=1$ with certainty and forces him to guess otherwise, for an average success of $1/3+2/3\times1/2=2/3$. Encoding instead the majority $\mathrm{maj}(x_1,x_2,x_3)$ determines no individual bit perfectly, yet answers a uniformly chosen $x_y$ correctly with probability $3/4$~\cite{Ambainis2024}. See Appendix \ref{app:n8} for more details.}. Bob can then obtain a nonzero bias for more potential edges, even though no individual parity is determined with certainty (see Fig.~\ref{fig3}). As we show below, the optimal trade-off balances the strength of the correlation, of order $2/(k-1)$, against the probability that the matching contains a correlated edge, leading to the scale $k\sim\sqrt n$. Thus, unlike the zero-error setting, where only perfect correlations matter, controlled partial correlations can improve the average success probability.

\begin{lemma}[Odd cliques are admissible]\label{lem:clique}
Let $k\ge3$ be odd and $K\subseteq[n]$ with $|K|=k$. Put $S_K:=\sum_{i\in K}\varepsilon_i$ and
\begin{align}\label{eq:clique}
g_K:=\tfrac{S_K^2-1}{k-1}.
\end{align}
Then $g_K\in\mathcal{K}_n$, and
\begin{align}\label{eq:cliquecoeff}
\widehat{g}_K(e)=\begin{cases}
\tfrac{2}{k-1}, & e\subseteq K,\\
0, & \text{otherwise}.
\end{cases}
\end{align}
\end{lemma}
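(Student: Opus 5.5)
The proof is a direct expansion of $S_K^2$. The plan is to compute the Fourier expansion of $g_K$, read off the coefficients, and then verify the three conditions of Definition~\ref{def:admissible}.

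\emph{Expansion.} Since $\varepsilon_i^2=1$, we have
\begin{align*}
S_K^2=\sum_{i\in K}\varepsilon_i^2+2\sum_{\{i,j\}\subseteq K}\varepsilon_i\varepsilon_j=k+2\sum_{\{i,j\}\subseteq K}\varepsilon_e,
\end{align*}
writing $\varepsilon_e=\varepsilon_i\varepsilon_j$ for $e=\{i,j\}$. Hence
\begin{align*}
g_K=\frac{k-1}{k-1}+\frac{2}{k-1}\sum_{e\subseteq K}\varepsilon_e=1+\frac{2}{k-1}\sum_{e\subseteq K}\varepsilon_e .
\end{align*}
Characters are orthonormal under $\mathbb{E}_\varepsilon$, so this expansion is the Fourier expansion, and the coefficients can be read off directly. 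We get $\widehat{g}_K(\emptyset)=1$, which is condition (b). We also get $\widehat{g}_K(e)=2/(k-1)$ for $e\subseteq K$ and $0$ for every other pair, which is \eqref{eq:cliquecoeff}. Finally, every $S$ with $|S|\neq0,2$ has vanishing coefficient, which is condition (c).

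\emph{Nonnegativity.} Condition (a) is where oddness of $k$ is used. Since $S_K$ is a sum of $k$ terms $\pm1$, it has the same parity as $k$ and is therefore an odd integer. Thus $|S_K|\ge1$ and $S_K^2\ge1$, so $g_K\ge0$ because $k-1>0$ (as $k\ge3$). The zeros of $g_K$ occur exactly at $|S_K|=1$; this is the balanced case, and it is why the construction needs odd $k$. For even $k$, $S_K$ could be $0$ and $g_K$ would become negative.

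No step is a real obstacle. The only point needing care is the parity observation in the nonnegativity step, together with confirming the normalization $(k-1)/(k-1)=1$ in the expansion.
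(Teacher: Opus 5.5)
Your proposal is correct and follows the paper's proof step for step: expand $S_K^2$ using $\varepsilon_i^2=1$ to get $g_K=1+\tfrac{2}{k-1}\sum_{e\subseteq K}\varepsilon_e$, read off the Fourier coefficients (which gives (b), (c) and the stated edge values), and obtain nonnegativity because $S_K$ is an odd integer, so $S_K^2\ge1$. Your closing remark that $S_K=0$ is possible for even $k$, making $g_K$ negative, matches the paper's comment right after the proof.
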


\begin{figure}[t!]
\centering
\includegraphics[width=1\linewidth]{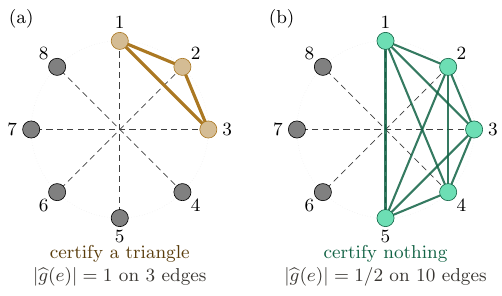}
\caption{(Color online) Certainty versus coverage, drawn for $n=8$. (a)~The triangle is the largest set whose pairwise parities the ontic state may fix outright, by Lemma~\ref{lem:classsize}; but only three of the $\binom82=28$ edges carry any correlation, and a uniform perfect matching meets one of them with probability only $3/7$. The dashed matching shown misses the triangle entirely, and Bob must then guess. (b)~The quintuple fixes no parity at all: every one of its ten edges carries the weaker bias $|\widehat g(e)|=1/2$. Since only three vertices lie outside, however, \emph{every} perfect matching has two of its endpoints inside the quintuple and hence contains one of those ten edges. Trading certainty for coverage raises the value from $3/7$ to $1/2$. The crossover occurs exactly at $n=8$.}\vspace{0cm}
\label{fig3}
\end{figure}

\begin{proof}
Expanding $S_K^2$ gives
\begin{align}\label{eq:cliqueexp}
g_K=1+\tfrac{2}{k-1}\sum_{\{i,j\}\subseteq K}\varepsilon_i\varepsilon_j.
\end{align}
Thus $\widehat{g}_K(\emptyset)=1$, the only nonzero non-constant Fourier coefficients are those indexed by two-element subsets of $K$, and \eqref{eq:cliquecoeff} follows. In particular, $\widehat{g}_K(S)=0$ for every $S\in\mathcal{F}_n$. It remains to verify nonnegativity. Since $k$ is odd, $S_K$ is an odd integer for every  $\varepsilon\in\{\pm1\}^n$, and hence $S_K^2\ge1$. Therefore $g_K(\varepsilon)=(S_K^2-1)/(k-1)\ge0$. Hence $g_K\in\mathcal{K}_n$.
\end{proof}

\noindent 
The parity of $k$ is essential: if $k$ is even, $S_K$ can vanish, in which case \eqref{eq:clique} gives $g_K=-1/(k-1)<0$. For odd $k$, the quantity $S_K$ measures the imbalance between the number of $+1$ and $-1$ values among the $\varepsilon_i$ with $i\in K$. More precisely, if $r$ of the $k$ signs are $+1$, then $S_K=2r-k$, and hence $g_K=\big((2r-k)^2-1\big)/(k-1)$. Thus $g_K$ is a normalized measure of the squared imbalance. Since $k$ is odd, the most balanced configurations have $|S_K|=1$, for which $g_K=0$, while $g_K$ increases as the imbalance grows. The case $k=3$ is particularly simple:
\begin{align*}
g_{K}=1+\varepsilon_1\varepsilon_2+\varepsilon_1\varepsilon_3+\varepsilon_2\varepsilon_3=4\cdot\mathbf{1}\bigl[x_1=x_2=x_3\bigr].
\end{align*}
Thus, for $k=3$, $g_K$ is four times the indicator of the event that $x$ is constant on $K$. In this case $|\widehat{g}_K(e)|=1$ for each of the three edges contained in $K$, so the corresponding two-bit parities are determined with certainty (see Fig.~\ref{fig3}). Since $2/(k-1)=1$ only for $k=3$, this is the unique odd-clique construction in which the edge bias attains its maximal possible value.

\begin{lemma}[Hitting probability]\label{lem:hit}
For $K\subseteq[n]$ with $|K|=k$, a uniformly random $\mathtt{M}\in\mathscr{M}_n$ contains no edge with both endpoints in $K$ with probability
\begin{align}\label{eq:miss}
q(n,k)=\prod_{j=1}^{k}\frac{n-k-j+1}{n-2j+1},
\end{align}
where $q(n,k)=0$ automatically when $2k>n$, since the factor with $j=n-k+1\le k$ then has vanishing numerator. Consequently, $\Phi(g_K)=\frac{2}{k-1}\bigl(1-q(n,k)\bigr)$. 
\end{lemma}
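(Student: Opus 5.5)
The plan is to count the perfect matchings of $[n]$ that contain no edge inside $K$, and then divide by $(n-1)!!$.

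\emph{Counting the avoiding matchings.} If a perfect matching has no edge inside $K$, every vertex of $K$ is matched to a vertex of $[n]\setminus K$. Such a matching is built in two stages. First, the $k$ vertices of $K$ are injectively assigned partners in $[n]\setminus K$; this can be done in $(n-k)(n-k-1)\cdots(n-2k+1)$ ways, each ordered product automatically giving distinct edges. Second, the remaining $n-2k$ outside vertices are perfectly matched among themselves in $(n-2k-1)!!$ ways. This requires $2k\le n$; if $2k>n$, the injective assignment is impossible and the count is zero.

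\emph{Dividing and telescoping.} I would write
\begin{align*}
(n-1)!!=\prod_{j=1}^{k}(n-2j+1)\cdot(n-2k-1)!!,
\end{align*}
so the $(n-2k-1)!!$ factors cancel. The numerator product equals $\prod_{j=1}^{k}(n-k-j+1)$, and the ratio is exactly \eqref{eq:miss}. For the case $2k>n$, I would check that the factor with $j=n-k+1$ lies in the range $1\le j\le k$ and has numerator $n-k-(n-k+1)+1=0$. I would also confirm that no denominator $n-2j+1$ vanishes before that point; since $n$ is even, $n-2j+1$ is odd and never zero. This is the only delicate bookkeeping in the argument, and it is routine.

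\emph{Computing $\Phi(g_K)$.} By Lemma~\ref{lem:clique}, $|\widehat g_K(e)|=2/(k-1)$ if $e\subseteq K$ and $0$ otherwise. Hence $\max_{e\in\mathtt M}|\widehat g_K(e)|$ equals $2/(k-1)$ exactly when $\mathtt M$ contains an edge inside $K$, and equals $0$ otherwise. Taking the expectation over a uniform $\mathtt M$ gives $\Phi(g_K)=\frac{2}{k-1}(1-q(n,k))$. I expect no real obstacle here; the only care needed is the product rewriting of $(n-1)!!$.
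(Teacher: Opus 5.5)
Your proposal is correct and follows essentially the same route as the paper: count the injective assignments of $K$ to partners outside $K$, multiply by $(n-2k-1)!!$ matchings of the remaining outside vertices, and divide by $(n-1)!!$ using $(n-1)!!/(n-2k-1)!!=\prod_{j=1}^{k}(n-2j+1)$. You then read off $\Phi(g_K)$ from the equal clique coefficients, exactly as the paper does. Your added checks, that no denominator $n-2j+1$ vanishes for even $n$ and that the vanishing factor at $j=n-k+1$ lies in the range of the product, are a welcome bit of care beyond the paper.
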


\begin{proof}
A matching avoids all edges inside $K$ exactly when each of the $k$ vertices of $K$ is matched to a distinct vertex outside $K$. This is impossible if $k>n-k$. Otherwise, the vertices of $K$ can be assigned distinct partners in $[n]\setminus K$ in $(n-k)(n-k-1)\cdots(n-2k+1)$ ways, while the remaining $n-2k$ vertices outside $K$ can be matched among themselves in $(n-2k-1)!!$ ways. Dividing by $|\mathscr{M}_n|=(n-1)!!$ and using $(n-1)!!/(n-2k-1)!!=\prod_{j=1}^{k}(n-2j+1)$ gives \eqref{eq:miss}. The formula for $\Phi(g_K)$ follows since by \eqref{eq:cliquecoeff} all nonzero coefficients are equal, so the maximum over $e\in \mathtt{M}$ is $2/(k-1)$ if $\mathtt{M}$ meets $K$ and $0$ otherwise.
\end{proof}

\begin{proposition}[Lower bound]\label{prop:lower}
For every even $n$,
\begin{align}\label{eq:lower}
R_n&\ge\max_{\substack{3\le k\le n\\ k\ \mathrm{odd}}}\tfrac{2}{k-1}\bigl(1-q(n,k)\bigr)\ge\tfrac{0.902}{\sqrt n}-O\left(\tfrac1n\right).
\end{align}
\end{proposition}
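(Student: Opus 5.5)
By Lemma~\ref{lem:clique}, for every odd $k\ge3$ and every $K\subseteq[n]$ with $|K|=k$ we have $g_K\in\mathcal{K}_n$. Hence $R_n=\max_{g\in\mathcal{K}_n}\Phi(g)\ge\Phi(g_K)$, and Lemma~\ref{lem:hit} evaluates $\Phi(g_K)=\frac{2}{k-1}\bigl(1-q(n,k)\bigr)$. Taking the maximum over odd $k$ with $3\le k\le n$ gives the first inequality in \eqref{eq:lower} immediately. All the remaining work is the asymptotic estimate, which I would get by choosing $k$ of order $\sqrt n$ and bounding $q(n,k)$ from above.

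\emph{Bounding $q$.} For $k\le n/2$ I would write each factor in \eqref{eq:miss} as
\begin{align*}
\frac{n-k-j+1}{n-2j+1}=1-a_j,\qquad a_j:=\frac{k-j}{n-2j+1}\ge\frac{k-j}{n}.
\end{align*}
Using $\log(1-a)\le -a$, this gives
\begin{align*}
\log q(n,k)\le-\sum_{j=1}^{k}\frac{k-j}{n}=-\frac{k(k-1)}{2n}.
\end{align*}
Only an upper bound on $q$ is needed, since $q$ enters \eqref{eq:lower} with a negative sign, so this one-sided estimate suffices. With $k=\Theta(\sqrt n)$ it yields $q(n,k)\le e^{-k^2/(2n)}\bigl(1+O(n^{-1/2})\bigr)$.

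\emph{Choosing $k$.} Set $k:=t\sqrt n+O(1)$, rounded to an odd integer, with $t>0$ a constant to be fixed. Then $\frac{2}{k-1}=\frac{2}{t\sqrt n}+O(1/n)$, and the correction in $q$ contributes $\frac{2}{k-1}\cdot O(n^{-1/2})=O(1/n)$. Altogether,
\begin{align*}
R_n\ge\frac{h(t)}{\sqrt n}-O\!\left(\tfrac1n\right),\qquad h(t):=\frac{2\bigl(1-e^{-t^2/2}\bigr)}{t}.
\end{align*}
I would choose $t$ to maximize $h$. With $u:=t^2/2$, the stationarity condition $t^2e^{-t^2/2}=1-e^{-t^2/2}$ becomes $e^{u}=1+2u$. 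Its positive root is $u\approx1.2564$, so $t\approx1.585$ and $h(t)\approx0.9026>0.902$. The rounding of $k$ shifts $t$ by $O(n^{-1/2})$; since $h$ is smooth, this costs only $O(n^{-1/2})$ in $h$, hence $O(1/n)$ in $R_n$. It is even second order, because $h'(t)=0$ there. This proves the second inequality in \eqref{eq:lower}.

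The main obstacle is keeping the error terms uniform in $n$, i.e.\ making sure the product in \eqref{eq:miss} is controlled with only an $O(n^{-1/2})$ relative error. The one-sided bound $\log(1-a)\le-a$ together with $a_j\ge(k-j)/n$ sidesteps the need for any second-order expansion. The remaining care is mostly bookkeeping: for small $n$ the chosen $k$ must still satisfy $3\le k\le n$, but any such finitely many cases are absorbed into the $O(1/n)$ constant.
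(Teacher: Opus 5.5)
Your proposal is correct and follows the paper's proof essentially step for step. You get the first inequality from Lemmas~\ref{lem:clique} and~\ref{lem:hit}, and the bound $q(n,k)\le\exp\bigl(-k(k-1)/(2n)\bigr)$ from $(k-j)/(n-2j+1)\ge(k-j)/n$ for $2k\le n$. You then optimize $h$ via the same stationarity condition $e^{u}=1+2u$ and round $k$ to an odd integer at $O(1/n)$ cost. The only cosmetic difference is that the paper sets $k-1=\alpha\sqrt n$ and uses $k(k-1)\ge(k-1)^2$ to get $q\le e^{-\alpha^2/2}$ exactly, which avoids your $\bigl(1+O(n^{-1/2})\bigr)$ correction factor.
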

\begin{table}[t!]
\centering
\begin{tabular}{r|c|c|c|}
$n$ ~ &~ Best odd $k$ ~& $R_n\ge$ & $p^{\mathrm{NC}}_{\mathrm{opt}}\ge$\\\hline\hline
$6^\ast$ ~   & $3$  & $3/5=0.600000$    & ~~$0.800000$~~\\
$8^\ast$ ~   & $5$  & $1/2=0.500000$    & $0.750000$\\
$10$ ~ & $5$  & $0.436508$ & $0.718254$\\
$12$ ~  & $5$  & $0.378788$        & $0.689394$\\
$16$ ~ & $7$  & $0.303497$        & $0.651748$\\
$20$ ~ & $7$  &~~ $0.267286$~~& $0.633643$\\
$50$ ~ & $11$ & $0.151123$        & $0.575561$\\
$100$ ~ & $15$ & $0.101267$        & $0.550633$\\\hline
\end{tabular}
\caption{Best odd-clique values from Lemmas~\ref{lem:clique} and
\ref{lem:hit}. The corresponding quantum value is $1$ throughout. ($\ast$) Theorem~\ref{thm:exact} further shows for $n=6,8$ the corresponding values are in-fact optimal.}
\label{tab:values}
\end{table}

\begin{proof}
The first inequality follows immediately from Lemmas~\ref{lem:clique} and~\ref{lem:hit}. For the asymptotic lower bound, it suffices to consider odd $k$ with $2k\le n$. From \eqref{eq:miss}, 
\begin{align*}
\tfrac{n-k-j+1}{n-2j+1}=1-\tfrac{k-j}{n-2j+1}\le 1-\tfrac{k-j}{n},
\end{align*}
for $1\le j\le k$, and hence
\begin{align}\label{eq:missexp}
q(n,k)\le\prod_{i=0}^{k-1}\left(1-\tfrac{i}{n}\right)\le\exp\left(-\tfrac{k(k-1)}{2n}\right).
\end{align}
Writing $k-1=\alpha\sqrt n$ and using $k(k-1)\ge (k-1)^2$ gives
\begin{equation}
R_n \ge\tfrac{2}{\alpha\sqrt n}\left(1-\exp(-\alpha^2/2)\right).
\end{equation}
Maximizing $h(\alpha):=2/\alpha\big(1-\exp(-\alpha^2/2)\big)$ over $\alpha>0$ gives the stationary condition $\exp(\omega)=2\omega+1$ with $\omega=\alpha^2/2$, whose unique positive root is $\omega^\ast=1.256431\ldots$, thereby yielding 
\begin{align*}
\alpha^\ast=1.585201\ldots\text{ and } h(\alpha^\ast)=0.902512\ldots
\end{align*}
Rounding $\alpha^\ast\sqrt n+1$ to the nearest odd integer perturbs the value by
$O(1/n)$.
\end{proof}

Note that, for $k=3$, Eq.\eqref{eq:miss} gives $q(n,3)=(n-4)/(n-1)$ and hence $\Phi(g_{K_3})=3/(n-1)$. This is the performance obtained by certifying the parities within a single triple. Since a triangle is the largest class that can be certified with certainty by Lemma~\ref{lem:classsize}, it is natural to think that this strategy is optimal. It is not. At $n=8$, taking $k=5$ gives $\bigl|\widehat{g}_{K_5}(e)\bigr|=1/2$, and every perfect matching contains an edge entirely within $K_5$, because only $3$ vertices lie outside $K_5$. Thus, $\Phi(g_{K_5})=1/2>3/7=\Phi(g_{K_3})$ (see Table~\ref{tab:values}). Hence the optimal strategy need not certify any parity with certainty; instead, it can use weaker correlations spread over a larger set of vertices. The quintuple first outperforms the triangle at $n=8$ (an explicit protocol is detailed in Appendix \ref{app:n8}).

It remains to show that no strategy can outperform the odd clique strategies by more than a constant factor. The key constraint is a \emph{sum rule}: for every admissible ontic density \(g\), the squared pairwise correlations obey
\begin{equation}\label{eq:sumrulemotivation}
\sum_{i<j}\langle\varepsilon_i\varepsilon_j\rangle_g^2=O(1),
\end{equation}
with a constant independent of \(n\). Thus, although the pairwise correlations may be concentrated on a small number of pairs, their aggregate squared magnitude cannot grow with the number of pairs. The bound \eqref{eq:sumrulemotivation} follows from hypercontractivity of the noise operator on the Boolean hypercube, which converts the degree constraint on \(g\) into a norm bound~\cite{Bonami1970,Beckner1975,BenAroya2008,Montanaro2012}. A random perfect matching samples only \(n/2\) of the \(\binom n2\) possible pairs. Hence the sum rule severely limits the largest correlation that can occur on a matched edge. The following proposition makes this statement quantitative.

\begin{proposition}[Upper bound]\label{prop:upper}
For every even $n$,
\begin{align}\label{eq:upper}
R_n\le 2\sqrt{\tfrac{80}{n-1}}<\tfrac{18}{\sqrt{n-1}}.
\end{align}
\end{proposition}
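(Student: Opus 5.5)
The plan is to prove a sum rule for admissible densities and then average it over a random matching. For $g\in\mathcal{K}_n$, expansion \eqref{eq:Knexpansion} shows $g$ is a multilinear polynomial of degree at most $2$. By Parseval,
\begin{align*}
\|g\|_2^2=1+\sum_{e\in\mathcal{E}_n}\widehat g(e)^2 .
\end{align*}
So the sum rule \eqref{eq:sumrulemotivation} amounts to bounding $\|g\|_2$ by an absolute constant. Two facts are available: since $g\ge0$, we have $\|g\|_1=\mathbb{E}g=\widehat g(\emptyset)=1$; and the degree bound allows hypercontractivity.

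\textbf{Step 1 (sum rule).} I would use the Bonami--Beckner inequality in its $(2,4)$ form: for a polynomial of degree at most $d$ on $\{\pm1\}^n$, $\|g\|_4\le(\sqrt3)^{d}\|g\|_2$. With $d=2$ this gives $\|g\|_4\le 3\|g\|_2$. I would then interpolate $L^2$ between $L^1$ and $L^4$ by H\"older (log-convexity of $L^p$ norms, with $\tfrac12=\tfrac{1/3}{1}+\tfrac{2/3}{4}$):
\begin{align*}
\|g\|_2\le\|g\|_1^{1/3}\|g\|_4^{2/3}\le 3^{2/3}\|g\|_2^{2/3}.
\end{align*}
Since $\|g\|_2>0$, this yields $\|g\|_2^{1/3}\le3^{2/3}$, that is, $\|g\|_2^2\le81$. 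Hence
\begin{align*}
\sum_{e\in\mathcal{E}_n}\widehat g(e)^2\le 80 .
\end{align*}
This is the constant appearing in \eqref{eq:upper}. This step is the main obstacle, and the nonnegativity of $g$ is used essentially here. The degree constraint alone gives no bound, since a degree-two polynomial may have arbitrarily large coefficients. It is the combination of $\|g\|_1=1$ with hypercontractivity that bounds the squared coefficients.

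\textbf{Step 2 (averaging over matchings).} For each matching, bound the maximum by the $\ell_2$ norm over the edges of the matching:
\begin{align*}
\max_{e\in\mathtt M}|\widehat g(e)|\le\Bigl(\sum_{e\in\mathtt M}\widehat g(e)^2\Bigr)^{1/2}.
\end{align*}
Applying Jensen's inequality (concavity of $\sqrt{\cdot}$) then gives
\begin{align*}
\Phi(g)\le\Bigl(\mathbb{E}_{\mathtt M}\sum_{e\in\mathtt M}\widehat g(e)^2\Bigr)^{1/2}.
\end{align*}
By symmetry, a uniformly random perfect matching contains any fixed pair with probability $\tfrac{(n/2)}{\binom n2}=\tfrac1{n-1}$. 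Therefore
\begin{align*}
\mathbb{E}_{\mathtt M}\sum_{e\in\mathtt M}\widehat g(e)^2=\tfrac{1}{n-1}\sum_{e\in\mathcal{E}_n}\widehat g(e)^2\le\tfrac{80}{n-1}.
\end{align*}
This yields $\Phi(g)\le\sqrt{80/(n-1)}$ for every $g\in\mathcal{K}_n$, so $R_n\le\sqrt{80/(n-1)}$. That is already stronger than the stated bound $2\sqrt{80/(n-1)}$, so the statement follows a fortiori. The factor $2$ gives slack if one prefers a cruder form of hypercontractivity, such as $\|g\|_2\le e^{d}\|g\|_1$.

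\textbf{Step 3 (numerics).} Finally, check $2\sqrt{80}=\sqrt{320}<18=\sqrt{324}$. Combined with Proposition~\ref{prop:lower}, this gives $R_n=\Theta(n^{-1/2})$.
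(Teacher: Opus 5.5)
Your proof is correct. Step~1 is the paper's own sum rule ($\|g\|_4\le3\|g\|_2$, interpolation between $L^1$ and $L^4$, Parseval, $T\le80$); the two arguments differ in Step~2. The paper passes from the sum rule to $\Phi$ by a threshold argument. A Markov-type count gives $|G_\theta|\le T/\theta^2$ for the set $G_\theta$ of edges with $|\widehat g(e)|>\theta$, and a union bound with $\Pr[e\in\mathtt M]=1/(n-1)$ then controls $\Pr[\mathtt M\cap G_\theta\neq\emptyset]$. Layer-cake integration split at $\theta_0=\sqrt{T/(n-1)}$ yields $\Phi(g)\le2\sqrt{T/(n-1)}$. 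You instead bound $\max_{e\in\mathtt M}|\widehat g(e)|$ by the $\ell_2$ norm over $\mathtt M$ and apply Jensen. This uses the same two inputs ($T\le80$ and $\Pr[e\in\mathtt M]=1/(n-1)$) but gives $\Phi(g)\le\sqrt{T/(n-1)}$ in two lines, a factor $2$ sharper. Your route therefore shows that the paper's remark that the sum rule is the only lossy step is not accurate: the layer-cake step also costs a factor $2$. With your Step~2 the constant in Theorem~\ref{thm:ncineq} drops from $9$ to $\sqrt{20}\approx4.47$, or to $\sqrt3/2$ under the conjecture $T\le3$, and the inequality becomes non-vacuous for $n\ge82$ rather than $n\ge326$. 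What the paper's threshold argument buys is robustness. It needs only a weak-type count $|G_\theta|\lesssim1/\theta^2$ of large coefficients, the kind of information KKL-style arguments supply, whereas your Jensen step needs the full $\ell_2$ bound and would lose a logarithmic factor under weak-type information alone. Once the $\ell_2$ sum rule is in hand, your route is both shorter and stronger. One slip in an aside: $\|g\|_2\le e^{d}\|g\|_1$ is not a cruder form here, since at $d=2$ it gives $T\le e^4-1\approx53.6<80$. This does not affect your argument.
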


\begin{proof}
(The idea) The proof involves two basic concepts:

\begin{enumerate}[itemsep=-.05cm, topsep=2pt, leftmargin=.5cm]
\item[$i.$] First, the total pairwise correlation carried by an admissible density is bounded by an absolute constant, independent of $n$:
\begin{align}\label{eq:appsumrule}
T:=\sum_{e\in\mathcal{E}_n}\widehat g(e)^2\;\le\;80 .
\end{align}
\item[$ii.$] Second, Bob's matching samples only $n/2$ of the $\binom n2$ pairs, so a fixed budget of correlation spread over $\Theta(n^2)$ edges leaves him, typically, an edge carrying correlation of order $n^{-1/2}$.
\end{enumerate}
The first statement is a \emph{sum rule}, proved by hypercontractivity; the second
is a counting argument. We provide the detailed proof in Appendix \ref{app:upper}.
\end{proof}

\subsection{The Inequality}

\noindent
Assembling the two bounds, obtained in Propositions~\ref{prop:lower} and~\ref{prop:upper}, gives the quantitative noncontextuality inequality promised at the beginning of the section, and also identifies the scale of the quantum violation that an experiment must resolve.

\begin{theorem}[Quantitative noncontextuality inequality]\label{thm:ncineq}
In any operational theory admitting a preparation-noncontextual hidden-variable model, the success probability in $\mathrm{PoHM}_n$ satisfies $p^{NC}_{\mathrm{succ}}(\mathrm{PoHM}_n)\le 1/2+9/\sqrt{n-1}$. Furthermore, this scaling is optimal up to a constant factor: $p_{\mathrm{succ}}^{\mathrm{NC}}(\mathrm{PoHM}_n)=
1/2+\Theta(n^{-1/2})$.
\end{theorem}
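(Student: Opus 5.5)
The plan is simply to assemble three results already established: the exact reduction (Proposition~\ref{prop:reduction}), the upper bound (Proposition~\ref{prop:upper}) and the lower bound (Proposition~\ref{prop:lower}).

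\emph{Upper bound.} Consider any operational theory with a preparation-noncontextual ontological model and any parity-oblivious protocol for $\mathrm{PoHM}_n$. Its success probability is at most the supremum $p^{\mathrm{NC}}_{\mathrm{opt}}(n)$. By Proposition~\ref{prop:reduction}, this supremum equals $\tfrac12+\tfrac12R_n$. Proposition~\ref{prop:upper} gives $R_n<18/\sqrt{n-1}$, so
\begin{align*}
p^{\mathrm{NC}}_{\mathrm{succ}}\le\tfrac12+\tfrac12R_n<\tfrac12+\tfrac{9}{\sqrt{n-1}},
\end{align*}
which is the stated inequality. All the real work sits in the sum rule $T\le 80$, proved in Appendix~\ref{app:upper}, so I will invoke it rather than reprove it. Note that for small $n$ the bound exceeds $1$ and is vacuous, but it is still correct.

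\emph{Lower bound.} The attainability half of Proposition~\ref{prop:reduction} says $p^{\mathrm{NC}}_{\mathrm{opt}}(n)=\tfrac12+\tfrac12R_n$ is actually achieved. Concretely, the shifted odd-clique encoding $g_\eta$ built from $g_K$ of Lemma~\ref{lem:clique} does this. The model is an explicit classical message, which is trivially a preparation-noncontextual model. Proposition~\ref{prop:lower} then gives
\begin{align*}
p^{\mathrm{NC}}_{\mathrm{opt}}(n)\ge\tfrac12+\tfrac{0.451}{\sqrt n}-O\!\left(\tfrac1n\right).
\end{align*}

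\emph{Combining.} For large $n$ we have $c_1n^{-1/2}\le p^{\mathrm{NC}}_{\mathrm{opt}}(n)-\tfrac12\le c_2n^{-1/2}$, using $\sqrt{n-1}\asymp\sqrt n$. The only mild care needed is that the $O(1/n)$ error in the lower bound is eventually dominated by $0.451/\sqrt n$. For the remaining finitely many small even $n$, the $k=3$ clique already gives a positive value, $R_n\ge 3/(n-1)$. So the $\Theta(n^{-1/2})$ statement holds for all even $n$, with adjusted constants.

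The main obstacle would be the hypercontractive sum rule, but it is assumed from Proposition~\ref{prop:upper}.
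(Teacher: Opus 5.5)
Your proposal is correct and matches the paper's argument: combine the exact reduction $p^{\mathrm{NC}}_{\mathrm{opt}}(n)=\tfrac12+\tfrac12R_n$ with $R_n<18/\sqrt{n-1}$ from Proposition~\ref{prop:upper} for the inequality, and with $R_n\ge 0.902/\sqrt n-O(1/n)$ from Proposition~\ref{prop:lower} for the matching $\Theta(n^{-1/2})$ scaling. One minor slip: the triangle bound $R_n\ge 3/(n-1)$ needs $n\ge4$. This is harmless, since $n=2$ does not affect an asymptotic statement and $R_2=1$ in any case.
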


\noindent 
By contrast, the quantum protocol of Theorem~\ref{thm:qm} achieves $p_{\mathrm{succ}}(\mathrm{PoHM}_n)=1$ for all even $n$. The quantitative inequality of Theorem~\ref{thm:ncineq} may be viewed as the robust counterpart of the all-vs-nothing preparation-contextuality argument, in close analogy with the relation between the GHZ argument and Mermin's inequality~\cite{Mermin1990-1}. The GHZ argument establishes an ideal contradiction from perfect quantum correlations, while Mermin's inequality converts this contradiction into a quantitative condition that can be tested in the presence of experimental imperfections. Likewise, the possibilistic $\mathrm{PoHM}_n$ argument shows that perfect quantum success is incompatible with preparation noncontextuality, whereas Theorem~\ref{thm:ncineq} quantifies the maximal success probability of a preparation-noncontextual model. This places the two results in complementary roles: the former establishes the \emph{all-vs-nothing} contradiction, while the latter provides its experimentally testable, noise-tolerant counterpart.

It must be emphasized at once that the bound of Theorem~\ref{thm:ncineq} is nontrivial only for large $n$. Since $1/2+9/\sqrt{n-1}<1$ requires $\sqrt{n-1}>18$, the inequality carries content only for $n\ \ge\ 326$, and is vacuous for every $n$ appearing in Table~\ref{tab:values}. This is a limitation of the constant in Proposition~\ref{prop:upper}, not of the scaling, and it is confined to the asymptotic argument. As we now show, at the two smallest nontrivial values of $n$ the optimum can be determined exactly.

\subsection{Exact Values for $n=6$ and $n=8$}\label{ssec:exact}

\noindent 
The bound in Proposition~\ref{prop:upper} is asymptotic and its constant is not sharp. At the two smallest nontrivial values of $n$, however, the optimum can be determined exactly and coincides with the value achieved by the odd-clique strategies. The exact calculation reduces, through two elementary observations, to a finite optimization problem.

\begin{lemma}[Selection relaxation]\label{lem:topk}
For every $g\in\mathcal{K}_n$, let $|\,\widehat g(e_1)\,|\ge |\,\widehat g(e_2)\,|\ge\cdots$ be the pairwise Fourier coefficients arranged in non-increasing order. Then
\begin{align}\label{eq:topk}
\Phi(g)\le \frac{1}{n-1}\sum_{r=1}^{n-1}\bigl|\widehat g(e_r)\bigr|.
\end{align}
\end{lemma}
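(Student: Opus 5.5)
The idea is to rewrite the matching functional as a weighted sum over edges, with weights given by the probability that each edge is the one Bob uses, and then bound that sum with a fractional-knapsack (greedy) argument. Write $w_e:=|\widehat g(e)|$ for $e\in\mathcal{E}_n$. Fix a deterministic tie-breaking rule, for instance a total order on $\mathcal{E}_n$. For each $\mathtt{M}\in\mathscr{M}_n$ let $e^\ast(\mathtt{M})\in\mathtt{M}$ be the edge attaining $\max_{e\in\mathtt{M}}w_e$, chosen according to this rule. Then
\begin{align*}
\Phi(g)=\mathbb{E}_{\mathtt{M}}\bigl[w_{e^\ast(\mathtt{M})}\bigr]=\sum_{e\in\mathcal{E}_n}w_e\,\pi_e,\qquad \pi_e:=\Pr_{\mathtt{M}}\bigl[e^\ast(\mathtt{M})=e\bigr].
\end{align*}

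Next I record two constraints on the vector $(\pi_e)$. First, exactly one edge is selected for each matching, so $\sum_e\pi_e=1$ and $\pi_e\ge0$. Second, $e^\ast(\mathtt{M})=e$ forces $e\in\mathtt{M}$, so $\pi_e\le\Pr_{\mathtt{M}}[e\in\mathtt{M}]$. By symmetry this probability is the same for every edge. Since each matching contains $n/2$ of the $\binom n2$ edges, it equals $\frac{n/2}{\binom n2}=\frac1{n-1}$. (Equivalently, vertex $i$ is matched to each of the other $n-1$ vertices with equal probability.)

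It then suffices to bound the linear program $\max\sum_e w_e\pi_e$ subject to $0\le\pi_e\le\frac1{n-1}$ and $\sum_e\pi_e=1$. Sort the edges so that $w_{e_1}\ge w_{e_2}\ge\cdots$. The greedy solution puts the maximal mass $\frac1{n-1}$ on each of $e_1,\dots,e_{n-1}$, which uses exactly the total mass $1$, and puts zero elsewhere. To show this is optimal, take any feasible $\pi$ and move mass from an edge $e_s$ with $s\ge n$ and $\pi_{e_s}>0$ to an edge $e_r$ with $r\le n-1$ and $\pi_{e_r}<\frac1{n-1}$. Such an $e_r$ exists as long as mass sits outside the top $n-1$ edges, because the total mass is $1$. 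Each such transfer does not decrease the objective, since $w_{e_r}\ge w_{e_s}$. Repeating the transfers reaches the greedy solution, which gives \eqref{eq:topk}.

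The only step requiring care is the tie-breaking, which makes $e^\ast$ well defined so that the $\pi_e$ sum to exactly $1$. Apart from that the argument is elementary; no part of it looks like a genuine obstacle.
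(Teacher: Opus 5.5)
Your proof is correct and follows the paper's argument: you write $\Phi(g)=\sum_e \pi_e\,|\widehat g(e)|$ with $\pi_e$ the selection probability, note $\sum_e\pi_e=1$ and $0\le\pi_e\le\Pr[e\in\mathtt{M}]=1/(n-1)$, and relax to the linear program whose optimum puts mass $1/(n-1)$ on the $n-1$ largest coefficients. Your mass-transfer argument simply spells out the linear-program optimality step that the paper asserts without proof.
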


\begin{proof}
Write $m_e:=\bigl|\widehat g(e)\bigr|$, and let \(p_e\) be the probability that \(e\) is selected as a maximizing edge in~\eqref{eq:Phi} for a uniformly random \(\mathtt M\in\mathscr M_n\), with ties broken arbitrarily. Since exactly one edge is selected for each matching, $\sum_e p_e=1$. Moreover, an edge can be selected only when it belongs to the matching, and hence $0\le p_e\le \Pr[e\in\mathtt M]=1/(n-1)$. Therefore $\Phi(g)=\sum_e p_e m_e$. We may now relax the matching-induced constraints on the vector \(p\) and optimize only subject to $\sum_e p_e=1,~0\le p_e\le 1/(n-1)$. This linear program is maximized by assigning weight \(1/(n-1)\) to the \(n-1\) largest values of \(m_e\), and zero weight to all remaining edges. Hence, $\Phi(g)\le1/(n-1)\sum_{r=1}^{n-1}m_{e_r}$, where \(m_{e_1}\ge\cdots\ge m_{e_{|\mathcal E_n|}}\), proving~\eqref{eq:topk}.
\end{proof}

\begin{lemma}[Restriction]\label{lem:restrict}
Let $A\subseteq[n]$ and let $g\in\mathcal{K}_n$. Averaging $g$ over the coordinates outside $A$ yields $g_A(\varepsilon_A)=1+\sum_{e\subseteq A}\widehat g(e)\,\varepsilon_e$, and $g_A\in\mathcal{K}_{|A|}$. Conversely every $h\in\mathcal{K}_{|A|}$ extends to $\mathcal{K}_n$ by assigning zero to all edges meeting $[n]\setminus A$. Consequently, for any edge set $S$,
\begin{align}\label{eq:isoinv}
\max_{g\in\mathcal{K}_n}\sum_{e\in S}|\widehat g(e)|
\end{align}
depends only on the isomorphism class of $S$ viewed as an abstract graph.
\end{lemma}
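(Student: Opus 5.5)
The plan is to prove the three assertions of Lemma~\ref{lem:restrict} in order: first that the marginal $g_A$ is admissible, then that admissible densities on $A$ extend by zero, and finally the invariance of \eqref{eq:isoinv}.

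\emph{Marginal.} Define $g_A(\varepsilon_A):=\mathbb{E}_{\varepsilon_{[n]\setminus A}}\,g(\varepsilon)$. Substitute the expansion \eqref{eq:Knexpansion}, $g=1+\sum_{e}\widehat g(e)\varepsilon_e$, and average term by term. For an edge $e$ meeting $[n]\setminus A$, the monomial $\varepsilon_e$ contains at least one uniformly averaged sign that appears only once, so its average vanishes. Edges $e\subseteq A$ are untouched. This gives the claimed formula $g_A=1+\sum_{e\subseteq A}\widehat g(e)\varepsilon_e$. The three conditions of Definition~\ref{def:admissible} then follow directly:
\begin{itemize}
\item (a) holds because $g_A$ is an average of nonnegative values of $g$;
\item (b) holds because the constant term is still $1$;
\item (c) holds because only degree-$0$ and degree-$2$ characters on $A$ appear.
\end{itemize}
Hence $g_A\in\mathcal{K}_{|A|}$.

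\emph{Extension.} Given $h\in\mathcal{K}_{|A|}$, set $\tilde h(\varepsilon):=h(\varepsilon_A)$, which is constant in the coordinates outside $A$. It is nonnegative and has mean $1$. Its Fourier coefficients are those of $h$ for subsets of $A$ and zero otherwise, so $\tilde h\in\mathcal{K}_n$, with $\widehat{\tilde h}(e)=0$ on every edge meeting $[n]\setminus A$.

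\emph{Invariance.} First, for any permutation $\pi$ of $[n]$, the map $g\mapsto g\circ\pi$ preserves $\mathcal{K}_n$, since (a)--(c) are symmetric under relabeling. It sends $\widehat g(e)$ to $\widehat g(\pi^{-1}e)$. So the value of \eqref{eq:isoinv} for $S$ equals its value for $\pi(S)$.

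Next, let $V(S)$ be the set of vertices touched by $S$. Because the objective involves only edges inside $V(S)$, the marginal step shows it depends on $g$ only through $g_{V(S)}$. The extension step shows every element of $\mathcal{K}_{|V(S)|}$ arises this way. Therefore the maximum over $\mathcal{K}_n$ equals the maximum over $\mathcal{K}_{|V(S)|}$ of the same sum, transported to a standard vertex set.

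Now take an abstract graph isomorphism $S\to S'$. It is a bijection $V(S)\to V(S')$, and composing with the previous reduction shows the two maxima agree. The isomorphism extends to a permutation of $[n]$ only when both graphs fit in $[n]$, which is automatic here. Isolated vertices are irrelevant, since we restrict to the touched vertex set.

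\emph{Main obstacle.} I expect the only subtle point to be interpreting the invariance claim as holding independently of the ambient $n$ (and of which labels are used). This is exactly what the restriction and extension pair delivers. The computations themselves are routine.
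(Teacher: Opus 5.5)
Your proposal is correct and follows essentially the same route as the paper: averaging annihilates every character meeting $[n]\setminus A$ while preserving nonnegativity and the mean, extension by constancy gives the converse, and restricting to $A=V(S)$ together with relabeling invariance of the admissible set shows the maximum depends only on the isomorphism class. The only cosmetic difference is that you apply the permutation symmetry in $\mathcal{K}_n$, extending the isomorphism to a permutation of $[n]$. The paper instead applies it within $\mathcal{K}_{|V(S)|}$ after restricting, and the two orderings are equivalent.
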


\begin{proof}
Averaging over the coordinates in \([n]\setminus A\) preserves nonnegativity. In the Fourier expansion, this averaging projects onto the characters supported entirely on \(A\), so
\begin{align*}
\widehat{g_A}(\emptyset)=\widehat g(\emptyset)=1,\quad\widehat{g_A}(e)=\widehat g(e)\quad(e\subseteq A),
\end{align*}
with all other Fourier coefficients vanishing. Since \(g\) has degree at most two, it follows that \(g_A\in\mathcal K_{|A|}\). Conversely, if \(h\in\mathcal K_{|A|}\) is regarded as a function of \(\varepsilon_A\) alone, then its extension to \(\{\pm1\}^n\) is nonnegative because every assignment to \(\varepsilon_A\) extends to an assignment on \(\{\pm1\}^n\). Its Fourier coefficients outside \(A\) vanish, so the extension belongs to \(\mathcal K_n\). Taking \(A=V(S)\), we may therefore evaluate \eqref{eq:isoinv} entirely within \(\mathcal K_{|V(S)|}\). Finally, a permutation of the vertices induces a permutation of the coordinates and hence maps \(\mathcal K_{|V(S)|}\) onto itself while preserving the absolute values of the corresponding Fourier coefficients. Thus the quantity in~\eqref{eq:isoinv} depends only on the isomorphism class of the graph with edge set \(S\).
\end{proof}

\noindent 
Two further reductions substantially shrink the enumeration:
\begin{enumerate}[itemsep=-.02cm, topsep=2pt, leftmargin=.4cm]
\item[$\dagger$] First,
\begin{align*}
\sum_{e\in S}|\widehat g(e)|=\max_{s\in\{\pm1\}^{S}}\sum_{e\in S}s_e\,\widehat g(e).
\end{align*}
Thus, for each fixed pair \((S,s)\), the inner optimization in \eqref{eq:isoinv} is a linear program over the polytope \(\mathcal K_n\). The defining inequalities are \(g(\varepsilon)\ge0\) for \(\varepsilon\in\{\pm1\}^n\); since every \(g\in\mathcal K_n\) contains only even-degree Fourier terms, \(g(\varepsilon)=g(-\varepsilon)\), leaving \(2^{n-1}\) distinct facets.

\item[$\dagger\dagger$] Second, \(\mathcal K_n\) is invariant under the gauge transformation 
\begin{align*}
\widehat g(\{i,j\})\longmapsto\eta_i\eta_j\,\widehat g(\{i,j\}),\quad\eta\in\{\pm1\}^n,
\end{align*}
corresponding to the coordinate change \(\varepsilon_i\mapsto\eta_i\varepsilon_i\). On the sign patterns this acts by $s_{\{i,j\}}\longmapsto \eta_i\eta_j s_{\{i,j\}}$. Hence sign patterns on \(S\) are identified up to vertex switching. For a graph \(S\) with \(c(S)\) connected components, the number of switching-inequivalent sign patterns is $2^{\,|S|-|V(S)|+c(S)}$, namely \(2^{\beta_1(S)}\), where \(\beta_1(S)\) is the cyclomatic number of \(S\), rather than the naive \(2^{|S|}\).
\end{enumerate}

\begin{theorem}[Exact small-$n$ values]\label{thm:exact}
For the two smallest nontrivial even values of \(n\), $R_6=3/5$ and $R_8=1/2$. Consequently, $p^{\mathrm{NC}}_{\mathrm{opt}}(6)=4/5=0.80$ and $p^{\mathrm{NC}}_{\mathrm{opt}}(8)=3/4=0.75$. In both cases the optimum is attained by the odd-clique density: by a triangle for \(n=6\) and by a \(5\)-clique for \(n=8\).
\end{theorem}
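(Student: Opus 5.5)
The plan is to prove matching lower and upper bounds on $R_n$ and then apply Proposition~\ref{prop:reduction}.

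\textbf{Lower bounds.} These are already in hand from Lemmas~\ref{lem:clique} and~\ref{lem:hit}.
\begin{itemize}
\item For $n=6$ and the triangle, $q(6,3)=(6-4)/(6-1)=2/5$, so $\Phi(g_{K_3})=1\cdot(1-2/5)=3/5$.
\item For $n=8$ and the $5$-clique, $q(8,5)=0$ because $2k>n$, so $\Phi(g_{K_5})=\tfrac{2}{4}\cdot1=1/2$.
\end{itemize}
Hence $R_6\ge3/5$ and $R_8\ge1/2$. The work is entirely in the upper bounds.

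\textbf{Reduction to a finite problem.} I would first apply the selection relaxation, Lemma~\ref{lem:topk}. Since $n-1=5$ for $n=6$ and $n-1=7$ for $n=8$, it suffices to show, for every $g\in\mathcal K_n$ and every edge set $S$ with $|S|=n-1$,
\begin{align*}
\sum_{e\in S}|\widehat g(e)|&\le 3\quad (n=6,\ |S|=5), &
\sum_{e\in S}|\widehat g(e)|&\le \tfrac72\quad (n=8,\ |S|=7).
\end{align*}
Both targets are tight against the cliques. A triangle plus two arbitrary edges gives at most $3$ (the extra coefficients vanish for $g_{K_3}$). Seven edges of $K_5$ give $7\cdot\tfrac12$.

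By Lemma~\ref{lem:restrict}, each maximum depends only on the isomorphism class of the graph $S$. It can be evaluated in $\mathcal K_{|V(S)|}$, with $|V(S)|\le 6$ (respectively $\le 8$). By the reductions $\dagger$ and $\dagger\dagger$, it splits into one linear program per switching class of sign patterns $s$, giving $2^{\beta_1(S)}$ LPs per graph. Each LP has $2^{|V(S)|-1}$ constraints. So the proof becomes a finite enumeration:
\begin{itemize}
\item all graphs with $5$ edges on at most $6$ vertices;
\item all graphs with $7$ edges on at most $8$ vertices;
\item for each graph, its switching classes of signs.
\end{itemize}
Forests are trivial. Take $\beta_1(S)=0$, so all signs can be switched to $+$. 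Each $|\widehat g(e)|\le1$, but the tree case still needs care. For instance, a star is bounded via $g_A$ on few vertices, using $\mathbb E[g\,\varepsilon_i\textstyle\sum_j\varepsilon_j]$ type estimates.

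\textbf{Certificates.} I would make each LP bound verifiable through an explicit dual certificate rather than raw numerics. Since $\widehat g$ vanishes on $\mathcal F_n$, for any $h$ whose Fourier support lies in $\mathcal F_n$ we have
\[
\sum_{e\in S}s_e\widehat g(e)=\mathbb E\Bigl[g\Bigl(\textstyle\sum_e s_e\varepsilon_e+h\Bigr)\Bigr].
\]
Using $g\ge0$ and $\mathbb E g=1$, this is at most
\[
\max_\varepsilon\Bigl(\textstyle\sum_e s_e\varepsilon_e+h(\varepsilon)\Bigr).
\]
So the task for each $(S,s)$ is to exhibit an $h$ built from degree-$1$, degree-$3$ and degree-$\ge4$ characters on $V(S)$ for which $c-\sum_e s_e\varepsilon_e-h\ge0$ on the cube, with $c=3$ or $7/2$. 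The natural first guess is a symmetric $h$ built from the fourth moment of $\sum_{i\in V(S)}\varepsilon_i$ (the quartic terms of $S_K^4$), mirroring the odd-clique structure. I would then check it on the $2^{|V(S)|-1}$ points.

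\textbf{Main obstacle.} I expect the $n=8$ case to be the hard step. The number of $7$-edge graphs on up to $8$ vertices, times their $2^{\beta_1}$ sign classes, is large, so finding certificates by hand is impractical; I would generate them by computer and verify them exactly in rational arithmetic.

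The real risk is that the selection relaxation of Lemma~\ref{lem:topk} is not tight for some configuration. That would mean some $7$-edge graph allows a coefficient sum above $7/2$ even though no matching-respecting selection can use it. If that happens, the fallback is to drop the relaxation and solve the original problem directly. Since $\Phi$ is convex, its maximum over the polytope $\mathcal K_8$ is attained at a vertex. I would therefore enumerate the vertices of $\mathcal K_8$ up to the $S_8$ and switching symmetries, or equivalently the points with $\binom82$ tight facets among the $2^7$ constraints, and evaluate $\Phi$ on each one.
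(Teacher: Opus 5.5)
Your proposal matches the paper's proof essentially step for step. You use the clique lower bounds from Lemmas~\ref{lem:clique} and~\ref{lem:hit}, then the selection relaxation of Lemma~\ref{lem:topk} to reduce to the targets $3$ and $7/2$, and then Lemma~\ref{lem:restrict} together with the sign and gauge reductions to turn these into a finite family of linear programs. The paper solves that family by computer ($4890$ LPs for $n=6$, $295$ for $n=8$) and finds the relaxation tight, so your vertex-enumeration fallback is never needed.

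Your dual-certificate idea, exhibiting $h$ supported on $\mathcal F_n$ with $c-\sum_e s_e\varepsilon_e-h\ge0$, would add rigor. The paper solved its LPs only in floating point and names exact rational verification as the remaining caveat.

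One small correction: forests are not trivial, since the bound $|\widehat g(e)|\le1$ gives $5$ rather than $3$. They are just single LPs within the same enumeration.
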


\begin{proof}
The lower bounds are Lemmas~\ref{lem:clique} and~\ref{lem:hit}: $\Phi(g_{K_3})=3/5$ at $n=6$ and $\Phi(g_{K_5})=1/2$ at $n=8$. For the upper bounds we verify, by Lemma~\ref{lem:topk}, that
\begin{subequations}
\begin{align}
\max_{g\in\mathcal{K}_6}\ \Big(\text{sum of the $5$ largest }|\widehat g(e)|\Big)=3,\\
\max_{g\in\mathcal{K}_8}\ \Big(\text{sum of the $7$ largest }|\widehat g(e)|\Big)=\tfrac72 ,
\end{align}
\end{subequations}
whence $R_6\le3/5$ and $R_8\le1/2$. Each maximum is a finite maximum of linear programs, taken over pairs $(S,s)$ with $|S|=n-1$. For $n=6$ there are $\binom{15}{5}=3003$ edge sets; after gauge reduction this leaves $4890$ linear programs, each over the $32$ facets of $\mathcal{K}_6$, and the maximum of their values is exactly $3$, attained at a triangle carrying $|\widehat g(e)|=1$. For $n=8$ the $\binom{28}{7}=1\,184\,040$ edge sets fall, by Lemma~\ref{lem:restrict}, into $115$ isomorphism classes of $7$-edge graphs on at most $8$ vertices; after gauge reduction this leaves $295$ linear programs, and the maximum of their values is exactly $7/2$, attained on a graph spanning five vertices, with the optimal $g$ equal to $g_{K_5}$ up to gauge. The scripts performing these enumerations are provided as supplementary material.
\end{proof}

\noindent 
Theorem~\ref{thm:exact} is a computer-assisted but finite and exhaustive verification: no search heuristic is involved, and every pair $(S,s)$ is covered up to the two symmetries just described. The enumeration of isomorphism classes used in the $n=8$ case was validated independently by removing the bound on the number of vertices, whereupon it reproduces the known counts $2,5,11,26,68,177$ of graphs with $2,\dots,7$ edges and no isolated vertices. The linear programs were solved in floating-point arithmetic, so the reported optima are exact to within $10^{-9}$; a verification in exact rational arithmetic would remove this last caveat. 

The optimal values in Theorem~\ref{thm:exact} are not subject to the constant of Proposition~\ref{prop:upper}. They show that already at $n=6$, every preparation-noncontextual theory is bounded by $0.80$ while quantum theory attains $1$: an absolute gap of $0.200$, comparable in magnitude to the gap of $0.104$ between the noncontextual bound $3/4$ and the quantum value $\cos^2(\pi/8)$ in parity-oblivious multiplexing~\cite{Spekkens2009}. The quantitative separation is therefore experimentally accessible at the smallest nontrivial system size, independently of the asymptotic analysis.

Unlike Theorem~\ref{thm:pnc}, Theorem \ref{thm:ncineq} and more particularly Theorem \ref{thm:exact} admit, in principle, a statistical test: under the ideal parity-obliviousness condition, any observed success probability exceeding the noncontextual bound falsifies preparation noncontextuality, with the size of the violation quantified directly by the excess over the bound. Its experimental interpretation, however, requires separate control of the degree to which the implemented preparations satisfy the parity-obliviousness constraint. Importantly, the parity-obliviousness condition is an essential part of the test and must be experimentally certified rather than assumed. In contrast to the no-signaling condition in Bell experiments, it is not enforced by spacelike separation; it is a property of the implemented preparation procedures and therefore has to be established from observed statistics. In practice, the idealized requirement in Definition~\ref{def:po} presents two immediate difficulties: 
\begin{enumerate}[itemsep=-.02cm, topsep=-1pt, leftmargin=.4cm]
\item One cannot test equivalence against a literally unrestricted set of measurements. This issue can be addressed by tomography, or more generally by verifying the required operational equivalences on a tomographically complete set of measurements.

\item The exact equality required by Eq.~\eqref{eq:po} cannot be established experimentally from finite data. One can only bound the residual information about the forbidden parities that remains in the observed preparations. A fully robust test would therefore require a quantitative version of the noncontextuality inequality in which the upper bound on $p_{\mathrm{succ}}$ depends explicitly on such a leakage parameter. We leave this question as an interesting direction for future work.
\end{enumerate}

\medskip
\noindent 
Finally, we would like to point out that Propositions~\ref{prop:lower} and~\ref{prop:upper} leave a gap of roughly a factor of \(20\) in the constant pre-factor. The only loss in the present upper-bound argument appears in Step~1 of Proposition~\ref{prop:upper}. Indeed, for the odd-clique strategies, $T=\binom{k}{2}\times 4/(k-1)^2=2k/(k-1)\le 3$, whereas the general hypercontractive argument gives the much weaker bound \(T\le80\). This motivates the conjecture $T\le3~\text{for all }g\in\mathcal K_n$. If true, it will improve the constant in Theorem~\ref{thm:ncineq} from \(9\) to \(\sqrt3\). It remains open whether the odd-clique strategies are globally optimal for \(\Phi\). They are optimal within the rank-one family $g\propto (w\cdot\varepsilon)^2-\delta^2$, but a characterization of the extreme rays of the cone \(\mathcal K_n\) is not currently known. Establishing such a characterization, or otherwise proving the conjectured bound \(T\le3\), could therefore substantially sharpen the quantitative noncontextuality inequality.

\section{CONCLUSION AND OUTLOOK}\label{sec:discussion}

\noindent
We have introduced a communication task, parity-oblivious hidden matching, in which quantum theory succeeds with certainty while no preparation-noncontextual theory does, for any even $n\ge6$. The contradiction is possibilistic rather than statistical, and in this respect stands to the parity-oblivious multiplexing of Ref.~\cite{Spekkens2009} as the Greenberger--Horne--Zeilinger argument stands to the Clauser--Horne--Shimony--Holt inequality. We have also determined how well a noncontextual theory can do, exactly at $n=6$ and $n=8$ and to within a constant factor in general.

\medskip
\noindent{\bf Relation to prior works}

\smallskip
\noindent
Possibilistic reasoning is not new in this setting, but it is important to distinguish the present use from earlier ones. Simmons \emph{et al.}~\cite{Simmons2017} weaken the noncontextuality \emph{assumption}, requiring only that preparations with the same set of operationally possible events have ontic representations with the same support. They then use a Hardy-type argument to show that quantum theory is preparation contextual even under this weaker hypothesis. Thus, their result strengthens the Spekkens no-go theorem by weakening its premise, and is a structural statement about ontological models, with no information-processing task attached.

Here we retain the standard preparation-noncontextuality assumption (Definition~\ref{def:pnc}) and instead attach contextuality to a concrete task: quantum theory achieves unit success probability in $\mathrm{PoHM}_n$, whereas no preparation-noncontextual theory can, with the bound already reduced to $0.750$ at $n=8$. To our knowledge, no earlier task-based demonstration of preparation contextuality achieves $p_{\mathrm{succ}}=1$ on the quantum side.

A sheaf-theoretic notion of preparation contextuality, with a possibilistic formulation as a strengthening, was recently introduced by Williams \emph{et al.}~\cite{Williams2026}. Their framework is explicitly distinct from the operational approach of Spekkens and formulates preparation contextuality as an obstruction to a global stochastic extension; in their PBR-type example, they prove the stronger nonexistence of a possibilistic global response matrix. At a more structural level, Shahandeh~\cite{Shahandeh2021} characterizes the general probabilistic theories that admit noncontextual ontological models, showing that, under the no-restriction hypothesis, these are precisely the simplicial theories. This addresses a complementary question: when is a theory noncontextual at all? Our concern instead is operational and task-oriented—what advantage can preparation contextuality provide in a concrete information-processing task?

\medskip
\noindent{\bf Relation to communication complexity and to Bell nonlocality} 

\smallskip
\noindent The hidden matching problem was introduced by Bar-Yossef, Jayram and Kerenidis to separate quantum from classical one-way communication complexity~\cite{BarYossef2008}, and its nonlocal version was analysed by Buhrman, Regev, Scarpa and de Wolf~\cite{Buhrman2012}. The parallel with the present work is close enough to be worth spelling out, since the three settings constrain three different resources while producing
the same exponent.
 
In the communication setting, Ref.~\cite{Buhrman2012} shows that a classical one-way protocol in which Alice sends $c$ bits wins hidden matching with probability at most $1/2+(c+1)/\sqrt{n-1}$ by an argument that bounds, via the Kahn--Kalai--Linial inequality~\cite{Kahn1988}, the number of pairs $\{i,j\}$ whose parity a short message can predict well. In the nonlocal setting they obtain a two-player game whose entangled value is $1$, using $\log n$ EPR pairs, while every classical strategy wins with probability differing from $1/2$ by at most $O\bigl(\log n/\sqrt n\bigr)$; the logarithmic factor arises because their game is analyzed by reduction to a
communication protocol of $\log n$ bits.
 
Our Theorem~\ref{thm:ncineq} has exactly the form of the communication bound of Buhrman {\it et al} with $c+1=9$, and this is the more illuminating way to read it. Parity obliviousness places no bound on the size of the ontic state space, nor on how much information $\lambda$ carries about $x$ as a whole; yet for the purposes of this task, a preparation-noncontextual ontic state is worth no more than a constant number of bits of communication. If the conjecture $T\le3$ holds, the constant improves to $\sqrt3$, that is, to less than one bit. The mechanism behind the bound is the same in both settings---a hypercontractive inequality on the Boolean hypercube converts a restriction on Alice's message into a bound on the pairwise correlations it can carry---but the restriction itself is of a different kind: a bound on message \emph{length} in Ref.~\cite{Buhrman2012}, a bound on the \emph{content} of a message of unbounded length here. Note also that our bound carries no logarithmic factor, since the ontic state is used directly as the message and no reduction through a $\log n$-bit protocol is needed.
 
The $n^{-1/2}$ exponent common to all three settings reflects a single combinatorial feature of hidden matching: correlations distributed over $\binom n2$ pairs must be accessed through a matching containing only $n/2$ of them. What differs is the resource being limited. The communication bound of Ref.~\cite{BarYossef2008} restricts the number of classical bits Alice may transmit while imposing no condition on what those bits contain; our result restricts what the message may contain while imposing no condition on its length. Accordingly, Theorem~\ref{thm:ncineq} does not lower-bound the communication required by a noncontextual protocol, but upper-bounds the success bias achievable by any preparation-noncontextual model meeting the obliviousness constraint.
 
\medskip
\noindent{\bf Experimental prospects} 

\smallskip
\noindent Two features make $\mathrm{PoHM}_n$ comparatively economical to implement. First, as observed in Ref.~\cite{Buhrman2012} in the nonlocal context, a task that quantum theory wins \emph{with certainty} is experimentally attractive: the quantum value does not shrink with the system size, so the signal to be resolved is the noncontextual bound falling away from $1$, rather than a small success probability that must be accumulated over many runs. Second, by Theorem~\ref{thm:exact} the separation is already quantitative and nonvacuous at $n=6$: every preparation-noncontextual theory is bounded by $0.80$, against a quantum value of $1$. The required system has dimension six, the $64$ preparations are the real equal-amplitude states $\ket{\psi_x}$, which form a single orbit under a group of diagonal sign flips, and each measurement is a product of $50{:}50$ operations on disjoint pairs of modes; neither entanglement nor joint measurements are needed.
 
The number of measurement settings can be reduced substantially. Bob's input ranges a priori over all $(n-1)!!$ perfect matchings, which is $15$ already at $n=6$. Following the restriction used in Ref.~\cite{Buhrman2012}, consider instead the $n/2$ matchings
\begin{align}\label{eq:cyclicfam}
\scriptstyle\mathtt{M}_k:=\Bigl\{\bigl\{i,\tfrac n2+1+\bigl((i+k-1)\bmod \tfrac n2\bigr)\bigr\}\, :\, 1\le i\le \tfrac n2\Bigr\},
\end{align}
for $k\in\{0,\dots,n/2-1\}$. We have verified for $n=6,8,10,12$ that Theorem~\ref{thm:cl} continues to hold when Bob's input is drawn from this family alone: no single clique of size at most three meets every $\mathtt{M}_k$, which by Lemma~\ref{lem:classsize}(ii) is what zero error would require. The quantum protocol of Theorem~\ref{thm:qm} is unaffected, since it succeeds for every matching. Bob's settings at $n=6$ therefore drop from $15$ to $3$. The bounded-error analysis also survives, since Lemma~\ref{lem:topk} uses only the value of $\Pr[e\in\mathtt{M}]$, which becomes $2/n$ for the pairs occurring in~\eqref{eq:cyclicfam} and zero otherwise; the exponent $n^{-1/2}$ is unchanged and only the constant differs.
 
The exact values behave differently in the two cases, and the computation is short enough to carry out in full: with $n/2$ matchings, $\Phi$ is a maximum over selections of one $(\text{edge},\text{sign})$ per matching, so $R_n$ for the restricted family is the maximum of $n^{n/2}$ linear programs over $\mathcal{K}_n$---$216$ at $n=6$ and $4096$ at $n=8$---with no relaxation needed. Writing $R_n^{\mathrm{cyc}}$ for the resulting optimum, we find
\begin{align}\label{eq:cycvalues}
R_6^{\mathrm{cyc}}=\tfrac23>\tfrac35=R_6,
\quad
R_8^{\mathrm{cyc}}=\tfrac12=R_8,
\end{align}
so the noncontextual bound rises from $0.800$ to $0.833$ at $n=6$, but is unchanged at $0.750$ for $n=8$. The reason for the loss at $n=6$ is visible in the optimizer: every matching in~\eqref{eq:cyclicfam} is bipartite between $\{1,\dots,\tfrac n2\}$ and $\{\tfrac n2+1,\dots,n\}$, so the triangle $\{1,2,3\}$, which is optimal for the full family, meets none of them; the best restricted strategy is instead a triangle straddling the bipartition, which meets two of the three matchings and scores $2/3$. Shrinking the family can only help a noncontextual strategy, since it has fewer matchings to cover.
 
For a quantitative experiment the case $n=8$ is therefore the attractive one: the noncontextual bound $0.750$ against a quantum value of $1$ is retained while Bob's settings fall from $105$ to $4$. At $n=6$ one must choose between the sharper bound $0.800$ with $15$ settings and the weaker bound $0.833$ with $3$. The all-vs-nothing statement, Theorem~\ref{thm:cl}, holds in every case.
 
The principal experimental difficulty is not the success probability but the certification of parity obliviousness. Unlike no-signaling in a Bell test, it is not enforced by spacelike separation: it is a property of the implemented preparations and must be established from observed statistics, for the $2^n-1-\binom n2$ subsets in $\mathcal{F}_n$. The two idealizations already noted---that one cannot test equivalence against all measurements, and that exact equalities cannot be certified from finite data---are the same obstacles confronted in the parity-oblivious multiplexing experiment of Ref.~\cite{Spekkens2009}, and were addressed in a more systematic way in~\cite{Mazurek2016}, which constructs secondary preparation procedures satisfying the required operational equivalences exactly by construction. Adapting that method here would remove the need for any continuity assumption, and is the natural next step towards a decisive experiment.

\medskip
\noindent{\bf Open questions}

\smallskip
\noindent Several questions are left open. The odd-clique strategies satisfy $T=2k/(k-1)\le3$, whereas hypercontractivity yields only $T\le80$. Proving $T\le3$ for every $g\in\mathcal{K}_n$ would sharpen Theorem~\ref{thm:ncineq} from $9$ to $\sqrt3$ and, more importantly, would make the asymptotic inequality non-vacuous from $n\ge14$ rather than $n\ge326$. By Theorem~\ref{thm:exact} the odd cliques are optimal at $n=6$ and $n=8$, and they are optimal within the rank-one family $g\propto(w\cdot\varepsilon)^2-\delta^2$ for every $n$. Whether they are globally optimal is open; a characterization of the extreme rays of $\mathcal{K}_n$, the polar dual of the cut polytope, would settle it. The enumeration used in Theorem~\ref{thm:exact} does not scale, since the number of isomorphism classes of $(n-1)$-edge graphs grows rapidly. A bound on $p_{\rm succ}$ depending explicitly on a measure of residual information about the forbidden parities would convert Theorem~\ref{thm:ncineq} into a statement directly applicable to imperfect data. The family $\mathcal{F}_n$ was chosen so that the permitted information is exactly what the task requires. Remark~\ref{rem:even} shows that weakening it to odd-weight subsets destroys the impossibility entirely; imposing obliviousness only for $|S|=2k$ with $k\ge3$ raises the threshold from $n\ge6$ to $n\ge4k-2$. Mapping out which obliviousness constraints yield all-vs-nothing contradictions, and which yield only quantitative ones, would clarify what makes a task contextuality-powered.

\appendix

\section{An explicit protocol for $n=8$}\label{app:n8}

\noindent
Theorem~\ref{thm:exact} gives \(R_8=1/2\), hence \(p_{\mathrm{opt}}^{\mathrm{NC}}(8)=3/4\), attained by the odd-clique density \(g_K\) with \(|K|=5\). Although the reduction of Proposition~\ref{prop:reduction} yields a classical encoding with \(2^8\) messages, the same optimum admits a much simpler realization using only four bits, with decoding given by the identity. The aim of this appendix is to provide an explicit protocol, which is closely analogous to the majority encoding for the classical \(3\mapsto1\) random-access code.

\begin{figure*}[t!]
\centering
\includegraphics[width=1\linewidth]{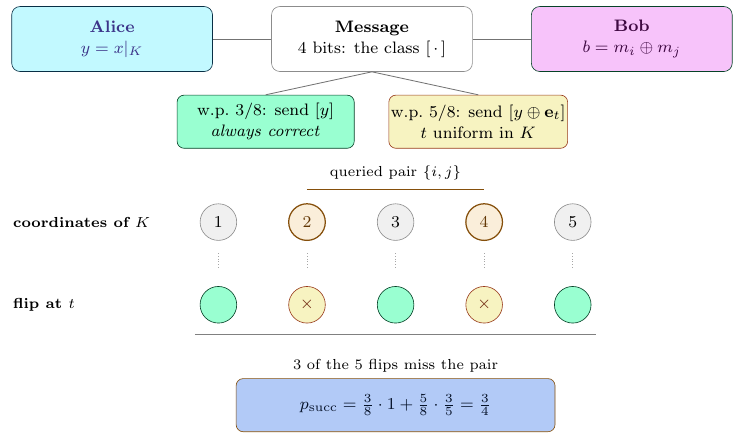}
\caption{(Color online) The optimal preparation-noncontextual protocol for $\mathrm{PoHM}_8$, drawn for the queried pair $\{i,j\}=\{2,4\}$. Alice transmits the five bits $y=x|_K$ of her input on the clique $K=\{1,\dots,5\}$, modulo complementation, so that the message is the four bits $d_t=y_1\oplus y_t$; with probability $5/8$ a single uniformly chosen coordinate $t$ is flipped first. Bob selects an edge $\{i,j\}\in\mathtt{M}$ with $i,j\in K$---one always exists, since only three vertices lie outside $K$---and outputs $b=m_i\oplus m_j$. A flip at $t$ leaves this parity intact precisely when $t\notin\{i,j\}$ (\checkmark), which happens for three of the five coordinates, and spoils it otherwise ($\times$). Note that the coordinates $x_6,x_7,x_8$ play no role: the message depends on $x|_K$ alone.}\vspace{0cm}
\label{fig4}
\end{figure*}

Throughout, this appendix \(K=\{1,2,3,4,5\}\), and we write \(y=(x_1,\ldots,x_5)\) for Alice's input restricted to \(K\). For \(z\in\{0,1\}^5\), let \(\bar z:=z\oplus1^5\) and define the complement class
\begin{align}
[z]:=\{z,\bar z\}.
\end{align}
This class is uniquely specified by the four bits
\begin{align}\label{eq:appdbits}
d_t:=z_1\oplus z_t,\qquad t=2,\ldots,5,
\end{align}
which are invariant under \(z\mapsto\bar z\). Hence there are exactly \(2^4=16\) distinct messages. We identify the coordinates of a five-bit string with the vertices of \(K\), so that \(z_i\) denotes the bit associated with vertex \(i\in K\).

\medskip
\noindent{\bf The protocol:} 

\medskip
\noindent
{\bf Alice.} With probability $3/8$ sends $[y]$; with probability $5/8$, chooses $t\in\{1,\dots,5\}$ uniformly at random and sends $[\,y\oplus \mathbf{e}_t\,]$, where $\mathbf{e}_t$ is the $t$-th standard basis vector.

\medskip
\noindent
\textbf{Bob.} Given the matching $\mathtt{M}$, chooses an edge $\{i,j\}\in\mathtt{M}$ with $i,j\in K$ and output that edge together with
\begin{align}\label{eq:appdecode}
b=m_i\oplus m_j ,
\end{align}
where $m=(m_1,\dots,m_5)\in\{0,1\}^5$ is either representative of the received class, its coordinates indexed by the vertices of $K$. Equivalently, in terms of the four bits \eqref{eq:appdbits} actually transmitted,
\begin{align}\label{eq:appdecode2}
b=\begin{cases}
d_j, & i=1,\\[2pt]
d_i\oplus d_j, & i,j\ge2 .
\end{cases}
\end{align}

\noindent
Note that, Bob can always find such an edge: only $3$ vertices lie outside $K$, so at most three of the five vertices of $K$ are matched outward and at least two are matched to each other (see Fig.~\ref{fig4}). The decoding \eqref{eq:appdecode} does not depend on the choice of representative, since $\bar m_i\oplus\bar m_j=(1\oplus m_i)\oplus(1\oplus m_j)=m_i\oplus m_j$. The two forms \eqref{eq:appdecode} and \eqref{eq:appdecode2} agree, because $d_i\oplus d_j=(z_1\oplus z_i)\oplus(z_1\oplus z_j)=z_i\oplus z_j$. Note that Bob never needs, and cannot obtain, any individual bit of $y$: the transmitted message determines pairwise parities only, which is what makes the encoding oblivious to the parities of odd order. And Alice's distribution is normalized, the two cases contributing $3/8$ and $5\times 1/8$.

\begin{table}[t!]
\centering
\begin{tabular}{c|c|c|c|c|}
\hline
~~~$d$~~~ & ~~~$S_K$~~~ & ~~~$g_{K}$~~~ & ~~~\#$\{\eta|_K\}$~~~ & ~~~weight~~~\\
\hline\hline
$0$ & $+5$ & $6$ & $1$  & $6$\\
$1$ & $+3$ & $2$ & $5$  & $10$\\
$2$ & $+1$ & $0$ & $10$ & $0$\\
$3$ & $-1$ & $0$ & $10$ & $0$\\
$4$ & $-3$ & $2$ & $5$  & $10$\\
$5$ & $-5$ & $6$ & $1$  & $6$\\
\hline
    &      &     & $32$ & $32$\\
\hline
\end{tabular}
\caption{The weight $g_{K}=(S_K^2-1)/4$ assigned to a message $\eta$ at Hamming distance $d$ from $y=x|_K$, with $S_K=5-2d$. Only $d\in\{0,1,4,5\}$ carries weight. The two columns on the right give the number of such $\eta|_K$ and their total weight; the weights sum to $2^5$, as they must.}
\label{tab:app1}
\end{table}

\begin{table}[b]
\centering
\begin{tabular}{c|c|c}
\hline
 & $3\mapsto1$ RAC & $\mathrm{PoHM}_8$ with $K_5$\\
\hline
Alice encodes & $\mathrm{maj}(x_1,x_2,x_3)$& $[y]$ w.p. $3/8$, $[y\oplus\mathbf{e}_t]$ w.p. $5/8$\\
message size  & $1$ bit & $4$ bits\\
Bob decodes   & identity & identity: $b=m_i\oplus m_j$\\
value         & $3/4$ & $3/4$\\
\hline
\end{tabular}
 \caption{The optimal noncontextual strategy for $\mathrm{PoHM}_8$ beside the majority encoding for the classical $3\mapsto1$ random access code~\cite{Ambainis2002}. Both are majority-type encodings with identity decoding and both attain $3/4$; what differs is that Bob reads off a pairwise parity rather than a single bit, which is why the noise takes the form of a bit flip in a five-bit block rather than a majority vote over three.}
\label{tab:app2}
\end{table}

\begin{fact}\label{fact:appvalue}
The protocol is parity-oblivious and succeeds with probability exactly \(3/4\). By Theorem~\ref{thm:exact}, this is the optimal success probability among preparation-noncontextual strategies.
\end{fact}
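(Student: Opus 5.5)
The plan is to identify the message distribution of the protocol with a shifted copy of the odd-clique density $g_K$ of Lemma~\ref{lem:clique} with $|K|=5$. Parity obliviousness then follows from the Fourier structure of $g_K$. The success probability is a direct count.

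\emph{Step 1: the encoding is a shifted $g_K$.} Fix $x$ and let $y=x|_K$. For a message class $[z]$, $z\in\{0,1\}^5$, Alice's rule gives
\begin{align*}
\Pr([z]\mid x)=\tfrac38\,\mathbf{1}\bigl[[z]=[y]\bigr]+\tfrac18\,\#\bigl\{t: [y\oplus\mathbf e_t]=[z]\bigr\}.
\end{align*}
This depends only on $d:=\min\{d_H(y,z),d_H(y,\bar z)\}\in\{0,1,2\}$. It equals $3/8$ for $d=0$, $1/8$ for $d=1$ (exactly one $t$ works), and $0$ for $d=2$.

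Now compare with Table~\ref{tab:app1}. Because $g_K$ is even under global complementation, we have $\tfrac{1}{32}\bigl(g_K(y\oplus z)+g_K(y\oplus\bar z)\bigr)=\tfrac1{16}g_K(y\oplus z)$. Here $g_K(y\oplus z)$ is read as $g_K$ evaluated at the sign vector $\varepsilon_i=(-1)^{y_i\oplus z_i}$. This quantity takes the values $6/16$, $2/16$ and $0$ for $d=0,1,2$ respectively, matching the distribution above. Hence
\begin{align*}
\Pr([z]\mid x)=2^{-4}\,g_K\bigl((x\oplus\eta)|_K\bigr)\quad\text{for any }\eta\text{ with }\eta|_K=z,
\end{align*}
which is the shifted density used in the attainability part of Proposition~\ref{prop:reduction}. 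One should also check normalisation: the $16$ classes have total weight $3/8+5/8=1$.

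\emph{Step 2: obliviousness.} By \eqref{eq:shift} and \eqref{eq:cliqueexp}, each function $x\mapsto\Pr([z]\mid x)$ has Fourier support only on $\emptyset$ and on pairs inside $K$. Therefore $\widehat{\Pr([z]\mid\cdot)}(S)=0$ for every $S\in\mathcal F_8$. Since $X$ is uniform, this is equivalent to $I(X_S:C)=0$ for all $S\in\mathcal F_8$, as noted after Definition~\ref{def:po}.

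For a classical message, every measurement Bob can perform is a stochastic post-processing of $C$. Thus \eqref{eq:poclassical} yields the operational condition \eqref{eq:po}, i.e.\ $P_{S,0}\simeq P_{S,1}$.

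\emph{Step 3: success probability.} Bob can always find an edge $\{i,j\}\in\mathtt M$ inside $K$, since only three vertices lie outside $K$. His output $m_i\oplus m_j$ is representative-independent, so we may take $m=y$ or $m=y\oplus\mathbf e_t$. The output is correct exactly when no flip occurred, or when a flip occurred at some $t\notin\{i,j\}$. This gives
\begin{align*}
\tfrac38+\tfrac58\cdot\tfrac35=\tfrac34,
\end{align*}
uniformly in $x$ and $\mathtt M$, hence also on average.

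Optimality is then immediate from Theorem~\ref{thm:exact} combined with Proposition~\ref{prop:reduction}. The only delicate point, and the main obstacle, is Step~1: correctly accounting for the complement identification, so that the $d=4$ strings are merged with the $d=1$ strings and the $d=3$ strings with the $d=2$ strings, and thereby matching the four-bit message law to $2^{-4}g_K$ rather than $2^{-5}g_K$.
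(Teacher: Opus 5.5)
Your proposal is correct and takes essentially the paper's route: the success count $\tfrac38+\tfrac58\cdot\tfrac35=\tfrac34$ is the same, and your Step~1 makes the paper's identification of the four-bit message law with the restricted, complement-merged clique density $g_K$ (rows $d$ and $5-d$ of Table~\ref{tab:app1}), only starting from the protocol rather than from the encoding. The one small difference is that you read off obliviousness directly from the Fourier support of $x\mapsto 2^{-4}g_K\bigl((x\oplus\eta)|_K\bigr)$, whereas the paper gets it by noting that the protocol is the deterministic coarse-graining $q_2\circ q_1$ of the already-oblivious encoding of Proposition~\ref{prop:reduction}.
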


\begin{proof}
Fix \(\mathtt M\), and let \(\{i,j\}\subseteq K\) be the edge selected by Bob. If Alice sends \([y]\), then $m_i\oplus m_j=y_i\oplus y_j=x_i\oplus x_j$, and hence Bob is correct. If she sends \([y\oplus \mathbf{e}_t]\), then, for either choice of representative \(m\), the parity \(m_i\oplus m_j\) differs from \(y_i\oplus y_j\) exactly when \(t\in\{i,j\}\). Thus Bob is correct for the \(3\) values of \(t\notin\{i,j\}\), out of the \(5\) equally likely choices. Therefore
\begin{align}\label{eq:appsuccess}
p_{\mathrm{succ}}=\frac38\cdot 1+\frac58\cdot\frac35=\frac34.
\end{align}

\medskip
\noindent 
Obliviousness is inherited from the encoding of
Proposition~\ref{prop:reduction}, since the present protocol is obtained from it by successive coarse-grainings. The first coarse-graining is the restriction map $q_1:\eta\longmapsto z:=\eta|_K\in\{0,1\}^5$. The second is the quotient map $q_2: z\longmapsto [z]:=\{z,\bar z\}$.

The clique density $g_K=(S_K^2-1)/4$ depends on the argument only through the Hamming distance $d$ between $\eta|_K$ and $y$, via $S_K=5-2d$. The three coordinates outside \(K\) therefore play no role. As shown in Table~\ref{tab:app1}, the total weight is \(32=2^5\), while \(g_K\) vanishes for the \(20\) strings at Hamming distances \(2\) and \(3\). Thus only the cases \(d=0,1,4,5\) occur. The first coarse-graining does not affect decoding because Bob chooses an edge \(\{i,j\}\subseteq K\) and uses only the parity \(\eta_i\oplus\eta_j=z_i\oplus z_j\), which is unchanged by discarding the coordinates outside \(K\). The second coarse-graining is likewise lossless for the task, since $\bar z_i\oplus\bar z_j=(1\oplus z_i)\oplus(1\oplus z_j)=z_i\oplus z_j$. Thus \(z\) and \(\bar z\) are indistinguishable to Bob and may be merged into a single message \([z]\). This identifies the rows \(d\) and \(5-d\) of Table~\ref{tab:app1}. Consequently, the induced distribution on the \(16\) complement classes is
\begin{align}
\Pr([y])=\frac{6+6}{32}=\frac38,\quad
\Pr([y\oplus \mathbf{e}_t])=\frac{2+2}{32}=\frac18,
\end{align}
for $t\in\{1,\ldots,5\}$, which is exactly the explicit four-bit protocol given above. Finally, parity-obliviousness is preserved under both coarse-grainings: if the original sampled string \(\eta\) is independent of every forbidden parity, then any deterministic function of \(\eta\), including
\(q_2\circ q_1(\eta)\), is independent of the same parity.
\end{proof}

\smallskip
\noindent {\bf Comparison with the random access code:} 
The protocol is a majority encoding, in the sense of the footnote in Section~\ref{ssec:beststrategy}: by \eqref{eq:clique} the weight $g_{K}$ is an increasing function of the imbalance $|S_K|$ of the five bits of $K$, so Alice's message is biased towards the more imbalanced configurations, and Bob decodes by the identity. Table~\ref{tab:app2} sets the two side by side.

\medskip
\noindent It is worth noting that the message here consists of $4=n/2$ bits, which by Corollary~\ref{coro:n/2optimal} is exactly the zero-error classical one-way communication complexity of the \emph{unconstrained} task $\mathrm{HM}_8$. The same communication budget therefore buys certainty without the obliviousness constraint and only $3/4$ with it; the
difference is precisely what parity obliviousness costs.

\section{Proof of the upper bound}\label{app:upper}

\noindent
This appendix gives the proof of
Proposition~\ref{prop:upper}: for every even $n$,
\begin{align}\label{eq:appupper}
R_n\;\le\;2\sqrt{\frac{80}{n-1}}\;<\;\frac{18}{\sqrt{n-1}} .
\end{align}
The argument is simple. But, here we detail the tools for the sake of completeness. 

\medskip
\noindent{\bf Preliminaries: norms and Parseval}\\

\noindent
For $g:\{\pm1\}^n\to\mathbb{R}$ and $p\ge1$ we write
\begin{align}\label{eq:appnorm}
\|g\|_p:=\Bigl(\mathbb{E}_\varepsilon\bigl|g(\varepsilon)\bigr|^p\Bigr)^{1/p},
\end{align}
the expectation being over the uniform distribution on $\{\pm1\}^n$. The normalized constant function $1$ has $\|1\|_p=1$ for every $p$, and $\|g\|_p$ is non-decreasing in $p$. Recall the \emph{Parseval's identity}: since the characters $\{\varepsilon_S\}_{S\subseteq[n]}$ form an orthonormal basis of the space of real functions on $\{\pm1\}^n$ with respect to the inner product $\langle f,h\rangle=\mathbb{E}[fh]$,
\begin{align}\label{eq:appparseval}
\|g\|_2^2=\mathbb{E}\bigl[g^2\bigr]=\sum_{S\subseteq[n]}\widehat g(S)^2 .
\end{align}
Furthermore, for $g\ge0$ we have $\|g\|_1=\mathbb{E}[|g|]=\mathbb{E}[g]=\widehat g(\emptyset)$, which for $g\in\mathcal{K}_n$ equals $1$.

\medskip
\noindent{\bf The sum rule \texorpdfstring{$T\le80$}{T <= 80}}

\noindent The content here is that an admissible density cannot carry an unbounded amount of pairwise correlation, however it is distributed. The tool is hypercontractivity inequality of Bonami~\cite{Bonami1970} and Beckner~\cite{Beckner1975} (see also~\cite{deWolf2008}). For a function $g$ of \emph{degree at most $d$}, meaning that $\widehat g(S)=0$ whenever $|S|>d$,
\begin{align}\label{eq:appbonami}
\|g\|_4\;\le\;3^{d/2}\,\|g\|_2 .
\end{align}
Every $g\in\mathcal{K}_n$ has degree at most $2$ by Definition~\ref{def:admissible}, so \eqref{eq:appbonami} gives
\begin{align}\label{eq:appbonami2}
\|g\|_4\;\le\;3\,\|g\|_2 .
\end{align}
Physically, \eqref{eq:appbonami2} says that a low-degree function cannot be too spiky: its fourth moment is controlled by its second. A function concentrated on a very small set would have $\|g\|_4$ far larger than $\|g\|_2$, and the inequality forbids this once the degree is bounded.

We now apply \eqref{eq:appbonami2} against the normalization $\|g\|_1=1$. The two are connected by an interpolation inequality: for $1\le p\le r\le q$ and $\theta\in[0,1]$ with $1/r=\theta/p+(1-\theta)/q$,
\begin{align}\label{eq:appinterp}
\|g\|_r\;\le\;\|g\|_p^{\theta}\,\|g\|_q^{1-\theta} .
\end{align}
This is the statement that $p\mapsto\log\|g\|_{1/p}$ is convex, and it follows from H\"older's inequality: writing $|g|^{r}=|g|^{\theta r}\,|g|^{(1-\theta)r}$ and applying H\"older identity with the exponents $u=p/(\theta r)$ and $v=q/((1-\theta)r)$, which are conjugate because
\begin{align}
\frac1u+\frac1v=\frac{\theta r}{p}+\frac{(1-\theta)r}{q}=1 ,
\end{align}
one obtains $\|g\|_r^{r}\le\|g\|_p^{\theta r}\|g\|_q^{(1-\theta)r}$, and taking
$r$-th roots gives \eqref{eq:appinterp}.
Taking $r=2$, $p=1$ and $q=4$ implies $\theta=1/3$; so that
\begin{align}\label{eq:appinterp2}
\|g\|_2\;\le\;\|g\|_1^{1/3}\,\|g\|_4^{2/3} .
\end{align}
Since $g\ge0$ and $\widehat g(\emptyset)=1$ we have $\|g\|_1=1$, and the
first factor drops out. Substituting \eqref{eq:appbonami2} into
\eqref{eq:appinterp2},
\begin{align}\label{eq:appchain}
\|g\|_2\;\le\;\|g\|_4^{2/3}\;\le\;\bigl(3\|g\|_2\bigr)^{2/3}
=3^{2/3}\,\|g\|_2^{2/3} .
\end{align}
If $\|g\|_2=0$ then $g\equiv0$, which is excluded by
$\widehat g(\emptyset)=1$; so we may divide \eqref{eq:appchain} by
$\|g\|_2^{2/3}>0$ to obtain $\|g\|_2^{1/3}\le3^{2/3}$, that is,
\begin{align}\label{eq:appl2}
\|g\|_2\;\le\;3^{2}=9 .
\end{align}

Finally we convert \eqref{eq:appl2} into a statement about the pairwise coefficients. By Parseval \eqref{eq:appparseval},
\begin{align}\label{eq:appT}
T&=\sum_{|S|=2}\widehat g(S)^2\nonumber\\
&=\|g\|_2^2-\widehat g(\emptyset)^2-\sum_{i}\widehat g(\{i\})^2
-\sum_{|S|\ge3}\widehat g(S)^2 .
\end{align}
Two summation terms vanishes as $g$ has degree at most two and $\{i\}\in\mathcal{F}_n$; and $\widehat g(\emptyset)^2=1$. Hence
\begin{align}\label{eq:appTfinal}
T\;\le\;81-1\;=\;80 ,
\end{align}
Note that the bound is independent of $n$. This will be exploited in the rest of the argument.

\medskip
\noindent{\bf From the sum rule to a bound on \texorpdfstring{$\Phi$}{Phi}}

\noindent Write $b_e:=\widehat g(e)$, so that $\sum_e b_e^2=T\le80$. We must bound
\begin{align}
\Phi(g)=\mathbb{E}_{\mathtt{M}}\Bigl[\max_{e\in\mathtt{M}}|b_e|\Bigr].
\end{align}

\noindent (a) \emph{Few edges can be strongly correlated.} For a threshold $\theta\in(0,1)$ let
\begin{align}\label{eq:appGtheta}
G_\theta:=\bigl\{e\in\mathcal{E}_n:|b_e|>\theta\bigr\}
\end{align}
be the set of edges carrying correlation above $\theta$. Every such edge contributes more than $\theta^2$ to $T$, so
\begin{align}\label{eq:appmarkov}
\theta^2\,|G_\theta|\;<\;\sum_{e\in G_\theta}b_e^2\;\le\;T
\quad\Longrightarrow\quad
|G_\theta|\;\le\;T/\theta^{2} .
\end{align}
This is Markov's inequality applied to the numbers $b_e^2$; the sum rule
is what makes the right-hand side independent of $n$.

\smallskip
\noindent (b)\emph{A random matching rarely meets a small edge set.} A fixed edge $e=\{i,j\}$ belongs to a uniformly random perfect matching $\mathtt{M}\in\mathscr{M}_n$ with probability
\begin{align}\label{eq:appedgeprob}
\Pr[e\in\mathtt{M}]=\frac{(n-3)!!}{(n-1)!!}=\frac{1}{n-1}.
\end{align}
Therefore, by linearity of expectation followed by Markov's inequality in the form $\Pr[X\ge1]\le\mathbb{E}[X]$ for the nonnegative integer-valued $X=|\mathtt{M}\cap G_\theta|$,
\begin{align}\label{eq:appunion}
\Pr_{\mathtt{M}}\bigl[\mathtt{M}\cap G_\theta\neq\emptyset\bigr]\;&\le\;\mathbb{E}_{\mathtt{M}}\bigl|\mathtt{M}\cap G_\theta\bigr|=\sum_{e\in G_\theta}\Pr[e\in\mathtt{M}]\nonumber\\
&=\frac{|G_\theta|}{n-1}
\;\le\;\frac{T}{(n-1)\,\theta^{2}} .
\end{align}
The factor $1/(n-1)$ is where the sparsity of a matching enters: it
samples only $n/2$ of the $\binom n2$ edges.

\smallskip
\noindent (c)\emph{Layer-cake integration.} For a nonnegative random variable $X$ bounded by $1$ one has the elementary identity $\mathbb{E}[X]=\int_0^1\Pr[X>\theta]\,d\theta$, obtained by writing $X=\int_0^1\mathbf 1[X>\theta]\,d\theta$ and exchanging expectation with the integral. Applying this to $X=\max_{e\in\mathtt{M}}|b_e|$, which lies in $[0,1]$ by \eqref{eq:coeffbound}, and noting that $\{\max_{e\in\mathtt{M}}|b_e|>\theta\}$ is exactly the event $\{\mathtt{M}\cap G_\theta\neq\emptyset\}$,
\begin{align}\label{eq:applayer}
\Phi(g)&=\int_0^1
\Pr_{\mathtt{M}}\bigl[\mathtt{M}\cap G_\theta\neq\emptyset\bigr]\,d\theta\nonumber\\
&\le\;\int_0^1\min\Bigl(1,\ \frac{T}{(n-1)\theta^{2}}\Bigr)\,d\theta ,
\end{align}
where the trivial bound $\Pr\le1$ has been used alongside
\eqref{eq:appunion}. Now, putting $\theta_0:=\sqrt{T/(n-1)}$, the threshold at which the two competing bounds in \eqref{eq:applayer}
cross. If $\theta_0\ge1$ then $2\theta_0\ge2>1\ge\Phi(g)$ and
\eqref{eq:appupper} holds trivially, so assume $\theta_0<1$. Splitting
the integral at $\theta_0$ and using the bound $1$ below and
$T/((n-1)\theta^2)$ above,
\begin{align}\label{eq:appsplit}
\Phi(g)
&\le\int_0^{\theta_0}1\,d\theta
+\int_{\theta_0}^{1}\frac{T}{(n-1)\theta^{2}}\,d\theta\nonumber\\
&\le\;\theta_0+\int_{\theta_0}^{\infty}\frac{T}{(n-1)\theta^{2}}\,d\theta
\nonumber\\[2pt]
&=\theta_0+\frac{T}{n-1}\cdot\frac{1}{\theta_0}
=\theta_0+\frac{\theta_0^{2}}{\theta_0}
=2\theta_0 ,
\end{align}
where the second line used $\int_{\theta_0}^\infty\theta^{-2}d\theta
=1/\theta_0$. Combining \eqref{eq:appsplit} with the sum rule
\eqref{eq:appTfinal} we obtain,
\begin{align}
\Phi(g)\;\le\;2\sqrt{\frac{T}{n-1}}\;\le\;2\sqrt{\frac{80}{n-1}} .
\end{align}
As $g\in\mathcal{K}_n$ was arbitrary, the same bound holds for
$R_n=\max_{g\in\mathcal{K}_n}\Phi(g)$. This completes the proof.\qed

\begin{remark}
Note that the only lossy step is the passage
from $\|g\|_1=1$ to $T\le80$ in the sum rule. For the odd-clique densities one computes directly
\begin{align}
T=\binom k2\Bigl(\frac{2}{k-1}\Bigr)^{2}=\frac{2k}{k-1}\;\le\;3\,.
\end{align}
So the true constant is smaller than $80$ by more than an order of
magnitude. Establishing $T\le3$ for all $g\in\mathcal{K}_n$ would replace $9$ by $\sqrt3$ in Theorem~\ref{thm:ncineq} and, more importantly, lower the threshold at which the inequality becomes non-vacuous from $n\ge326$ to $n\ge14$. Also, the scaling $n^{-1/2}$ is not an artifact of the method. By the lower bound of Proposition~\ref{prop:lower}, the odd cliques already achieve $\Phi=\Theta(n^{-1/2})$, so no argument can do better than the exponent obtained here.
\end{remark}

\noindent 
{\bf ACKNOWLEDGEMENT:} AC and MB thank Ronald de Wolf who, during the recent collaboration on~\cite{Chakraborty2026}, brought to their attention the connection between Hidden Matching and nonlocality~\cite{Buhrman2012}, which helped motivate the present work. JK and BC acknowledge support from University Grants Commission, India (Reference no. 231620167137 and 241620129062, respectively). MB acknowledges the financial support through the National Quantum Mission (NQM) of the Department of Science and Technology, Government of India.\\

\noindent 
{\bf Declaration on the use of AI tools:} The problem was conceived and all principal results were derived by the authors. In the course of this work the authors used Claude Opus 5 (Anthropic) as an assistive tool for editing prose, preparing figures, checking algebraic calculations, and implementing the finite enumeration underlying Theorem~\ref{thm:exact}. Certain intermediate steps were refined in the process. The authors take full responsibility for the content of the manuscript.



%

\end{document}